\documentclass{amsart}

\newcommand{\NL}{}
\usepackage[top=.75 in,bottom=.75 in, left=.75 in, right= .75 in]{geometry}  \renewcommand{\NL}{ }
\newcommand{\lbl}[1]{\label{#1}}


\usepackage{color,soul}

 \usepackage{tikz}
 \usetikzlibrary{patterns}

 \usepackage[numbers]{natbib}
\usepackage{graphicx,fancybox,pgf,amsmath,float,amssymb}
 \usepackage{framed}
 \usepackage{color,soul}

\newcommand{\eps}{\varepsilon}

\newcommand{\Binq}[2]{\begin{bmatrix}#1\\#2
\end{bmatrix}_q
}

\newcommand{\calA}{{\mathcal M}}

\numberwithin{equation}{section}

\usepackage{dsfont}

\newcommand{\CC}{\mathds{C}}
\newcommand{\NN}{\mathds{N}}

\newcommand{\mA}{\mathbf{A}}

\newcommand{\mE}{\mathbf{E}}
\newcommand{\mD}{\mathbf{D}}
\newcommand{\mI}{\mathbf{I}}

\newcommand{\mX}{\mathbf{X}}
\newcommand{\mY}{\mathbf{Y}}
\newcommand{\mZ}{\mathbf{Z}}
\newcommand{\mS}{\mathbf{S}}


\newcommand{\xx}{\mbox{\fontfamily{phv}\selectfont x}}

\newcommand{\dd}{\mbox{\fontfamily{phv}\selectfont d}}
\newcommand{\ee}{\mbox{\fontfamily{phv}\selectfont e}}
\newcommand{\Dq}{\mbox{\fontfamily{phv}\selectfont D}_q}
\newcommand{\Z}{\mbox{\fontfamily{phv}\selectfont Z}}

\newcommand{\SSS}{\mbox{\fontfamily{phv}\selectfont S}}

\newtheorem{theorem}{Theorem}
 \newtheorem{lemma}{Lemma}
\newtheorem{proposition}{Proposition}
\newtheorem{corollary}{Corollary}
\theoremstyle{definition}

\theoremstyle{remark}
\newtheorem{remark}{Remark}
\theoremstyle{remark}


 \usepackage{framed}
\colorlet{shadecolor}{gray!30}
\newenvironment{ana}
  {\begin{leftbar}
  \begin{shaded} }
{  \end{shaded}\end{leftbar}}
\newcounter{oldeq}
\newcounter{usesofarxiv}


 \newcommand{\arxiv}[1]{
 \setcounter{oldeq}{\value{equation}}
  \addtocounter{usesofarxiv}{1}
   \setcounter{equation}{0}
\def\theoldeq{\theequation}
\def\theequation{\arabic{section}.\arabic{oldeq}.\arabic{equation}}
 \begin{ana}
  \footnotesize #1\end{ana}
   \setcounter{equation}{\value{oldeq}}
\numberwithin{equation}{section}
     }




\subjclass[2010]{82C22;60K35;33D45;33D15}
\title[ASEP in the singular case]{On matrix product ansatz for  Asymmetric Simple Exclusion Process with open boundary in the singular case}


\author{W\l odzimierz Bryc}
\address
{
W\l odzimierz Bryc\\
Department of Mathematical Sciences\\
University of Cincinnati\\
2815 Commons Way\\
Cincinnati, OH, 45221-0025, USA.
}
\email{wlodek.bryc@gmail.com}
\author{Marcin \'Swieca}
\address{Marcin \'Swieca\\
Department of Mathematical Sciences\\
University of Cincinnati\\
2815 Commons Way\\
Cincinnati, OH, 45221-0025, USA.
and  Faculty of Mathematics and Information Science\\
Warsaw University of Technology\\ pl. Politechniki 1 00-661\\
Warszawa, Poland
}
\email{marcin.swieca@mini.pw.edu.pl}


\begin{document}

\maketitle
\keywords{Asymmetric simple exclusion process with open boundary;  Askey-Wilson polynomials; matrix product ansatz}

\begin{abstract}
We study a substitute for the matrix product ansatz for Asymmetric Simple Exclusion Process with open boundary in the ``singular case'' $\alpha\beta=q^N\gamma\delta$,  when the standard form of the matrix  product ansatz of
Derrida, Evans, Hakim and Pasquier [J. Phys. A 26(1993)] does not apply.
 In our approach, the matrix product ansatz is replaced with a pair of  linear functionals  on an abstract algebra. One of
      the functionals,   $\varphi_1$, is defined on the entire algebra, and  determines stationary probabilities for large systems on $L\geq N+1$ sites. The other functional,   $\varphi_0$, is defined only on  a
      finite-dimensional linear subspace  of the algebra, and  determines stationary probabilities for small  systems on $L< N+1$ sites.
Functional $\varphi_0$  vanishes on non-constant Askey-Wilson polynomials and in non-singular case  becomes an orthogonality functional for the Askey-Wilson polynomials.
\end{abstract}
\arxiv{This is an expanded version of the paper. It includes additional material that is typeset differently from the main body
of the paper.}

\section{Introduction and main results}
The  Asymmetric Simple Exclusion Process  (ASEP) with open boundary on sites $\{1,\dots,L\}$  is a  continuous time Markov chain with state space $\{0,1\}^L$.
Informally, see
  Fig. \ref{Fig1},   particles may arrive at the left boundary  at rate $\alpha>0$ and leave at rate $\gamma\geq 0$.
A particle may move to the right at rate $1$ or to the left at rate $q<1$.  It may leave at the right boundary at rate $\beta>0$  or a new particle may arrive there at rate $\delta\geq0$. At most one particle is allowed at each site.
More formal description of the evolution is given as Kolmogorov's equations \eqref{DiffEq} below.
\begin{figure}[H]

  \begin{tikzpicture}[scale=.8]
\draw [fill=black, ultra thick] (.5,1) circle [radius=0.2];
  \draw [ultra thick] (1.5,1) circle [radius=0.2];
\draw [fill=black, ultra thick] (2.5,1) circle [radius=0.2];
  \draw [ultra thick] (5,1) circle [radius=0.2];
   \draw [fill=black, ultra thick] (6,1) circle [radius=0.2];

    \draw [ultra thick] (7,1) circle [radius=0.2];
      \draw [fill=black, ultra thick] (9.5,1) circle [radius=0.2];
   \draw [ultra thick] (10.5,1) circle [radius=0.2];
     \draw[->] (-1,2.3) to [out=-20,in=135] (.5,1.5);
   \node [above right] at (-.2,2) {$\alpha$};
     \draw[->] (10.5,1.5) to [out=45,in=200] (12,2.3);
     \node [above left] at (11.2,2) {$\beta$};
            \node  at (8.25,1) {$\cdots$};  \node  at (3.75,1) {$\cdots$};
      \node [above] at (6.5,1.8) {$1$};
      \draw[->,thick] (6.1,1.5) to [out=45,in=135] (7,1.5);
        \node [above] at (5.5,1.8) {$q$};
            \draw[<-] (5,1.5) to [out=45,in=135] (5.9,1.5);
                 \node [above] at (10,1.8) {$1$};
                \draw[->,thick] (9.6,1.5) to [out=45,in=135] (10.4,1.5);
               \node [above] at (9,1.8) {$q$};
                 \draw[<-] (8.4,1.5) to [out=45,in=135] (9.4,1.5);
       \draw[<-] (-1,-.3) to [out=0,in=-135] (.5,0.6);
   \node [below right] at (-.2,0) {$\gamma$};
    \node [above] at (0.5,0) {$1$};
    \node [above] at (1.5,0) {$2$};
   \node [above] at (2.5,0) {$3$};
     \node [above] at (3.75,0) {$\cdots$};
     \node [above] at (6,0) {\textcolor{white}{$+1$} $k$ \textcolor{white}{$+1$}};
        \node [above] at (8.25,0) {$\cdots$};
         \node [above] at (10.5,0) {\textcolor{white}{$+$}$L$\textcolor{white}{$1$}};
          \node [above] at (9.5,0) {$L-1$};
        \draw[<-] (10.6,.7) to [out=-45,in=180] (12,-.3);
   \node [below left] at (11.2,0) {$\delta$};
\end{tikzpicture}
 \caption{Asymmetric simple exclusion process (ASEP) on $\{1,\dots,L\}$  with open boundaries,  with parameters $\alpha,\beta>0$,
$\gamma,\delta\geq 0$, and  $0\leq q<1$. %
Filled in disks represent occupied sites.
\lbl{Fig1}
}
\end{figure}
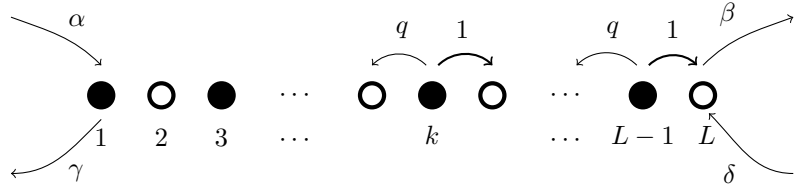

We are interested in the steady state of the ASEP, so we focus on the stationary distribution %
of the Markov chain. The standard method relies on Kolmogorov's prospective equations. Denoting by $P_t(\tau_1,\dots,\tau_L)$  the probability that Markov chain is in configuration $(\tau_1,\dots,\tau_L)\in
\{0,1\}^L$ at time t, we have
\begin{multline}\label{DiffEq}
\frac{d}{dt}P_t(\tau_1,\dots,\tau_L)= \delta_{\tau_1=1}\left[\alpha P_t(0,\tau_2,\dots,\tau_L)-\gamma  P_t(1,\tau_2,\dots,\tau_L)\right]\NL
+\delta_{\tau_1=0}\left[\gamma P_t(1,\tau_2,\dots,\tau_L)-\alpha  P_t(0,\tau_2,\dots,\tau_L)\right]\\
+\sum_{k=1}^{L-1} \delta_{\tau_k=1,\tau_{k+1}=0}\Big[qP_t(\tau_1,\dots,\tau_{k-1},0,1,\tau_{k+2},\dots,\tau_L)
\NL- P_t(\tau_1,\dots,\tau_{k-1},1,0,\tau_{k+2},\dots,\tau_L)
\Big] \\
+\sum_{k=1}^{L-1} \delta_{\tau_k=0,\tau_{k+1}=1}\Big[P_t(\tau_1,\dots,\tau_{k-1},1,0,\tau_{k+2},\dots,\tau_L)
\NL- q P_t(\tau_1,\dots,\tau_{k-1},0,1,\tau_{k+2},\dots,\tau_L)
\Big]\\
+\delta_{\tau_L=0}\left[\beta P_t(\tau_1,\dots,\tau_{L-1},1)-\delta P_t(\tau_1,\dots,\tau_{L-1},0)\right] \NL
+\delta_{\tau_L=1}\left[\delta P_t(\tau_1,\dots,\tau_{L-1},0)-\beta P_t(\tau_1,\dots,\tau_{L-1},1)\right].
\end{multline}
The stationary distribution $P(\tau_1,\dots,\tau_L)$ of this Markov chain  satisfies
 $$\frac{d}{dt}P_t(\tau_1,\dots,\tau_L)=0$$ so it solves the system of linear equations on the right hand side of \eqref{DiffEq}.
 An ingenious method of determining the stationary probabilities for all $L$  was introduced by Derrida, Evans, Hakim and Pasquier in
\cite{derrida1993exact}, who consider
  infinite  matrices and  vectors that satisfy relations
 \begin{eqnarray}
  \mD\mE-q\mE\mD&=&\mD+\mE ,\label{q-comm-Derrida}\\
\langle W|(\alpha \mE-\gamma \mD)&=&\langle W| ,\label{W}\\
(\beta \mD-\delta \mE)|V\rangle&=&|V\rangle. \label{V}
\end{eqnarray}
The   stationary  probabilities are then computed  as
\begin{equation}
  \label{MatrixSolution}
 P(\tau_1,\dots,\tau_L)= \frac{\langle W|\prod_{j=1}^L (\tau_j\mD+(1-\tau_j)\mE)|V\rangle }{\langle W|(\mD+ \mE)^L|V\rangle}.
\end{equation}

It has been noted in the literature that the above approach may fail: \citet[page 3384]{essler1996representations} point out   that  matrix representation \eqref{MatrixSolution}     runs into problems
when
$\alpha\beta=\gamma\delta$,
and they point out the importance of a more general condition that $\alpha\beta-q^n\gamma\delta\ne0$ for $n=0,1,\dots$. We will call this a non-singular case.

The singular case when   $\alpha\beta=q^N\gamma\delta$, is
  discussed  by
 \citet[Appendix A]{Mallick-Sandow-1997} in the context of finite matrix representations.
 Of course, this   is a singular case for the matrix product ansatz,   not for the actual Markov chain.
 To avoid singularity,  \citet{lazarescu2013matrix} presents a perturbative generalization of the matrix product ansatz,
which was used in \citep{gorissen2012exact} to derive exact current statistics for all values of parameters.  Continuity  of the ASEP with respect to its parameters is also used to
derive recursion  for stationary probabilities in \cite[proof of Theorem 2.3]{Liggett-1975}.

\subsection{Solution for the singular case}
 Our goal is to
 analyze  the singular case $\alpha\beta=q^N\gamma\delta$ directly.
  We consider an abstract noncommutative algebra $\calA$ with identity
$\mI$ and two generators $\mD,\mE$ that satisfy relation \eqref{q-comm-Derrida}.
 The algebra consists of linear combinations of monomials
 $\mX=\mD^{n_1}\mE^{m_1}\dots \mD^{n_k}\mE^{m_k}$.   It turns out that monomials in normal order, $\mE^m\mD^n$,  form a basis for $\calA$ as a vector
space.
We   introduce increasing subspaces $\calA_k$ of $\calA$ that are spanned by the monomials  in normal order of degree at most $k$,  i.e., $\calA_k$ is the span of $\{\mE^m\mD^n: m+n\leq k\}$.
The abstract version of the matrix product ansatz  for the singular case    uses a pair of
linear functionals $\varphi_0:\calA_N\to\CC$ and $\varphi_1:\calA\to\CC$. %

\begin{theorem} \label{T-0}
Suppose $\alpha,\beta, \gamma,\delta>0$ satisfy  $\alpha \beta=q^N \gamma\delta$ for some  $N=0,1,\dots$. Then there exists a  pair of linear functionals $\varphi_0:\calA_N\to\CC$
and $\varphi_1:\calA\to\CC$ such that stationary probabilities for the ASEP are
 \begin{equation}\label{phi2P}
 P(\tau_1,\dots,\tau_L)= \frac{\varphi\left[\prod_{j=1}^L (\tau_j\mD+(1-\tau_j)\mE)\right]}{\varphi\left[(\mD+ \mE)^L\right]},
 \end{equation}
 where $\varphi=\varphi_0$ if $1\leq L< N+1$ and $\varphi=\varphi_1$ if $L\geq N+1$.
 Furthermore,
  if $L=N+1$   then the stationary  distribution is the product of Bernoulli measures
    $$P(\tau_1,\dots,\tau_{N+1}) = \prod_{j=1}^{N+1} p_j^{\tau_j} q_j^{1-\tau_j}$$
with $p_j=\frac{\alpha}{\alpha+\gamma q^{j-1}}$
and $q_j=1-p_j$.

If  $\alpha,\beta>0$,  $\gamma,\delta\geq 0$ are such that  $\alpha \beta\ne q^n \gamma\delta$ for all  $n=0,1,\dots$, then $\varphi_0$ is defined on $\calA_\infty=\calA$, and \eqref{phi2P} holds with $\varphi=\varphi_0$ for all $L$.
\end{theorem}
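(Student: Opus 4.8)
\medskip
\noindent\textbf{Proof strategy for the non-singular statement.}
The plan is to split the claim into two parts: (a) in the non-singular case the functional $\varphi_0$ --- built in the first part of the proof on $\calA_N$ by a recursion on the degree --- extends to all of $\calA$; and (b) once one has \emph{any} linear functional $\varphi$ on $\calA$ satisfying the two one-sided invariance relations
\begin{equation*}
\varphi\big[(\alpha\mE-\gamma\mD)\,\mX\big]=\varphi[\mX],
\qquad
\varphi\big[\mX\,(\beta\mD-\delta\mE)\big]=\varphi[\mX]
\qquad(\mX\in\calA),
\end{equation*}
together with $\varphi\big[(\mD+\mE)^L\big]\neq0$, formula \eqref{phi2P} returns the stationary distribution of the ASEP on $L$ sites. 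Part (b) is exactly the Derrida--Evans--Hakim--Pasquier \cite{derrida1993exact} telescoping check that the right-hand side of \eqref{DiffEq} vanishes, transcribed with $\varphi$ in place of $\langle W|\cdot|V\rangle$: it uses only the bulk relation \eqref{q-comm-Derrida} together with \eqref{W}--\eqref{V} rewritten as the invariance relations above, so it is the same argument already applied to $\varphi_1$ in the main body and it goes through verbatim for $\varphi_0$ once (a) is in place.

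For (a) I would re-run the recursion that defines $\varphi_0$. On $\calA_0=\CC\mI$ the value $\varphi_0[\mI]$ is a free normalization constant. Assuming $\varphi_0$ has been defined on $\calA_{k-1}$ so that both invariance relations hold there, one extends it to the new basis vectors $\mE^m\mD^n$ with $m+n=k$: iterating \eqref{q-comm-Derrida} gives the normal-ordering rule $\mD\mE^a=q^a\mE^a\mD+(\textnormal{terms of degree}\le a)$, so each invariance relation evaluated at a degree-$(k-1)$ normal monomial becomes a linear equation among the $k+1$ unknown values $\varphi_0[\mE^m\mD^n]$ whose right-hand side is already known on $\calA_{k-1}$. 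The substance of the construction is that this (overdetermined) system is consistent and uniquely solvable precisely when a certain scalar --- a nonzero multiple of $\alpha\beta-q^{\,k-1}\gamma\delta$ for the step $\calA_{k-1}\to\calA_k$ --- does not vanish. When $\alpha\beta=q^N\gamma\delta$ this scalar is zero only at $k=N+1$, which is exactly why $\varphi_0$ cannot be pushed past $\calA_N$; when instead $\alpha\beta\neq q^n\gamma\delta$ for all $n\ge0$ --- including the trivial subcase $\gamma\delta=0$, where $\alpha\beta>0$ makes the hypothesis automatic --- it never vanishes, the recursion proceeds for every $k$, and $\varphi_0$ is defined on $\bigcup_k\calA_k=\calA$. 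It remains to note $\varphi_0\big[(\mD+\mE)^L\big]\neq0$ for every $L$: by the telescoping identity of (b) the weight vector $\big(\varphi_0[\prod_{j}(\tau_j\mD+(1-\tau_j)\mE)]\big)_\tau$ is a left null vector of the generator of the irreducible finite chain, hence a scalar multiple of its strictly positive stationary law, and it is not the zero vector because the invariance relations propagate non-vanishing of $\varphi_0$ up from $\varphi_0[\mI]\neq0$; thus the partition function $\varphi_0[(\mD+\mE)^L]$ is nonzero (equivalently, it equals the explicit positive expression obtained in the main construction).

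The principal obstacle is the consistency analysis inside (a): one must show that the $q$-commutation relation \eqref{q-comm-Derrida} interlocks correctly with the two \emph{separate} boundary relations \eqref{W} and \eqref{V}, so that the roughly $2k$ linear equations on the $k+1$ degree-$k$ unknowns are compatible, and one must pin down which factor $\alpha\beta-q^{\,k-1}\gamma\delta$ governs solvability at each degree. The cleanest way to organize this is to diagonalize the commutative part: pass to the affine combination $\mS$ of $\mD+\mE$ that plays the role of the Askey--Wilson variable, verify that the recursion for $\varphi_0$ restricted to $\CC[\mS]$ reproduces the three-term recurrence of the Askey--Wilson polynomials with parameters read off from $\alpha,\beta,\gamma,\delta,q$, and note that the values of $\varphi_0$ on the remaining basis vectors are then forced. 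Non-singularity is precisely the condition that no Askey--Wilson normalization degenerates, equivalently that $\varphi_0$ stays well-defined on all of $\calA$ and vanishes on every non-constant Askey--Wilson polynomial; this simultaneously yields the last sentence of the abstract and identifies $\varphi_0$, in the non-singular case, with the Askey--Wilson orthogonality functional.
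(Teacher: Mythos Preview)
Your outline matches the paper's architecture: part (b) is exactly Theorem~\ref{T2} (the Derrida--Evans--Hakim--Pasquier telescoping check, reproduced verbatim in Section~\ref{sect:ProofT1}), and part (a) is the recursive extension along the normal-ordered basis $\{\ee^m\dd^n\}$ with the solvability scalar $\alpha\beta-q^{k-1}\gamma\delta$ at each step (formulas \eqref{SolA0}--\eqref{SolB0}, Remark~\ref{Rem:nonsingular}). Two points deserve comment.

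First, your non-vanishing argument differs from the paper's and is in some ways slicker. The paper (Proposition~\ref{P-positivity}, Remark~\ref{Rem-signs}) tracks the \emph{sign} of $\varphi_0[\mX]$ on every monomial $\mX$ of degree $L$ by an inductive argument using the expansion \eqref{ED-expansion}; this yields more, namely the sign of the current. Your Perron--Frobenius route---the weight vector lies in the one-dimensional null space of $Q^T$, hence is a scalar multiple of the strictly positive stationary law, and the scalar is nonzero because vanishing on all degree-$L$ monomials would propagate \emph{down} via $\varphi_0[\mX]=\varphi_0[(\alpha\mE-\gamma\mD)\mX]$ to $\varphi_0[\mI]=0$---is correct but you should state the downward direction explicitly (your ``propagate up'' phrasing is the contrapositive) and note that reality of the weight vector follows from the recursion having real coefficients.

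Second, your proposed handling of the consistency check is where the proposal has a genuine gap. The paper does \emph{not} invoke Askey--Wilson polynomials to prove consistency; it verifies directly (Lemma~4) that the two formulas \eqref{SolA}--\eqref{SolB} agree on the overlap $(m',n')=(m+1,n)=(m,n+1)$ by an explicit computation using \eqref{de^m}--\eqref{d^ne} and the inductive invariance relations. Your suggestion to ``diagonalize via the Askey--Wilson variable'' is circular in this paper's logic: the identification of $\varphi_0$ with the Askey--Wilson orthogonality functional is Theorem~\ref{T3}, whose lengthy proof in Section~\ref{sect:ProofT3} \emph{presupposes} that $\varphi_0$ has already been constructed and shown consistent. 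Moreover, restricting $\varphi_0$ to $\CC[\mE+\mD]$ does not by itself force the values on the remaining basis vectors $\ee^m\dd^n$ with $m,n\geq1$; you still need the two one-sided invariance relations, and checking their compatibility is exactly Lemma~4. So either carry out the direct computation, or if you really want an Askey--Wilson-first construction, you must independently build the orthogonality functional and then \emph{prove} it satisfies \eqref{W+}--\eqref{V+}, which is essentially the Uchiyama--Sasamoto--Wadati calculation and is not shorter.
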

We remark that part of the conclusion of the theorem is the assertion that the denominators in \eqref{phi2P} are  non-zero for all $L$. Proposition \ref{P-positivity} below   determines their signs, which according to Remark \ref{Rem-signs} may vary also in the non-singular case.
The signs determine the direction of the {\em current} $J$ through the bond between adjacent sites, which is defined as $J=\Pr(\tau_k=1,\tau_{k+1}=0)-q\Pr(\tau_k=0,\tau_{k+1}=1)$. When $L\ne N+1$, we have $J=\varphi[(\mE+\mD)^{L-1}]/\varphi[(\mE+\mD)^{L}]$, so the current is negative  for $2\leq L\leq N$, and  positive for $L> N+1$.
As noted in \citep[Section 3]{Aneva-2009-Integrability}, the current    vanishes for $L=N+1$ due to the detailed balance condition satisfied by the product measure.

The proof of  Theorem \ref{T-0} is given in Section \ref{sect:ProofT1} and consist  of recursive construction of the pair of functionals.
In the construction, the left and right eigenvectors in \eqref{W} and \eqref{V} are replaced by the left and right invariance
requirements:
\begin{equation}
  \label{W+} \varphi\left[(\alpha \mE-\gamma\mD)\mA\right]=\varphi[\mA],
\end{equation}
\begin{equation}
  \label{V+} \varphi\left[\mA(\beta \mD-\delta\mE)\right]=\varphi[\mA],
\end{equation}
for all $\mA\in\calA$ when $\varphi=\varphi_1$ and for all $\mA\in\calA_{N-1}$ if $\varphi=\varphi_0$.
By an adaptation of the argument from \cite{derrida1993exact}, functionals that satisfy \eqref{W+} and \eqref{V+} give stationary probabilities, see Theorem \ref{T2} for precise statement. Similar  modification of \eqref{W} and \eqref{V} in the  matrix formulation appears in  \cite[Theorem 5.2]{corteel2011tableaux}.
After the paper was submitted, we learned that the idea of working with an abstract algebra and defining a linear functional by
using normal order can be traced back to    \cite[Section 3]{derrida-mallick-97} who consider periodic ASEPs, so constraints
\eqref{W+} and \eqref{V+} do not appear. %

In the singular case functional $\varphi_0$ is defined on $N(N+1)/2$-dimensional space $\calA_N$.   However, $\calA_N$ is not
an algebra,
 so this is different from
  the finite dimensional
 representations of the matrix algebra which  were studied by
   \citet{essler1996representations} and \citet{Mallick-Sandow-1997}.
   In Appendix \ref{Sec:MatrixModel} we present a ``matrix model" for all $\alpha,\beta,\gamma,\delta$ with $0<q<1$
   that was  inspired by \citet{Mallick-Sandow-1997}. The model  reproduces their
    finite matrix model   when the parameters are chosen like in their paper, but cannot be used for general parameters  due to lack of associativity.

\subsection{Relation to Askey-Wilson polynomials}
 Ref. \cite{uchiyama2004asymmetric} shows that the stationary distribution of the open ASEP is intimately related
 to the %
  Askey-Wilson polynomials. Here we extend this relation to cover also the singular case,  when the Askey-Wilson polynomials   do not have the Jacobi matrix, see discussion below.

In the context of   ASEP, the Askey-Wilson  polynomials depend on parameter $q$, %
and on four real parameters $a,b,c,d$  which
are related to parameters of ASEP by the  equations
\begin{multline}
  \label{AW-parameters}
  \alpha=\frac{1-  q}{(1+c)(1+d)}, \beta=\frac{1-q}{(1+a)(1+b)}, \;  \NL
  \gamma=-\frac{(1-q)cd}{(1+c)(1+d)}, \delta=-
\frac{ab(1-q)}{(1+a)(1+b)},
\end{multline}
see \citep{Bryc-Wesolowski-2015-asep}, \cite[(74)]{essler1996representations}, \citep{uchiyama2004asymmetric}, and     \citep{Mallick-Sandow-1997}.  In this parametrization, the singularity condition
becomes  $abcd q^N=1$. %

 Since $\alpha,\beta>0$ and $\gamma,\delta\geq 0$, when
solving the resulting quadratic equations without loss of generality we  can choose $a,c>0$, and then $b,d\in(-1,0]$. The
explicit expressions are $a=\kappa_+(\beta,\delta),b=\kappa_-(\beta,\delta), c=\kappa_+(\alpha,\gamma),
d=\kappa_-(\alpha,\gamma) $, where
 \begin{equation*}\label{kappa}
\kappa_{\pm}(u,v)=\frac{1-q-u+v\pm\sqrt{(1-q-u+v)^2+4u v}}{2u}.
\end{equation*}

Recall the $q$-hypergeometric function notation
$$
{_{r+1}\phi_r}\left(\begin{matrix}
	a_1,\dots,a_{r+1}\\ b_1,\dots,b_r
\end{matrix}\middle|q;z\right) = \sum_{k=0}^\infty \frac{(a_1, a_2, \dots,a_{r+1};q)_k}{(q,b_1, b_2, \dots,b_r;q)_k}z^k.
$$
Here we use the usual Pochhammer notation:
$$(a_1, a_2, \dots,a_r;q)_n=(a_1;q)_n(a_2;q)_n \dots(a_r;q)_n$$ and $(a;q)_{n+1}=(1-a q^{n })(a;q)_{n}$ with $(a;q)_0=1$.
Later, we will also need the $q$-numbers $[n]_q=1+q+\dots+q^{n-1}$ with the convention $[0]_q=0$, $q$-factorials  $[n]_q!=[1]_q\dots [n]_q=(1-q)^{-n}(q;q)_n$ with the convention $[0]_q!=1$,
 and the $q$-binomial coefficients
$$
\Binq{n}{k}=\frac{[n]_q!}{[k]_q![n-k]_q!}.
$$

We define the $n$-th Askey-Wilson polynomial     using the $_4\phi_3$-hypergeometric function, which in the second expression we write  more explicitly for all $x$ rather than for $x=\cos \psi$.
\begin{multline}
  \label{AW}
  p_n(x;a,b,c,d|q)=a^{-n}(ab, ac, ad;q)_n{_4\varphi_3}\left(\begin{matrix}
q^{-n},q^{n-1}abcd,a e^{i\psi},a e^{-i\psi} \\
ab, ac, ad
\end{matrix}\middle|q;q\right)
\\  =
  a^{-n}(ab, ac, ad;q)_n\sum_{k=0}^n q^k \frac{(q^{-n},abcd q^{n-1};q)_k}{(q,ab,ac,ad;q)_k}\prod_{j=0}^{k-1}(1+a^2 q^{2j}-2a x q^j).
\end{multline}
Although this is not obvious from \eqref{AW}, it is known that %
 $p_n(x;a,b,c,d|q)$ is invariant under permutations of parameters $a,b,c,d$, and that the polynomial is well defined for all $a,b,c,d\in \CC$. However, in the singular case the degree of the polynomial varies with $n$ somewhat unexpectedly.
It is easy to see from the last expression in \eqref{AW} that if $abcd q^N=1 $,
then  for $0\leq n\leq N+1$ the degree of polynomial $p_n(x;a,b,c,d|q) $ is
$\min\{n,N+1-n\}$.
In particular, the degrees may decrease and hence there is no three step recursion, or a Jacobi matrix.

\arxiv{
Indeed,  $p_n(x;a,b,c,d|q)=a^{-n}(ab, ac, ad;q)_n Q_n(x)$ with %
$$Q_n(\cos \psi)=_4\!\!\varphi_3\left(\begin{matrix}
q^{-n},q^{n-N-1},a e^{i\psi},a e^{-i\psi} \\
ab, ac, ad
\end{matrix}\middle|q;q\right),$$ and $Q_n(x)=Q_{N+1-n}(x)$ for $0\leq n \leq N+1$.
}

The relation of $\varphi_0$ to Askey-Wilson polynomials is  more  conveniently expressed using a different pair of generators of algebra $\calA$.
Instead of $\mE,\mD$, we consider elements $\dd$ and $\ee$ given by
\begin{equation}\label{DE2de}
  \mD=\theta^2\mI+\theta \dd,\; \mE=\theta^2\mI + \theta\ee, \;\theta=1/\sqrt{1-q}.
\end{equation}
(Similar transformation was used by several authors, including \cite{uchiyama2004asymmetric} and \cite{Bryc-Wesolowski-2015-asep}.)

In this notation, $\calA$ is then an algebra with identity and two generators $\dd,\ee$ that satisfy relation
\begin{equation}
  \label{de-comute}
  \dd\ee-q\ee\dd=\mI.
\end{equation}
According to Theorem \ref{T-0}, functional $\varphi_0$ is defined on $\calA_N$ in the  singular case, and on all of $\calA$ in the non-singular case. We  include non-singular case  in the conclusion below by setting $N=\infty$.
The action of $\varphi_0$ on Askey-Wilson polynomials can now be described as follows.
\begin{theorem}%
\label{T3}
With $\xx=
\frac{1}{2\theta}\left(\ee+\dd\right)$, for $1\leq n< N+1$ we have
$$ \varphi_0\left[p_n(\xx\;;a ,b ,c,d\;|q)\right]=0.$$

 More generally, for   any non-zero
$t\in\CC$  let
\begin{equation}   \label{ed2x} \xx_t=
\frac{1}{2\theta}\left(\tfrac1t\ee+t\dd\right).
\end{equation}
Then
\begin{equation}
   \label{Phi0-mean} \varphi_0\left[p_n(\xx_t\;;at ,bt ,\tfrac{c }{t}, \tfrac{d}{t}\;|q)\right]=0 \mbox{ for } 1\leq n< N+1.
\end{equation}
\end{theorem}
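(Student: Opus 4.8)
The plan is to pass from the matrix ansatz relations to the generators $\dd,\ee$ and then recognize $p_n(\xx_t;at,bt,\tfrac ct,\tfrac dt\,|q)$ as a combination of elements that $\varphi_0$ manifestly kills. Rewriting \eqref{W+} and \eqref{V+} through \eqref{DE2de} and the dictionary \eqref{AW-parameters}, a short computation shows that, for $\mathbf A\in\calA_{N-1}$, \eqref{W+} is equivalent to $\varphi_0\left[(\ee+cd\,\dd-\theta(c+d)\mI)\mathbf A\right]=0$ and \eqref{V+} is equivalent to $\varphi_0\left[\mathbf A\,(\dd+ab\,\ee-\theta(a+b)\mI)\right]=0$. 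Thus $\varphi_0$ annihilates any element of $\calA_N$ of the form $(\ee+cd\,\dd-\theta(c+d)\mI)\mathbf A$ or $\mathbf A\,(\dd+ab\,\ee-\theta(a+b)\mI)$ with $\mathbf A\in\calA_{N-1}$; these facts, together with \eqref{de-comute}, are the only properties of $\varphi_0$ the argument will use, so it will be formal in the parameters.

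I would do $t=1$ first. Since $\deg p_n(\xx;a,b,c,d|q)=\min\{n,N+1-n\}\le N$ for $1\le n<N+1$, the element $p_n(\xx;a,b,c,d|q)$ lies in $\calA_N$, so $\varphi_0$ is defined on it; for $n=N+1$ the polynomial degenerates to a constant, which is why the range stops at $N$. Starting from \eqref{AW} written with $c$ as the distinguished parameter --- legitimate because $p_n$ is symmetric in $a,b,c,d$ --- one has $p_n(\xx;a,b,c,d|q)$ as an explicit finite combination of the ``moment elements'' $\prod_{j=0}^{k-1}(1+c^2q^{2j}-2c\xx q^j)$, $0\le k\le n$. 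Using $2\theta\xx=\ee+\dd$ one rewrites each factor $1+c^2q^{2j}-2c\xx q^j$ as a scalar multiple of $\ee+cd\,\dd-\theta(c+d)\mI$ plus a degree-one element involving only $\dd$ and $\mI$; peeling factors off the left and applying the first invariance relation (legitimate since removing a factor lowers the degree, so the remainder stays in $\calA_{N-1}$), while disposing of the leftover $\dd$'s by \eqref{de-comute} and the second invariance relation, evaluates each $\varphi_0\left[\prod_{j=0}^{k-1}(1+c^2q^{2j}-2c\xx q^j)\right]$ in closed form (the same value as the corresponding moment of the Askey--Wilson weight). Substituting these back collapses $\varphi_0[p_n(\xx;a,b,c,d|q)]$ to a terminating basic hypergeometric sum that vanishes for $1\le n<N+1$ by a $q$-Saalsch\"utz/${}_3\phi_2$-type summation identity, the required balancing being precisely the singularity relation $abcd\,q^N=1$. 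The non-singular case is this computation with $N=\infty$: the invariance relations hold on all of $\calA$, and the sum collapses for every $n\ge1$.

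For general $t\in\CC\setminus\{0\}$: the assignment $\dd\mapsto t\,\dd$, $\ee\mapsto t^{-1}\ee$ preserves \eqref{de-comute}, hence extends to a degree-preserving algebra automorphism $\sigma_t$ of $\calA$ with $\sigma_t(\calA_k)=\calA_k$ and $\sigma_t(\xx)=\xx_t$. A direct check shows $\varphi_0\circ\sigma_t$ satisfies on $\calA_{N-1}$ exactly the two translated invariance relations of the first paragraph with $(a,b,c,d)$ replaced by $(at,bt,\tfrac ct,\tfrac dt)$; note $(at)(bt)(\tfrac ct)(\tfrac dt)=abcd$, so $abcd\,q^N=1$ is unchanged. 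Applying the $t=1$ argument --- purely algebraic and formal in the parameters --- to $\varphi_0\circ\sigma_t$ yields $(\varphi_0\circ\sigma_t)\left[p_n(\xx;at,bt,\tfrac ct,\tfrac dt\,|q)\right]=0$, which is \eqref{Phi0-mean} after applying $\sigma_t$ and using $\sigma_t(\xx)=\xx_t$. The main obstacle is the middle step of the $t=1$ case: computing the closed form of the moment elements $\varphi_0\left[\prod_{j=0}^{k-1}(1+c^2q^{2j}-2c\xx q^j)\right]$ from the two one-sided invariance relations and \eqref{de-comute} (which must be interleaved carefully so that one stays inside $\calA_{N-1}$), and then recognizing and summing the resulting series; the delicate point is that, unlike the non-singular case, there is no genuine Askey--Wilson orthogonality measure to invoke, so the vanishing must come from an explicit summation identity valid at the singular specialization $abcd\,q^N=1$.
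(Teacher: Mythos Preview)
Your strategy is genuinely different from the paper's and, \emph{if} the moment computation could be completed, would be more direct: expand $p_n$ via the ${}_4\phi_3$ with $c$ distinguished, evaluate the ``moments'' $m_k:=\varphi_0\bigl[\prod_{j=0}^{k-1}(1+c^2q^{2j}-2c\xx q^j)\bigr]$, and sum. Granting the expected formula $m_k=(ac,bc,cd;q)_k/(abcd;q)_k$, the sum telescopes to ${}_2\phi_1(q^{-n},abcdq^{n-1};abcd;q,q)$, which by $q$-Chu--Vandermonde \eqref{twophione} equals $(q^{1-n};q)_n(abcdq^{n-1})^n/(abcd;q)_n=0$ for all $n\ge1$ because $(q^{1-n};q)_n$ contains the factor $1-q^0$. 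So the vanishing is \emph{not} a ${}_3\phi_2$/Saalsch\"utz identity and does \emph{not} use the singularity relation $abcdq^N=1$; that relation only constrains the domain of $\varphi_0$ and the degrees of $p_n$, as the paper itself remarks. Your reduction of general $t$ to $t=1$ via the degree-preserving automorphism $\sigma_t:\dd\mapsto t\dd,\ \ee\mapsto t^{-1}\ee$ is correct (one checks $\varphi_0\circ\sigma_t$ satisfies the invariance relations with parameters $(at,bt,c/t,d/t)$ and is normalized at $\mI$, hence coincides with the $\varphi_0$ for those parameters), and this is cleaner than the paper's treatment, which threads $t$ through every recursion.

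The real gap is exactly the step you flag: establishing the moment formula from the two one-sided invariances and \eqref{de-comute}. The ``peel off the leftmost factor'' description does not close up. Writing $1+c^2-2c\xx=-\tfrac{c}{\theta}(\ee+cd\,\dd-\theta(c+d))+(1-cd)\bigl(\mI-\tfrac{c}{\theta}\dd\bigr)$ and applying the left invariance leaves $(1-cd)\varphi_0\bigl[(\mI-\tfrac{c}{\theta}\dd)\prod_{j\ge1}(\cdots)\bigr]$, and now $\dd$ must be commuted past a polynomial in $\xx$; but $[\dd,\xx]=\tfrac{1}{2\theta}\bigl(\mI-(1-q)\ee\dd\bigr)$ is not in the span of $\mI,\dd$, so one is forced to track auxiliary quantities like $\varphi_0[\dd\,P_k(\xx)]$ and $\varphi_0[P_k(\xx)\,\ee]$ jointly with $m_k$, and the right invariance alone does not eliminate them. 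Proving $m_k=(ac,bc,cd;q)_k/(abcd;q)_k$ from the invariances is essentially as hard as the paper's key Proposition~\ref{L:A=G}, whose proof spans Lemmas~\ref{lemmarek1}--\ref{lammarek2} and Corollary~\ref{corrolaryocnk}. The paper sidesteps the non-commutativity problem by working not with $P_k(\xx)$ but with the $q$-Hermite polynomials $H_n(t\ee+\dd;t)$, which expand \emph{directly} in the ordered basis as $\sum_k\Binq{n}{k}t^k\ee^k\dd^{n-k}$ (Lemma~\ref{L-BKS}); this lets one read off a recursion for $G_n(t)=\varphi_0[H_n(t\ee+\dd;t)]$ straight from \eqref{SolA1}--\eqref{SolB1}, and then match it to the recursion for the connection coefficient $a_{n,0}$ in $H_n=\sum_k a_{n,k}p_k$. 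That matching is the paper's substitute for your moment formula.
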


The proof of Theorem \ref{T3} appears in Section \ref{sect:ProofT3} and is fairly involved.
It relies on evaluation of $\varphi_0$ on the  family of continuous $q$-Hermite
polynomials, on explicit formula for the  connection coefficients between the  $q$-Hermite polynomials and the Askey-Wilson polynomials which we did not find in the literature,
and to complete the proof we need some  non-obvious $q$-hypergeometric identities.
In Appendix \ref{Sect:TASEP} we discuss action of  $\varphi_0$ and $\varphi_1$ on the Askey-Wilson polynomials in the much simpler case of the Totaly Asymmetric Exclusion process where $q=0$.

\subsection{Relation to orthogonality functional for the Askey Wilson polynomials}
In the non-singular  case when  $q^n abcd\ne 1$ for all $n=0,1,\dots$,  the Askey-Wilson
polynomials $\{p_n\}_{n=0,1,\dots}$ are of increasing degrees and  satisfy the three step recursion  \citep[(1.24)]{Askey-Wilson-85}.
According to Theorem \ref{T-0} functional $\varphi_0$   is then defined on all of $\calA$ and determines stationary probabilities \eqref{MatrixSolution+} for all $L\geq 0$.  Theorem \ref{T3} implies that $\varphi_0$ is an orthogonality functional for the Askey-Wilson polynomials,
which encodes the relation between ASEP and
 Askey-Wilson polynomials that was discovered by %
 Uchiyama, Sasamoto and Wadati \citep{uchiyama2004asymmetric}.
 In particular, \eqref{Phi0-mean} corresponds to \citep[formula (6.2)]{uchiyama2004asymmetric} with $\xi=t$.

Orthogonality can be seen as follows. Theorem \ref{T3} says that
$$\varphi_0\left[p_n(\xx\;;a ,b ,c,d\;|q)\right]=0$$
 for all $n\geq 1$, and it is easy to check, see e.g. \cite[Proof of Favard's theorem]{chihara2011introduction}, that the latter property together with the three-step recursion for the Askey-Wilson polynomials implies
 orthogonality:
$$ \varphi_0\left[p_m(\xx\;;a ,b ,c,d |q)p_n(\xx\;;a ,b ,c,d\;|q)\right]=0$$ for all $m\ne n$.
This orthogonality relation holds without   additional conditions on $a,b,c,d$ that appear when orthogonality of polynomials
$\{p_n\}$ is considered on the real line \cite[Theorem 2.4]{Askey-Wilson-85}, or on a complex curve \cite[Theorem 2.3]{Askey-Wilson-85}. %
Since $\varphi_0\left[p_n(\xx\;;a ,b ,c,d\;|q)\right]\ne 0$  only for $n=0$, linearization formulas \cite{foupouagnigni2013connection}
give the value of
\begin{multline*}
\varphi_0\left[p_n^2(\xx\;;a,b,c,d\;|q)\right] = \frac {   (ab,ac,ad;q)^2_n  }{  a^{2n}   }
         \sum _ {L = 0}^{2n  } \frac { q^
        L \left( a b  ,
          a c  ,
          a d   ;q \right)_L   }
          {\left(a b c d ;q  \right)_L}
     \NL \times  \sum _ {j = \max (0,
          L - n)}^{\min (n,
         L  )} \frac {q^{j (j - L
      )} \left(q^{-n},a b c d q^{n - 1};q\right)_j }
          { (q, ab, ac, ad;q)_j  (q;q)_{  L -j}}\\
          \times \sum _ {k =
          0}^{\min (j, j - L + n)} \frac {q^
          k  \left(q^{-j},
         a^2 q^{L + r};q\right)_k  \left(q^{-n}, a b c d q^{n - 1} ;q\right)_{
           k + L   -j}} { (q)_k (a b,  a c,  a d;q)_{ k + L  -j}},
\end{multline*}
which  may fail to be positive when $abcd>1$.

\arxiv{
Somewhat more generally, in the notation of \cite{foupouagnigni2013connection} we have
$$ \varphi_0\left[p_m (\xx\;;a,b,c,d\;|q)p_n (\xx\;;a,b,c,d\;|q)\right]=L_0(m,n),$$
where
\begin{multline*}
 L_r(m,n)= \frac { q^{\frac {1} {2} r (r + 1)}(ab,ac,ad;q)_m (ab,ac,ad;q)_n  }{(-1)^
    r a^{m + n - r}   \left(a b c d q^{r - 1} ;q\right)_r}
         \sum _ {L = 0}^{m + n +
      r}\Binq{L+r}{r} \frac { q^
        L \left( a b q^r,
          a c q^r,
          a d q^r  ;q\right)_L   }
          {\left(a b c d q^{2 r} ;q \right)_L}
          \\
          \times \sum _ {j = \max (0,
          L - m + r)}^{\min (n,
         L +
          r)} \frac {q^{j (j - L -
              r)} \left(q^{-n},a b c d q^{n - 1};q\right)_j }
          { (q, ab, ac, ad;q)_j  (q;q)_{  L +
         r-j}}\\
          \times \sum _ {k =
          0}^{\min (j, j - L + m - r)} \frac {q^
          k  \left(q^{-j},
         a^2 q^{L + r};q\right)_k  \left(q^{-m}, a b c d q^{m - 1} ;q\right)_{
           k + L + r -j}} { (q)_k (a b,  a c,  a d;q)_{ k + L +
             r-j}}.
\end{multline*}
Numerical experiments suggest that $L_0(m,n)=0$ if $p_n,p_m$ have different degrees which, if true, would strengthen the conclusion of Theorem \ref{T3} to the assertion of full orthogonality.
}

\begin{remark}After this paper was submitted, we learned about  Ref. \cite{lemay2018q} which introduces nonstandard truncation condition for the Askey-Wilson polynomials in the singular case $abcdq^N=1$.
Their $q$-para-Racah polynomials are obtained by taking a limit for special choices of positive parameters $b,d$ which do not arise from ASEP.
Finite dimensional representations of the Askey-Wilson algebra in the singular case are discussed in  \cite[Section 7]{aneva2007matrix}, \cite[page 15]{Aneva-2009-Integrability} and \cite[Section 4]{tsujimoto2017tridiagonal}.
\end{remark}

\section{Proof of Theorem \ref{T-0}}\label{sect:ProofT1}
We begin with two observations from the literature. %
The first observation is  that the proof of Derrida, Evans, Hakim and Pasquier in \cite{derrida1993exact} is non-recursive, so it implies that  an invariant  functional on the finite-dimensional subspace $\calA_L$
  determines stationary probabilities for ASEP of size $L$.
\begin{theorem}[\cite{derrida1993exact}]\label{T2} Fix $L\in\NN$. Suppose that    $\varphi$ is a  linear functional  on
$\calA_L $ such that $\varphi\left[(\mE+\mD)^L\right]\ne 0$. If  invariance equations \eqref{W+} and \eqref{V+}
hold for all  $\mA\in\calA_{L-1}$, then the stationary probabilities for the ASEP of length $L$ are
\begin{equation}
  \label{MatrixSolution+}
 P(\tau_1,\dots,\tau_L)= \frac{\varphi\left[\prod_{j=1}^L (\tau_j\mD+(1-\tau_j)\mE)\right]}{\varphi\left[(\mD+ \mE)^L\right]}.
\end{equation}
\end{theorem}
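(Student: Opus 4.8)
The plan is to verify directly that the right-hand side of \eqref{MatrixSolution+} solves the stationary equations --- the linear system obtained by setting $\frac{d}{dt}P_t=0$ in \eqref{DiffEq} --- and then to conclude by uniqueness of the stationary distribution (the chain on $\{0,1\}^L$ is irreducible since $\alpha,\beta>0$, so the stationary law is unique and any normalized solution of the stationary equations equals it). This is an adaptation of the Derrida--Evans--Hakim--Pasquier cancellation mechanism to a linear functional on the finite-dimensional space $\calA_L$. Write $\mathbf{T}_j=\tau_j\mD+(1-\tau_j)\mE$, $\mathbf{X}_\tau=\mathbf{T}_1\cdots\mathbf{T}_L$, and $f(\tau)=\varphi[\mathbf{X}_\tau]$. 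One first checks that $f$ is well defined and the denominator legitimate: normal-ordering $\mathbf{X}_\tau$ by \eqref{q-comm-Derrida} never raises the degree, so $\mathbf{X}_\tau\in\calA_L$, and since $\sum_{\tau\in\{0,1\}^L}\mathbf{X}_\tau=(\mD+\mE)^L$ one gets $\sum_\tau f(\tau)=\varphi[(\mD+\mE)^L]\neq0$. It then suffices to show, for every $\tau$, the unnormalized identity $B_L(\tau)+\sum_{k=1}^{L-1}J_k(\tau)+B_R(\tau)=0$, where $B_L,B_R$ are the left/right boundary groups and $J_k$ the hopping group across bond $k$ on the right-hand side of \eqref{DiffEq}, with $f$ in place of $P_t$.

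The three pieces are simplified using the hypotheses. Collecting the boundary terms gives $B_L(\tau)=(2\tau_1-1)\,\varphi\big[(\alpha\mE-\gamma\mD)\mathbf{T}_2\cdots\mathbf{T}_L\big]$, and since $\mathbf{T}_2\cdots\mathbf{T}_L\in\calA_{L-1}$, invariance \eqref{W+} collapses this to $(2\tau_1-1)\,f(\tau\setminus1)$, where $\tau\setminus j$ is $\tau$ with site $j$ deleted; symmetrically \eqref{V+} gives $B_R(\tau)=(1-2\tau_L)\,f(\tau\setminus L)$. For a bond $k$ with $\{\tau_k,\tau_{k+1}\}=\{0,1\}$ one finds $J_k(\tau)=\epsilon_k\,\varphi\big[\mathbf{T}_1\cdots\mathbf{T}_{k-1}(\mD\mE-q\mE\mD)\mathbf{T}_{k+2}\cdots\mathbf{T}_L\big]$ with $\epsilon_k=+1$ if $(\tau_k,\tau_{k+1})=(0,1)$ and $\epsilon_k=-1$ if $(\tau_k,\tau_{k+1})=(1,0)$, while $J_k(\tau)=0$ otherwise; applying \eqref{q-comm-Derrida} and then writing $\mD+\mE=\mathbf{T}_k+\mathbf{T}_{k+1}$ (legitimate precisely because the two sites differ) turns this into $J_k(\tau)=\epsilon_k\big(f(\tau\setminus k)+f(\tau\setminus(k+1))\big)$. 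Hence the left-hand side equals $\sum_{j=1}^L c_j(\tau)\,f(\tau\setminus j)$ with integer coefficients; introducing virtual values $\tau_0:=1-\tau_1$, $\tau_{L+1}:=1-\tau_L$ and extending the definition of $\epsilon_k$ to $k=0,L$ by the same rule absorbs the boundary coefficients into the bulk pattern, so $c_j(\tau)=\epsilon_{j-1}+\epsilon_j$ for all $1\le j\le L$.

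It remains to observe that $f(\tau\setminus j)=f(\tau\setminus j')$ whenever $j,j'$ lie in the same maximal constant block of $\tau$, because the two deleted words then literally coincide; so it is enough to check $\sum_{j\in B}c_j(\tau)=0$ for each such block $B=[a,b]$. Here the telescoping is immediate: $\sum_{j=a}^b(\epsilon_{j-1}+\epsilon_j)=\epsilon_{a-1}+2(\epsilon_a+\dots+\epsilon_{b-1})+\epsilon_b$, the middle sum vanishes because every bond interior to a constant block joins two equal sites, and the sites immediately outside the two ends of $B$ carry the value opposite to the common value $v$ on $B$ --- for interior ends because $B$ is maximal, for the two extreme ends by the choice of $\tau_0,\tau_{L+1}$ --- so $\epsilon_{a-1}=2v-1$ and $\epsilon_b=1-2v$, whence $\epsilon_{a-1}+\epsilon_b=0$. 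This yields $B_L(\tau)+\sum_kJ_k(\tau)+B_R(\tau)=0$, proving $P$ is stationary.

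The only real obstacle is bookkeeping rather than insight: one must respect that $\varphi$ is assumed defined only on $\calA_L$, so every argument fed to $\varphi$ must be checked to lie there (this is exactly why \eqref{W+} and \eqref{V+} are imposed only on $\calA_{L-1}$ and why \eqref{q-comm-Derrida} is used only inside length-$L$ words), and one must carry the boundary/virtual-site bookkeeping and the block telescoping through without sign errors. Conceptually, once the three defining relations are used the verification degenerates to a trivial telescoping over the constant blocks of $\tau$, and the substantive point --- that it suffices to test this cancellation on the finite-dimensional $\calA_L$ --- is already built into the statement.
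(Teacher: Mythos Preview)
Your proof is correct and follows essentially the same approach as the paper: both verify the stationary equations directly by using \eqref{W+}, \eqref{V+} on the boundary terms and \eqref{q-comm-Derrida} followed by $\mD+\mE=\mathbf{T}_k+\mathbf{T}_{k+1}$ on the bulk terms, reducing everything to a cancellation among the quantities $\varphi[\mathbf{X}_{\tau\setminus j}]$. The only cosmetic difference is in how that final cancellation is organized: the paper uses site-based signs $\eps_k=2\tau_k-1$ and observes that the bulk sum telescopes in $k$ to two endpoint terms that cancel against the boundaries, whereas you use bond-based signs, introduce virtual sites $\tau_0,\tau_{L+1}$, and cancel block-by-block --- this is the same telescoping viewed through a different bookkeeping, not a different idea.
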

 \begin{proof} The argument here is the same as the proof in \cite[Section 11.1]{derrida1993exact}  for the matrix version, see also \cite[Section III]{sandow1994partially}. The important aspect of that proof is that it works with fixed $L$, i.e.,
 that we do not need to use a recurrence that lowers the value of $L$ as in  \cite[formula (8)]{derrida1992exact} or in  \cite[Theorem 3.2]{Liggett-1975}.
We  reproduce  a version of argument from \citep{derrida1993exact} for completeness and clarity.

For $L=1$ it is easily seen that the stationary distribution is
 $P(1)=\frac{\alpha+\delta}{\alpha+\beta+\gamma+\delta}$ with $P(0)=1-P(1)$. On the other hand, equations \eqref{W+} and \eqref{V+} give
 $\alpha \varphi[\mE]-\gamma \varphi[\mD]=\varphi[\mI]$ and $\beta \varphi[\mD]-\delta  \varphi[\mE]=\varphi[\mI]$.
The solution is:
 $$\varphi[\mE]=
 \begin{cases}
\frac{\beta+\gamma}{\alpha \beta -\gamma\delta} \varphi[\mI] & \mbox{ if } \alpha\beta \ne \gamma\delta \\
\frac{\gamma}{\alpha+\gamma} & \mbox{ if } \alpha\beta = \gamma\delta
 \end{cases}, \quad \;
\varphi[\mD]=
 \begin{cases}
\frac{\alpha+\delta}{\alpha \beta -\gamma\delta}\varphi[\mI] & \mbox{ if } \alpha\beta \ne \gamma\delta \\
\frac{\alpha}{\alpha+\gamma} & \mbox{ if } \alpha\beta = \gamma\delta
 \end{cases},$$
 where we  note that $\varphi[\mI]=0$ when $\alpha\beta=\gamma\delta$ and in this case we also used the normalization $\varphi[\mE+\mD]=1$ to determine the values.
In both cases, a calculation shows that
$$\frac{\varphi[\mD]}{\varphi[\mE]+\varphi[\mD]}=\frac{\alpha+\delta}{\alpha+\beta+\gamma+\delta}$$
giving the correct value of $P(1)$.

Suppose  that $L\geq 2$. Denote by $p(\tau_1,\dots,\tau_L)=  \varphi\left[\prod_{j=1}^L (\tau_j\mD+(1-\tau_j)\mE)\right]$
the un-normalized probabilities.  Since by assumption the denominator in \eqref{MatrixSolution+} is non-zero, it is enough to
verify that the right hand side of \eqref{DiffEq} vanishes on $p(\tau_1,\dots,\tau_L)$.
That is, we want to show that
\begin{multline}\label{MS2}
(\delta_{\tau_1=1}-\delta_{\tau_1=0})\left[\alpha p(0,\tau_2,\dots,\tau_L)-\gamma  p(1,\tau_2,\dots,\tau_L)\right] \\
+\sum_{k=1}^{L-1}
(\delta_{\tau_k=0,\tau_{k+1}=1}-\delta_{\tau_k=1,\tau_{k+1}=0})\Big[ p(\tau_1,\dots,\tau_{k-1},1,0,\tau_{k+2},\dots,\tau_L)
\\ -qp(\tau_1,\dots,\tau_{k-1},0,1,\tau_{k+2},\dots,\tau_L)
\Big] \NL
+(\delta_{\tau_L=0}-\delta_{\tau_L=1})\left[\beta p(\tau_1,\dots,\tau_{L-1},1)-\delta p(\tau_1,\dots,\tau_{L-1},0)\right] =0.
\end{multline}
Denote
\begin{equation*}
  \label{XY}
 \mX_k=\prod_{j=1}^{k} (\tau_j\mD+(1-\tau_j)\mE)  \mbox{ and } \mY_k=\prod_{j=k}^L (\tau_j\mD+(1-\tau_j)\mE)
\end{equation*}
with the usual convention that empty products are $\mI$. Relation \eqref{q-comm-Derrida} implies that \begin{multline*}
p(\tau_1,\dots,\tau_{k-1},1,0,\tau_{k+2},\dots,\tau_L)-qp(\tau_1,\dots,\tau_{k-1},0,1,\tau_{k+2},\dots,\tau_L) \\=
\varphi[\mX_{k-1}(\mD\mE-q\mE\mD)\mY_{k+2}]=\varphi[\mX_{k-1}(\mD+\mE)\mY_{k+2}].
\end{multline*}
Noting that
$$\delta_{\tau_k=0,\tau_{k+1}=1}-\delta_{\tau_k=1,\tau_{k+1}=0}=(1-\tau_k)\tau_{k+1}-\tau_k(1-\tau_{k+1})=\tau_{k+1}-\tau_k,$$
the sum in \eqref{MS2} becomes
$$\sum_{k=1}^{L-1} (\tau_{k+1}-\tau_k)\varphi[\mX_{k}(\mD+\mE)\mY_{k+2}].$$
Since $\tau_k,\tau_{k+1}\in\{0,1\}$, the difference $\tau_{k+1}-\tau_k$ can take only three values $0,\pm 1$.
Considering all 
possible cases, we get
\begin{multline*}
 (\tau_{k+1}-\tau_k)\varphi[\mX_{k-1}(\mD+\mE)\mY_{k+2}]
\NL =(\tau_{k+1}-\tau_k)\Big(\varphi[\mX_{k-1}(\tau_{k+1}\mD+(1-\tau_{k+1})\mE)\mY_{k+2}]
\\ +
 \varphi[\mX_{k-1}(\tau_{k}\mD+(1-\tau_{k})\mE)\mY_{k+2}]\Big) \NL
 =(\tau_{k+1}-\tau_k)\left(\varphi[\mX_{k} \mY_{k+2}]+\varphi[\mX_{k-1} \mY_{k+1}]\right)
 \\
 =\eps_k \varphi[\mX_{k-1} \mY_{k+1}]-\eps_{k+1}\varphi[\mX_{k} \mY_{k+2}],
\end{multline*}
where 
 $\eps_k=\delta_{\tau_k=0}-\delta_{\tau_k=1}=\pm1$. 
 (For the last equality we need to notice that $ \mX_{k-1}
\mY_{k+1}=\mX_{k} \mY_{k+2}$ when $\tau_k=\tau_{k+1}$.)
\arxiv{Indeed,
\begin{equation}\label{*}
   \mX_{k-1}\mY_{k+1} - \mX_{k}\mY_{k+2}=\mX_{k-1}((\tau_{k+1}-\tau_k)\mD+ (\tau_k-\tau_{k+1})\mE)\mY_{k+2}. 
\end{equation}
The cases to consider are:
\begin{enumerate}
 \item $\tau_k=\tau_{k+1}$. Then $\eps_k=\eps_{k+1}$ and $\varphi[\mX_{k-1} \mY_{k+1}]=\varphi[\mX_{k} \mY_{k+2}]$ by \eqref{*}.
 \item $\tau_k=1$, $\tau_{k+1}=0$. Then  $\eps_k=-1$ and $\eps_{k+1}=1$.

 We have $(\tau_{k+1}-\tau_k)\left(\varphi[\mX_{k} \mY_{k+2}]+\varphi[\mX_{k-1} \mY_{k+1}]\right) =
 - \varphi[\mX_{k} \mY_{k+2}]-\varphi[\mX_{k-1} \mY_{k+1}]=\eps_k \varphi[\mX_{k-1} \mY_{k+1}]-\eps_{k+1}\varphi[\mX_{k} \mY_{k+2}]
 $ as required.
 \item $\tau_k=0$, $\tau_{k+1}=1$. Then   $\eps_k=1$ and $\eps_{k+1}=-1$.

  We have $(\tau_{k+1}-\tau_k)\left(\varphi[\mX_{k} \mY_{k+2}]+\varphi[\mX_{k-1} \mY_{k+1}]\right) =
  \varphi[\mX_{k} \mY_{k+2}]+\varphi[\mX_{k-1} \mY_{k+1}]=\eps_k \varphi[\mX_{k-1} \mY_{k+1}]-\eps_{k+1}\varphi[\mX_{k} \mY_{k+2}]
 $ as required.
\end{enumerate}
}
Thus
\begin{multline*}
   \sum_{k=1}^{L-1} (\tau_{k+1}-\tau_k)\varphi[\mX_{k}(\mD+\mE)\mY_{k+2}]=\sum_{k=1}^{L-1}(\eps_k \varphi[\mX_{k-1} \mY_{k+1}]-\eps_{k+1}\varphi[\mX_{k} \mY_{k+2}])
\NL =\eps_1\varphi[\mY_2]-\eps_L\varphi[\mX_{L-1}].
\end{multline*}
By invariance we have
$$\left[\alpha p(0,\tau_2,\dots,\tau_L)-\gamma  p(1,\tau_2,\dots,\tau_L)\right] =\varphi[(\alpha\mE-\gamma\mD)\mY_2]=\varphi[\mY_2]
$$
$$\left[\beta p(\tau_1,\dots,\tau_{L-1},1)-\delta p(\tau_1,\dots,\tau_{L-1},0)\right]=\varphi[\mX_{L-1}(\beta\mD-\delta\mE)]=\varphi[\mX_{L-1}].$$
So the left hand side of \eqref{MS2}  becomes
$$-\eps_1 \varphi[\mY_2]+\eps_1\varphi[\mY_2]-\eps_L\varphi[\mX_{L-1}]+\eps_L\varphi[\mX_{L-1}]=0$$
proving \eqref{MS2}.
\end{proof}

The second observation is that stationary distribution for ASEP of length $L=N+1$ is given as an explicit product of Bernoulli measures.
This  fact has been explicitly
  noted in \cite[Section 5.2]{enaud2004large}, see also \cite[Section 4.6.2]{Enaud2005} and \cite[Section 3]{Aneva-2009-Integrability}. The proof consists of verification of detailed balance equations so that individual terms on the right hand side of \eqref{DiffEq} vanish.

\begin{proposition}[\citet{enaud2004large}]\label{P1}
  Suppose $\alpha\beta=q^N\gamma\delta$.   If $L=N+1$   then the stationary  distribution of the ASEP is the product of Bernoulli measures
    $$P(\tau_1,\dots,\tau_L)= \prod_{j=1}^L p_j^{\tau_j} q_j^{1-\tau_j}$$
with $p_j=\frac{\alpha}{\alpha+\gamma q^{j-1}}$ and $q_j=1-p_j$.

\end{proposition}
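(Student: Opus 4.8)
The plan is to bypass the abstract algebra entirely and verify that the proposed product measure satisfies \emph{detailed balance} with respect to every elementary transition of the chain; then each bracketed term on the right-hand side of \eqref{DiffEq} vanishes separately, so $P$ is stationary. Since the ASEP on $N+1$ sites with $\alpha,\beta>0$ is an irreducible finite-state Markov chain (one can always empty the configuration through the left boundary and refill it by injection and nearest-neighbor hops), its stationary distribution is unique, and hence coincides with $P$.

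The whole computation is controlled by the single ratio that the choice $p_j=\tfrac{\alpha}{\alpha+\gamma q^{j-1}}$, $q_j=1-p_j$ produces, namely
$$\frac{q_j}{p_j}=\frac{1-p_j}{p_j}=\frac{\gamma q^{j-1}}{\alpha},\qquad j=1,\dots,N+1.$$
Writing the un-normalized weight $\prod_j p_j^{\tau_j}q_j^{1-\tau_j}$ as a product over the unaltered sites times the small factor at the one or two affected sites, all common factors cancel in each detailed-balance identity, leaving three families of scalar equations. At the left boundary, the flip of $\tau_1$ requires $\alpha q_1=\gamma p_1$, i.e. $q_1/p_1=\gamma/\alpha$, which is the case $j=1$ above. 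On the bulk bond $(k,k+1)$, exchanging a particle at $k$ with a hole at $k+1$ requires $P(\dots,1,0,\dots)=q\,P(\dots,0,1,\dots)$, i.e. $p_k q_{k+1}=q\,q_k p_{k+1}$, equivalently $q_{k+1}/p_{k+1}=q\cdot q_k/p_k$, which holds because $\gamma q^{k}/\alpha=q\cdot\gamma q^{k-1}/\alpha$. At the right boundary, the flip of $\tau_{N+1}$ requires $\beta p_{N+1}=\delta q_{N+1}$, i.e. $q_{N+1}/p_{N+1}=\beta/\delta$, i.e. $\gamma q^{N}/\alpha=\beta/\delta$ — and this last identity is exactly the singularity hypothesis $\alpha\beta=q^N\gamma\delta$ combined with $L=N+1$.

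There is no serious obstacle here beyond bookkeeping: the only care needed is to make the cancellation of common factors transparent, for which I would factor each configuration exactly as the prefix/suffix products used in the proof of Theorem~\ref{T2}. It is worth stressing in the write-up precisely where the hypotheses enter: the left-boundary and all bulk identities hold for the stated $p_j$ for \emph{every} choice of the rates, and it is solely the right-boundary identity that forces the last site to be site $N+1$ and the rates to satisfy $\alpha\beta=q^N\gamma\delta$. For $L<N+1$ the analogous requirement would read $\alpha\beta=q^{L-1}\gamma\delta$, which generically fails, and for $L>N+1$ no product stationary measure exists at all — consistent with the nonzero current recorded after Theorem~\ref{T-0}.
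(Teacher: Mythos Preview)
Your proposal is correct and follows essentially the same approach as the paper's own proof: both verify detailed balance so that each bracketed term on the right-hand side of \eqref{DiffEq} vanishes separately, with the left-boundary and bulk identities holding automatically for the given $p_j$ and the right-boundary identity reducing exactly to the hypothesis $\alpha\beta=q^N\gamma\delta$ at site $L=N+1$. Your additional remark on irreducibility (which is justified here since the singularity condition forces $\gamma,\delta>0$) to secure uniqueness is a welcome clarification that the paper leaves implicit.
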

\arxiv{
\begin{proof}
The stationary distribution for $L=1$ is $p_1=\frac{\alpha+\delta}{\alpha+\beta+\gamma+\delta}$.  When  $\alpha\beta=\gamma\delta$
 this answer matches $p_1=\frac{\alpha}{\alpha+\gamma }$.

 For $L\geq 2$ we can use \eqref{DiffEq}.  Inserting the product measure  into the right hand side of \eqref{DiffEq}, we get:
  $$
\alpha P_t(0,\tau_2,\dots,\tau_L)-\gamma  P_t(1,\tau_2,\dots,\tau_L)=\alpha\frac{\gamma}{\alpha+\gamma}\prod_{k>1}p_{k}^{\tau_k}q_k^{1-\tau_k}-\gamma \frac{\alpha}{\alpha+\gamma} \prod_{k>1}p_{k}^{\tau_k}q_k^{1-\tau_k}=0,
  $$
 \begin{multline*}
\left[P_t(\tau_1,\dots,\tau_{k-1},1,0,\tau_{k+2}\dots)- q P_t(\tau_1,\dots,\tau_{k-1},0,1,\tau_{k+2}\dots)
\right] \\ =\prod_{i\leq k-1}p_{i}^{\tau_i}q_i^{1-\tau_i}\frac{\alpha}{\alpha+q^{k-1}\gamma}\frac{\gamma q^k}{\alpha+q^{k}\gamma}\prod_{j\geq k+2}p_{j}^{\tau_j}q_j^{1-\tau_j}
\\
-q\prod_{i\leq k-1}p_{i}^{\tau_i}q_i^{1-\tau_i}\frac{\gamma q^{k-1}}{\alpha+q^{k-1}\gamma}\frac{\alpha}{\alpha+q^{k}\gamma}\prod_{j\geq k+2}p_{j}^{\tau_j}q_j^{1-\tau_j}
=0.
\end{multline*}
 Finally,
 \begin{multline*}\left[\beta P_t(\tau_1,\dots,\tau_{L-1},1)-\delta P_t(\tau_1,\dots,\tau_{L-1},0)\right]
 =
\prod_{i\leq L-1}p_{i}^{\tau_i}q_i^{1-\tau_i}\left(\beta\frac{ \alpha}{\alpha+q^{L-1} \gamma} -\delta \frac{ \gamma q^{L-1}}{\alpha+q^{L-1} \gamma}\right)
=0,
\end{multline*}
as $L=N+1$ and $\alpha\beta=q^N\gamma\delta$. This shows that the right hand side of \eqref{DiffEq} is zero, i.e. the product measure is stationary.
\end{proof}}

 \subsection{Construction of the pair of invariant functionals}

The construction starts with choosing a convenient basis for $\calA$, consisting of monomials in   normal order, with all factors $\ee$ occurring before $\dd$.
Such monomials appear in many
references, see e.g. \citet[pg 368]{frisch1970parastochastics}, \citet[page 137]{bozejko97qGaussian}, \citet[page 4524]{Mallick-Sandow-1997}, or \cite[Eq. (19)]{derrida-mallick-97}.
\begin{proposition}
Monomials in normal order $\{\ee^m\dd^n: m,n=0,1,\dots\}$ are a basis of $\calA$ considered as a vector space. In this basis $\calA_k$ is the span of $\{\ee^m\dd^n: m+n\leq k\}$.
\end{proposition}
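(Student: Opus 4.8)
The plan is to show two things: first, that the normal-order monomials $\{\ee^m\dd^n\}$ span $\calA$, and second, that they are linearly independent. The spanning statement follows from a reordering argument. Every element of $\calA$ is by definition a linear combination of words in $\dd,\ee$, so it suffices to show that each word $\dd^{n_1}\ee^{m_1}\cdots\dd^{n_k}\ee^{m_k}$ can be rewritten as a linear combination of normal-order monomials $\ee^m\dd^n$. The only obstruction to normal ordering is a factor $\dd\ee$ with $\dd$ to the left of $\ee$; relation \eqref{de-comute}, namely $\dd\ee=q\ee\dd+\mI$, lets us swap such a pair at the cost of lowering the total degree by $2$ in one of the resulting terms. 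So I would set up an induction on a suitable complexity measure of a word — e.g. the pair $(\text{total degree},\ \text{number of ``inversions''})$ ordered lexicographically, where an inversion is a pair of positions $i<j$ with the $i$-th letter $\dd$ and the $j$-th letter $\ee$. Applying $\dd\ee\mapsto q\ee\dd+\mI$ at the rightmost inversion strictly decreases this measure (the $q\ee\dd$ term keeps the degree but removes one inversion; the $\mI$ term drops the degree by $2$), so the induction terminates with a linear combination of inversion-free words, which are exactly the normal-order monomials.

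For linear independence, the cleanest route is to exhibit a concrete representation of $\calA$ on which the normal-order monomials act as independent operators. I would take the Fock-type representation on $\CC[\zz]$ (polynomials in one variable): let $\ee$ act as multiplication by $\zz$ and let $\dd$ act as the $q$-derivative $\Dq$ (or, equivalently, rescale so that $\dd\,p(\zz)=\frac{p(\zz)-p(q\zz)}{(1-q)\zz}\cdot(1-q)$, chosen so that $\dd\ee-q\ee\dd=\mathrm{id}$). One checks directly from $(\Dq\,\zz-q\,\zz\,\Dq)p=p$ that this gives a representation of $\calA$. Then $\ee^m\dd^n$ sends $\zz^j$ to a nonzero multiple of $\zz^{j-n+m}$ for $j\ge n$ and to $0$ for $j<n$. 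Suppose $\sum_{m,n} c_{m,n}\,\ee^m\dd^n=0$ as an operator; apply both sides to $\zz^j$ for large $j$ and read off, degree by degree in the output, that each $c_{m,n}=0$ — one can either induct on $n$ (the smallest $n$ appearing is detected by applying to $\zz^{n}$, where all terms with larger $n$ vanish) or on the output degree. This forces all coefficients to vanish, giving independence.

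Finally, for the statement about $\calA_k$: by definition $\calA_k$ is the span of normal-order monomials $\ee^m\dd^n$ with $m+n\le k$, so there is nothing to prove there beyond the observation that the reordering procedure above never raises the total degree — hence if an element of $\calA$ is a combination of words each of degree $\le k$, its normal-order expansion involves only $\ee^m\dd^n$ with $m+n\le k$, consistently with the definition. (This remark is also what makes $\calA_k$ well defined independently of the chosen presentation of an element as a combination of words.)

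The main obstacle is really just bookkeeping in the spanning step: one must pick a termination measure for the normal-ordering rewriting that genuinely decreases under $\dd\ee\mapsto q\ee\dd+\mathrm{id}$, including the lower-degree correction term, and confirm it is well-founded. The linear-independence half is routine once the $q$-derivative representation is written down, since that representation is faithful on the span of the normal-order monomials essentially by inspection.
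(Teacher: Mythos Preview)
Your proposal is correct and follows essentially the same approach as the paper: a reordering argument for spanning based on the relation $\dd\ee=q\ee\dd+\mI$, and linear independence via the representation $\ee\mapsto$ multiplication by $z$, $\dd\mapsto\Dq$ on $\CC[z]$, detected by applying to $z^{n_*}$ for the minimal $n_*$ with a nonzero coefficient. The only cosmetic difference is that the paper, instead of an abstract termination measure, writes down the explicit swap identities $\dd\ee^m\dd^n=q^m\ee^m\dd^{n+1}+[m]_q\ee^{m-1}\dd^n$ and $\ee^m\dd^n\ee=q^n\ee^{m+1}\dd^n+[n]_q\ee^m\dd^{n-1}$, which it reuses later in the construction of the functionals; for the proposition alone your lexicographic (degree, inversions) argument is equally sufficient.
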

\begin{proof}
It is easy to check by induction that $q$-commutation relation  \eqref{de-comute}
gives explicit expressions for ``swaps" that recursively convert all monomials into linear combinations of monomials in normal order. We have
   \begin{equation}\label{de^m}
   \dd\ee^m\dd^n=q^m\ee^m \dd^{n+1}+[m]_q\ee^{m-1}\dd^n.
\end{equation}
\arxiv{ Indeed,   $\dd\ee^m =q^m\ee^m \dd +[m]_q\ee^{m-1}$  holds for $m=0,1$.  For the induction step we use \eqref{de-comute} and get
$\dd\ee^{m+1} =q^m\ee^m \dd\ee +[m]_q\ee^{m}=q^m\ee^m (q\ee\dd+\mI) +[m]_q\ee^{m}=q^{m+1}\ee^{m+1}\dd+(q^m+[m]_q)\ee^m=q^{m+1}\ee^{m+1}\dd+[m+1]_q\ee^m$. To get the general case of \eqref{de^m} we just right-multiply the formula $\dd\ee^m =q^m\ee^m \dd +[m]_q\ee^{m-1}$ by $\dd^n$.
}
 Similarly, we get
   \begin{equation}\label{d^ne}
 \ee^m\dd^n\ee=q^n\ee^{m+1} \dd^{n}+[n]_q\ee^{m}\dd^{n-1}.
\end{equation}
\arxiv{As before, we only need to prove $\dd^n\ee=q^n\ee\dd+[n]_q\dd^{n-1}$. The induction step is
$\dd^{n+1}\ee=\dd^n(\dd\ee)=\dd^n (q\ee \dd+\mI)= q^{n+1}\ee \dd^{n+1}+(q[n]_q+1)\dd^n= q^{n+1}\ee \dd^{n+1}+[n+1]_q\dd^n$.
}
(Formulas \eqref{de^m} and \eqref{d^ne} holds also for $m=0$ or $n=0$ after omitting the  term with $[0]_q=0$.)

The formulas imply that
any monomial is a linear combination of monomials in normal order:
\begin{equation}\label{Ordered0}
\dd^{n_1}\ee^{m_1}\dots \dd^{n_k}\ee^{m_k}=q^I\ee^{m}\dd^n+\sum_{i+j\leq m+n-1}a_{i,j}\ee^i\dd^j,
\end{equation}
where $m=m_1+\dots m_k$, $n=n_1+\dots+n_k$ and $I=\sum_{i=1}^k\sum_{j=1}^i m_i n_j$ is the minimal number of inversions (length) of a permutation that maps
$\ee^{m}\dd^n$ into $\dd^{n_1}\ee^{m_1}\dots \dd^{n_k}\ee^{m_k}$, see e.g. \cite{bjorner2006combinatorics}.
Compare \cite[Appendix A]{Mallick-Sandow-1997}.

Formula \eqref{Ordered0} shows that monomials in normal order span $\calA$. To verify that they are linearly independent  we consider  a pair of
linear mappings (endomorphism)
   $\Dq$ and $\Z$ acting on  polynomials
$\CC[z]$  which are the $q$-derivative and the multiplication mappings:
 $$(\Dq p)(z)=\frac{p(z)-p(qz)}{(1-q)z},\;  (\Z p)(z)=zp(z).$$ The mapping $\dd\mapsto\Dq$ and $\ee\mapsto\Z$
extends to homomorphism of   algebra $\CC\langle\dd,\ee\rangle$ of polynomials in noncommuting variables $\ee,\dd$ to the algebra $End(\CC[z])$.
 It is well known that
$\Dq\Z-q\Z\Dq$ is the identity, so we get an induced homomorphism  of algebras
$$\calA=\CC\langle\dd,\ee \rangle/\mathcal{I}\to End(\CC[z]),$$
where $\mathcal{I}$ is the two sided ideal generated by $\dd\ee-q\ee\dd-\mI$.
Therefore, it is enough to prove linear independence of  $\{\Z^m\Dq^n\}$. %

To prove the latter, consider a finite sum $\SSS=\sum_{m,n\geq 0} a_{m,n}\Z^m\Dq^n =0$ and suppose that some of the coefficients $a_{m,n}$ are non-zero. Let $n_*\geq 0$ be the
smallest value of index $n$ among the non-zero coefficient $a_{m,n}$.
 We note that
\begin{equation*}
  \Z^m\Dq^n (z^{n_*})=\begin{cases}
    0, & n>n_* \\
    [n_*]_q! z^{m}, &n=n_*
  \end{cases}
\end{equation*}
Therefore, applying   $\SSS$ to the monomial $z^{n_*}\in\CC[z]$ we get
$$\sum_{m\in M} a_{m,n_*} [n_*]_q! z^{m} =0,$$ i.e., all
$\{a_{m,n_*}: m\in M\}$ are zero, in contradiction to our choice of $n_*$. The contradiction shows that all coefficients must be zero, proving linear independence.
\end{proof}

Using \eqref{DE2de} we remark that invariance conditions  \eqref{W+} and \eqref{V+} with $\mA\in\calA_k$
 can be written equivalently in our
basis of monomials in normal order as
\begin{eqnarray}
  \label{W++}  \alpha\varphi[\ee^{m+1}\dd^n]-\gamma \varphi[\dd\ee^m\dd^n]&=&\Delta(\gamma-\alpha)\varphi[\ee^{m}\dd^n], \\
 \label{V++}  -\delta\varphi[\ee^{m}\dd^n\ee]+\beta \varphi[\ee^m\dd^{n+1}]&=&\Delta(\delta-\beta)\varphi[\ee^{m}\dd^n],
\end{eqnarray}
where $m+n\leq k$ and  %
$\Delta(x)=\theta^{-1}+\theta x $.
\subsection{Recursive construction of  the functionals}
We define linear functional $\varphi=\varphi_0$ or $\varphi=\varphi_1$ by assigning its  values on all elements of the basis
$\{\ee^m\dd^n\}$
 and then extending  it to  $\calA_N$ or  $\calA$ by linearity. On the basis, we  define $\varphi$ recursively, extending it from $\calA_k$ to $\calA_{k+1}$
in such a way that the invariance properties \eqref{W+} and \eqref{V+} hold.

\subsubsection{Initial values}
  We set $\varphi_0[\mI]=1$.
   We set
\begin{equation}
  \label{phi-1-ini}
  \varphi_1[\ee^{m}\dd^n]= \begin{cases}
0  & \mbox{if  $m+n\leq N$}\\
\Pi^{-1}\alpha^n\gamma^m q^{m(m-1)/2} & \mbox{if  $m+n= N+1$},
\end{cases}
\end{equation}
where the normalizing constant $\Pi=\theta^{N+1}\prod_{j=1}^{N+1}(\alpha+q^{j-1}\gamma)$ is chosen so that
$\varphi_1\left[(\ee+\dd)^{N+1}\right]=1/\theta^{N+1}$.

Clearly, $\varphi_1\equiv 0$ on $\calA_N$.
We  need to check that our initialization of $\varphi_1$  has the properties we need for the recursive construction:
that invariance  conditions hold  for $\mA\in\calA_N$, and that  $\varphi_1$  determines the stationary measure of ASEP with $L=N+1$.

\begin{lemma}\label{L-phi12prod}    For monomials of degree $N+1$ we have
\begin{equation}
  \label{phi-prod}
  \varphi_1[\mD^{\tau_1}\mE^{1-\tau_1}\dots \mD^{\tau_{N+1}}\mE^{1-\tau_{N+1}}]=\prod_{j=1}^{N+1}p_j^{\tau_j}q_j^{1-\tau_j},
\end{equation}
where the weights $\{p_j\}$ come from stationary product measure in  Proposition \ref{P1}. Furthermore, \eqref{W+} and \eqref{V+} hold for $\mA\in\calA_N$.
\end{lemma}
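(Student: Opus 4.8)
The plan is to prove Lemma \ref{L-phi12prod} in two stages: first establish the product formula \eqref{phi-prod}, then deduce the invariance conditions \eqref{W+} and \eqref{V+} on $\calA_N$ from it. For the product formula, the key point is that the product $\mD^{\tau_1}\mE^{1-\tau_1}\cdots\mD^{\tau_{N+1}}\mE^{1-\tau_{N+1}}$ is a monomial in $\mD,\mE$ of total degree $N+1$, so when reordered into the normal-order basis $\{\ee^m\dd^n\}$ via \eqref{DE2de} and the swap relations \eqref{de^m}, \eqref{d^ne} it produces a leading term $\ee^m\dd^n$ of degree $N+1$ (where $m$ is the number of $\mE$-factors, $n$ the number of $\mD$-factors) together with lower-order terms in $\calA_N$. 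Since $\varphi_1$ vanishes on $\calA_N$ by \eqref{phi-1-ini}, only the leading monomial of degree $N+1$ contributes, and I can read off its coefficient. The cleanest route is to go through the $\dd,\ee$ generators: writing $\mD=\theta^2\mI+\theta\dd$, $\mE=\theta^2\mI+\theta\ee$, expand the product; the only term surviving under $\varphi_1$ is the one with all $\theta\dd$ and $\theta\ee$ factors, i.e. $\theta^{N+1}\,\dd^{\tau_1}\ee^{1-\tau_1}\cdots\dd^{\tau_{N+1}}\ee^{1-\tau_{N+1}}$, and by \eqref{Ordered0} this normal-orders to $\theta^{N+1}q^I\ee^m\dd^n$ plus $\calA_N$-terms, where $I=\sum_{i<j,\ \tau_i=0,\ \tau_j=1}1$ is the number of inversions (each $\ee$ to the left of a later $\dd$). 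Then $\varphi_1$ of this equals $\theta^{N+1}q^I\Pi^{-1}\alpha^n\gamma^m q^{m(m-1)/2}$, and I must check this equals $\prod_{j=1}^{N+1}p_j^{\tau_j}q_j^{1-\tau_j}=\prod_j\frac{(\alpha\ \text{or}\ \gamma q^{j-1})}{\alpha+\gamma q^{j-1}}$.

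The arithmetic identity to verify is that $q^{I}q^{m(m-1)/2}=\prod_{j:\tau_j=0}q^{j-1}$, i.e. that the total power of $q$ matches the product of the $q^{j-1}$ factors coming from the $\gamma q^{j-1}$ numerators at the $\mE$-positions. This is a standard inversion-counting bookkeeping: if the $\mE$-positions (the $j$ with $\tau_j=0$) are $j_1<j_2<\cdots<j_m$, then $\sum_\ell (j_\ell-1)=\sum_\ell(\ell-1)+\sum_\ell(j_\ell-\ell)=\binom m2+(\text{number of }\mD\text{-positions before each }\mE\text{-position})$, and the latter count is exactly the inversion number $I$ when one reorders $\dd^{\tau_1}\ee^{1-\tau_1}\cdots$ into $\ee^m\dd^n$. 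Matching the $\alpha,\gamma$ powers and the $\theta^{N+1}\Pi^{-1}=\prod_j(\alpha+\gamma q^{j-1})^{-1}$ normalization then gives \eqref{phi-prod} exactly. I would also record here that summing \eqref{phi-prod} over all $(\tau_1,\dots,\tau_{N+1})$ yields $\varphi_1[(\mD+\mE)^{N+1}]=1$, consistent with the stated normalization $\varphi_1[(\ee+\dd)^{N+1}]=\theta^{-(N+1)}$ since $\mD+\mE=2\theta^2\mI+\theta(\ee+\dd)$ — actually it is cleaner to verify the normalization directly from \eqref{phi-1-ini} via the $q$-binomial-type sum $\sum_{m=0}^{N+1}\binom{N+1}{m}_{?}$... but in fact the simplest check is just that the product measure is a probability measure, which is immediate.

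For the invariance conditions on $\calA_N$, I use the equivalent forms \eqref{W++} and \eqref{V++}. Take $\mA=\ee^m\dd^n$ with $m+n\leq N$. If $m+n\leq N-1$, every monomial appearing in \eqref{W++} and \eqref{V++} has degree $\leq N$, hence lies in $\calA_N$ where $\varphi_1\equiv0$, so both sides vanish and invariance is trivial. The only nontrivial case is $m+n=N$: then $\ee^{m+1}\dd^n$ and $\dd\ee^m\dd^n$ (for \eqref{W++}), resp. $\ee^m\dd^n\ee$ and $\ee^m\dd^{n+1}$ (for \eqref{V++}), have degree $N+1$, while $\ee^m\dd^n$ has degree $N$ and contributes $0$ on the right-hand side. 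So I must show $\alpha\varphi_1[\ee^{m+1}\dd^n]=\gamma\varphi_1[\dd\ee^m\dd^n]$ and $\beta\varphi_1[\ee^m\dd^{n+1}]=\delta\varphi_1[\ee^m\dd^n\ee]$ for $m+n=N$. Using \eqref{de^m}, $\dd\ee^m\dd^n=q^m\ee^m\dd^{n+1}+[m]_q\ee^{m-1}\dd^n$; the second term has degree $N-1<N+1$ so $\varphi_1$ kills it, giving $\varphi_1[\dd\ee^m\dd^n]=q^m\varphi_1[\ee^m\dd^{n+1}]$. Both $\ee^{m+1}\dd^n$ and $\ee^m\dd^{n+1}$ have degree $N+1$, so by \eqref{phi-1-ini}, $\varphi_1[\ee^{m+1}\dd^n]=\Pi^{-1}\alpha^n\gamma^{m+1}q^{m(m+1)/2}$ and $\varphi_1[\ee^m\dd^{n+1}]=\Pi^{-1}\alpha^{n+1}\gamma^m q^{m(m-1)/2}$; the required identity $\alpha\cdot\alpha^n\gamma^{m+1}q^{m(m+1)/2}=\gamma\cdot q^m\cdot\alpha^{n+1}\gamma^m q^{m(m-1)/2}$ reduces to $q^{m(m+1)/2}=q^m q^{m(m-1)/2}$, which is true. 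The second identity \eqref{V++} is symmetric: \eqref{d^ne} gives $\ee^m\dd^n\ee=q^n\ee^{m+1}\dd^n+[n]_q\ee^m\dd^{n-1}$, the second term is degree $N-1$ and dies, so I need $\beta\varphi_1[\ee^m\dd^{n+1}]=\delta q^n\varphi_1[\ee^{m+1}\dd^n]$, i.e. $\beta\alpha^{n+1}\gamma^m q^{m(m-1)/2}=\delta q^n\alpha^n\gamma^{m+1}q^{m(m+1)/2}$, i.e. $\beta\alpha=\delta\gamma q^n q^m=\delta\gamma q^N$ — which is exactly the singularity hypothesis $\alpha\beta=q^N\gamma\delta$. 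That is the one place the hypothesis enters, and it is the "aha" of the lemma, not an obstacle. The main (mild) obstacle is just the careful inversion-counting in the product-formula step; everything else is a short verification using \eqref{de^m}, \eqref{d^ne}, and \eqref{phi-1-ini}.
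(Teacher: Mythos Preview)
Your argument is correct in substance and close to the paper's, with one slip to fix. In your definition of the inversion count you write $I=\#\{i<j:\ \tau_i=0,\ \tau_j=1\}$ (``each $\ee$ to the left of a later $\dd$''), but that is the wrong set: those pairs are already in normal order. The correct count, which your subsequent combinatorial computation actually uses (``number of $\mD$-positions before each $\mE$-position''), is $I=\#\{i<j:\ \tau_i=1,\ \tau_j=0\}$, i.e.\ each $\dd$ to the left of a later $\ee$. With this correction your identity $q^{I}q^{m(m-1)/2}=\prod_{j:\tau_j=0}q^{j-1}$ and hence \eqref{phi-prod} are fine.

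The route differs slightly from the paper: the paper establishes \eqref{phi-prod} by first noting it holds for the normal-ordered monomials $\ee^m\dd^{N+1-m}$ (by definition of $\varphi_1$) and then checking that a single adjacent swap $\ee\dd\leftrightarrow\dd\ee$ preserves the product expression (using $\dd\ee=q\ee\dd+\mI$ and the vanishing of $\varphi_1$ on $\calA_N$), so all monomials are reached by Coxeter generators. You instead invoke \eqref{Ordered0} once and do the global inversion count. Both are short; your approach uses one more ingredient (\eqref{Ordered0}) but avoids the swap induction. Your verification of the invariance conditions \eqref{W++}--\eqref{V++} on $\calA_N$ is identical to the paper's, including the observation that the singularity hypothesis $\alpha\beta=q^N\gamma\delta$ is exactly what makes the $m+n=N$ case of \eqref{V++} close.
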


\begin{proof}
Since $\varphi_1$ vanishes on polynomials of lower degree, from \eqref{DE2de} it is easy to see that
 $$\varphi_1[\mD^{\tau_1}\mE^{1-\tau_1}\dots \mD^{\tau_{N+1}}\mE^{1-\tau_{N+1}}]=\theta^{N+1}
 \varphi_1[\dd^{\tau_1}\ee^{1-\tau_1}\dots \dd^{\tau_{N+1}}\ee^{1-\tau_{N+1}}].$$
So we only need to show that
\begin{equation}
  \label{swap} \varphi_1[\dd^{\tau_1}\ee^{1-\tau_1}\dots
\dd^{\tau_{N+1}}\ee^{1-\tau_{N+1}}]=\prod_{j=1}^{N+1}p_j^{\tau_j}q_j^{1-\tau_j}/\theta^{N+1}.
\end{equation}
It is easy to see that this formula holds true for $\varphi_1[\ee^m\dd^{N+1-m}]$. (In fact, this is how we defined $\varphi_1[\ee^m\dd^n]$ when $m+n=N+1$.)
 All  monomials of the form $\dd^{\tau_1}\ee^{1-\tau_1}\dots
\dd^{\tau_{N+1}}\ee^{1-\tau_{N+1}}$ can be obtained from monomials $\ee^m\dd^{N+1-m}$ in normal order by applying a finite number of
 adjacent transpositions, i.e., by swapping  pairs of  adjacent factors $\ee\dd$ or $\dd\ee$. (Adjacent transpositions are Coxeter generators
 for the permutation group, see e.g.  \cite{bjorner2006combinatorics}.)
So to  complete the proof we check that if  formula \eqref{swap} holds for some monomial, then it also holds after we swap the entries at adjacent locations $k,k+1$.
Suppose that
$$\theta^{N+1}\varphi_1[\mX\ee \dd \mY]=q_kp_{k+1}\Pi'=
\frac{\alpha \gamma q^{k-1}}{(\alpha+q^{k-1}\gamma)(\alpha+q^k\gamma)} \Pi',$$
with $\mX=\dd^{\tau_1}\ee^{1-\tau_1}\dots \dd^{\tau_{k-1}}\ee^{1-\tau_{k-1}}$,  $\mY=\ee^{1-\tau_{k+2}}\dots
\dd^{\tau_{N+1}}\ee^{1-\tau_{N+1}}$ and $\Pi'=\prod_{  j\ne k,k+1} p_j^{\tau_j}q_j^{1-\tau_j}$.
Multiplying this by $q$ and replacing  $q\ee\dd$ by $ \dd\ee-\mI$, we get
$$
\theta^{N+1}\varphi_1[\mX\dd \ee\mY]=
\frac{\alpha \gamma q^{k}}{(\alpha+q^{k-1}\gamma)(\alpha+q^k\gamma)} \Pi'=p_kq_{k+1}\Pi',$$
as $\varphi_1$ vanishes on lower degree monomials. So   the swap   preserves the expression
on the right hand side of \eqref{swap}. The case  when the factors at the adjacent locations are $\dd\ee$ is handled similarly.

 To verify that \eqref{W+} and \eqref{V+} hold for $\mA\in\calA_N$ we show that \eqref{W++} and
\eqref{V++} hold for $m+n\leq N$. Indeed, both sides are zero if $m+n\leq N-1$, and if $m+n=N$ then the right hand sides are
still zero. By \eqref{swap}, the left hand side of \eqref{W++} is
$$%
 \alpha^{n+1}\gamma^{m+1} \left(q^{m(m+1)/2}-q^m   q^{m(m-1)/2}\right)/\Pi=0.$$
The left hand side of \eqref{V++} is
$$
\alpha^{n}\gamma^{m} q^{m(m-1)/2}\left(\alpha\beta-q^{n+m}\gamma\delta \right)/\Pi=0$$
by singularity assumption.
\end{proof}

\subsubsection{Recursive step for $\varphi=\varphi_0$ or $\varphi_1$} Suppose $\varphi$ is defined on $\calA_k$ and that
invariance conditions hold for $\mA\in\calA_{k-1}$. If $m+n=k$ with $1\leq k<N$ (case of $\varphi_0$)  or $k\geq N+1$
(case of $\varphi_1$). Define
\begin{multline}
  \label{SolA} \varphi[\ee^{m+1}\dd^n]=
\frac{1}{(q^N-q^{m+n})\gamma\delta}\Big[
\left(\beta\Delta(\gamma-\alpha)+\gamma\Delta(\delta-\beta)q^m\right)\varphi[\ee^m\dd^n]
\\ +\gamma\delta [n]_qq^m\varphi[\ee^m\dd^{n-1}]+
\beta\gamma[m]_q\varphi[\ee^{m-1}\dd^n] \Big],
\end{multline}
 \begin{multline}  \label{SolB} \varphi[\ee^{m}\dd^{n+1}]
= \frac{1}{(q^N-q^{m+n})\gamma\delta}\Big[ \left(\alpha \Delta(\delta-\beta)+\delta \Delta(\gamma-\alpha)q^n\right)\varphi[\ee^m\dd^n]
\\+
  \alpha\delta[n]_q\varphi[\ee^m\dd^{n-1}]
  +\gamma\delta q^n[m]_q\varphi[\ee^{m-1}\dd^n]\Big],
\end{multline}
where $\Delta(x)=\theta^{-1}+\theta x$ comes from \eqref{W++} and \eqref{V++}.

\begin{remark}\label{Rem:nonsingular}
  If $\alpha\beta-q^n\gamma\delta \ne 0$ for all $n$, we  define  $\varphi_0$ on $\calA$, replacing the above recursion with
\begin{multline}
  \label{SolA0} \varphi_0[\ee^{m+1}\dd^n]=
  \frac{1}{\alpha\beta-q^{m+n} \gamma\delta}\Big[
\left(\beta\Delta(\gamma-\alpha)+\gamma\Delta(\delta-\beta)q^m\right)\varphi_0[\ee^m\dd^n]
\\+\gamma\delta [n]_qq^m\varphi_0[\ee^m\dd^{n-1}]+
\beta\gamma[m]_q\varphi_0[\ee^{m-1}\dd^n]\Big],
\end{multline}
 \begin{multline}  \label{SolB0} \varphi_0[\ee^{m}\dd^{n+1}]=
   \frac{1}{\alpha\beta-q^{m+n} \gamma\delta}\Big[\left(\alpha \Delta(\delta-\beta)+\delta \Delta(\gamma-\alpha)q^n\right)\varphi_0[\ee^m\dd^n]
   \\   +  \alpha\delta[n]_q\varphi_0[\ee^m\dd^{n-1}]+\gamma\delta q^n[m]_q\varphi_0[\ee^{m-1}\dd^n]\Big].
\end{multline}
\end{remark}

We need to make sure that this expression is well defined.
 \begin{lemma} Fix $k\ne N$. %
Suppose  $m'+n'=k+1$. Then $\varphi[\ee^{m'}\dd^{n'}]$ is well defined:  both formulas give the same answer  when $(m',n')$ can be represented   as
  $(m',n')=(m+1,n)$ and as $(m',n')=(m,n+1)$.
 \end{lemma}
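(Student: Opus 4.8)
The plan is to recognize formulas \eqref{SolA} and \eqref{SolB}, read at an index $(m,n)$ with $m+n=k$, as Cramer's rule for one $2\times2$ linear system. Using \eqref{de^m} and \eqref{d^ne} to put $\dd\ee^m\dd^n$ and $\ee^m\dd^n\ee$ into normal order, the invariance conditions \eqref{W++} and \eqref{V++} at $(m,n)$ say exactly that the pair $X=\varphi[\ee^{m+1}\dd^n]$, $Y=\varphi[\ee^m\dd^{n+1}]$ solves
\begin{align*}
\alpha X-\gamma q^m Y&=\Delta(\gamma-\alpha)\varphi[\ee^m\dd^n]+\gamma[m]_q\varphi[\ee^{m-1}\dd^n],\\
-\delta q^n X+\beta Y&=\Delta(\delta-\beta)\varphi[\ee^m\dd^n]+\delta[n]_q\varphi[\ee^m\dd^{n-1}],
\end{align*}
a system whose determinant is $\alpha\beta-q^{m+n}\gamma\delta=(q^N-q^k)\gamma\delta$, nonzero because $k\ne N$ and $\gamma,\delta>0$. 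Cramer's rule then reproduces \eqref{SolA} for $X$ and \eqref{SolB} for $Y$. Hence, for $m',n'\ge1$ with $m'+n'=k+1$, the assertion to be proved is precisely that the $X$ produced by this system at index $(m'-1,n')$ coincides with the $Y$ produced at index $(m',n'-1)$; since both of these indices have weight $k$, the common denominator $(q^N-q^k)\gamma\delta$ cancels, and what is left is an identity involving only the numbers $\varphi[\ee^i\dd^j]$ with $i+j\le k$, all of which are already defined.

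The only degree-$k$ monomials occurring in that identity are $\varphi[\ee^{m'-1}\dd^{n'}]$ and $\varphi[\ee^{m'}\dd^{n'-1}]$. The first step is to use \eqref{W++} and \eqref{V++} at index $(m'-1,n'-1)$ --- legitimate by the inductive hypothesis since $(m'-1)+(n'-1)=k-1$ --- to express these two degree-$k$ quantities through $\varphi$-values of degree $\le k-1$. Substituting, the two copies of $\Delta(\gamma-\alpha)\Delta(\delta-\beta)\,\varphi[\ee^{m'-1}\dd^{n'-1}]$ cancel each other, and one is left with a combination of the degree-$(k-1)$ values $\varphi[\ee^{m'}\dd^{n'-2}]$, $\varphi[\ee^{m'-2}\dd^{n'}]$, $\varphi[\ee^{m'-1}\dd^{n'-1}]$ together with the $\Delta$-weighted degree-$(k-2)$ values $\varphi[\ee^{m'-1}\dd^{n'-2}]$ and $\varphi[\ee^{m'-2}\dd^{n'-1}]$.

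The second step is to apply \eqref{W++} at $(m'-1,n'-2)$ and \eqref{V++} at $(m'-2,n'-1)$ (weights $k-2\le k-1$, hence legitimate; when $n'=1$ or $m'=1$ the corresponding term is simply absent, carrying a factor $[0]_q=0$) to rewrite $\varphi[\ee^{m'}\dd^{n'-2}]$ and $\varphi[\ee^{m'-2}\dd^{n'}]$. After this, the $\Delta$-weighted degree-$(k-2)$ terms cancel pairwise, the degree-$(k-3)$ term $\varphi[\ee^{m'-2}\dd^{n'-2}]$ cancels, and the whole expression collapses to a single multiple of $\gamma\delta\,\varphi[\ee^{m'-1}\dd^{n'-1}]$ whose coefficient, by $[n]_q-[n-1]_q=q^{n-1}$, equals
\[
q^{m'-1}\bigl([n']_q-[n'-1]_q\bigr)-q^{n'-1}\bigl([m']_q-[m'-1]_q\bigr)=q^{m'-1}q^{n'-1}-q^{n'-1}q^{m'-1}=0.
\]
This proves that the two formulas agree. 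The very same computation, with $q^N-q^k$ replaced throughout by $\alpha\beta-q^k\gamma\delta$, settles the analogous consistency for the non-singular recursion of Remark~\ref{Rem:nonsingular}.

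I expect the main obstacle to be purely organizational: choosing precisely the right lower-degree instances of \eqref{W++} and \eqref{V++} so that every spurious $\varphi$-value is killed in a pair, and then checking that the remaining bookkeeping really does reduce the identity to the trivial cancellation above. Once the Cramer's-rule reading of \eqref{SolA} and \eqref{SolB} is in place, no conceptual difficulty remains.
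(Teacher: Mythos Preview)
Your proposal is correct and follows essentially the same route as the paper's proof. Both arguments reduce the consistency check to the identity \eqref{consistency} between numerators and then verify it using precisely the same three lower-degree instances of the invariance conditions: \eqref{W++} and \eqref{V++} at $(m'-1,n'-1)$, \eqref{W++} at $(m'-1,n'-2)$, and \eqref{V++} at $(m'-2,n'-1)$, together with the identity $[n]_q-[n-1]_q=q^{n-1}$. Your Cramer's-rule framing of \eqref{SolA}--\eqref{SolB} and your sequential substitution are organizationally a bit different from the paper's direct regrouping of the difference into three pieces $S_1+S_2+S_3$ (each vanishing by one invariance relation), but the underlying computation is the same.
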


\begin{proof}

We proceed by contradiction. Suppose that $m,n$ is a pair of smallest degree $m+n$ where consistency fails.
This means that \eqref{W++} and \eqref{V++}  still hold for all pairs of lower degree but the solution \eqref{SolB} with
$m$ replaced by $m+1$ and $n$ replaced by $n-1$
does not match the solution in  \eqref{SolA}. We show that this cannot be true by verifying that the numerators are the same,
\begin{multline}
  \label{consistency}
  \left(\beta\Delta(\gamma-\alpha)+\gamma\Delta(\delta-\beta)q^m\right)\varphi[\ee^m\dd^n]
  \NL +\gamma\delta [n]_qq^m\varphi[\ee^m\dd^{n-1}]+\beta\gamma[m]_q\varphi[\ee^{m-1}\dd^n]
  \\=\left(\alpha \Delta(\delta-\beta)+\delta \Delta(\gamma-\alpha)q^{n-1}\right)\varphi[\ee^{m+1}\dd^{n-1}]
  \\+ \alpha\delta[n-1]_q\varphi[\ee^{m+1}\dd^{n-2}]+\gamma\delta q^{n-1}[m+1]_q\varphi[\ee^{m}\dd^{n-1}].
\end{multline}
(Formally, the term with the factor $[n-1]_q$ should be omitted when $n=1$.)
The difference between the left hand side and the right hand side of \eqref{consistency} is
\begin{multline*}
   \Delta(\gamma-\alpha)\left(\beta\varphi[\ee^m\dd^n]-\delta q^{n-1}\varphi[\ee^{m+1}\dd^{n-1}]\right)
\NL +\Delta(\delta-\beta)\left(\gamma q^m\varphi[\ee^m\dd^n]-\alpha\varphi[\ee^{m+1}\dd^{n-1}]\right)\\+
\left(\gamma\delta [n]_qq^m\varphi[\ee^m\dd^{n-1}]-\alpha\delta[n-1]_q
\varphi[\ee^{m+1}\dd^{n-2}]\right)
\NL+ \left(\beta\gamma[m]_q\varphi[\ee^{m-1}\dd^n]-\delta\gamma q^{n-1}[m+1]_q\varphi[\ee^{m}\dd^{n-1}]\right).
\end{multline*}
Since $q^m[n]_q=q^m[n-1]_q+q^{m}q^{n-1}$ and $q^{n-1}[m+1]_q=q^{n-1}[m]_q+ q^{m}q^{n-1}$,  canceling the   terms with factor $q^{m}q^{n-1}$  we rewrite the above as
\begin{multline*}
   \Delta(\gamma-\alpha)\left(\beta\varphi[\ee^m\dd^n]-\delta q^{n-1}\varphi[\ee^{m+1}\dd^{n-1}]\right)
\NL+\Delta(\delta-\beta)\left(\gamma q^m\varphi[\ee^m\dd^n]-\alpha\varphi[\ee^{m+1}\dd^{n-1}]\right)\\+
\delta[n-1]_q\left(\gamma q^m\varphi[\ee^m\dd^{n-1}]-\alpha
\varphi[\ee^{m+1}\dd^{n-2}]\right)
\NL+ \gamma[m]_q\left(\beta\varphi[\ee^{m-1}\dd^n]-\delta q^{n-1}\varphi[\ee^{m}\dd^{n-1}]\right).
\end{multline*}
 We now use \eqref{de^m} and \eqref{d^ne}. We get
\begin{multline*}
   \Delta(\gamma-\alpha)\left(\beta\varphi[\ee^m\dd^n]-\delta  \varphi[\ee^{m}\dd^{n-1}\ee]\right)+\Delta(\gamma-\alpha) \delta[n-1]_q\varphi[\ee^m\dd^{n-2}]
   \\
+\Delta(\delta-\beta)\left(\gamma \varphi[\dd\ee^m\dd^{n-1}]-\alpha\varphi[\ee^{m+1}\dd^{n-1}]\right)-\Delta(\delta-\beta)\gamma[m]_q\varphi[\ee^{m-1}\dd^{n-1}]\\+
\delta[n-1]_q\left(\gamma\varphi[\dd\ee^m\dd^{n-2}]-\alpha
\varphi[\ee^{m+1}\dd^{n-2}]\right)-\gamma\delta[n-1]_q [m]_q\varphi[\ee^{m-1}\dd^{n-2}]
\\+ \gamma[m ]_q\left(\beta\varphi[\ee^{m-1}\dd^n]-\delta  \varphi[\ee^{m-1}\dd^{n-1}\ee]\right)+\gamma\delta[m]_q[n-1]_q\varphi[\ee^{m-1}\dd^{n-2}].
\end{multline*}
After canceling $\gamma\delta[m]_q[n-1]_q\varphi[\ee^{m-1}\dd^{n-2}]$ we re-group the expression into  the sum $S_1+S_2+S_3$ with
\begin{equation*}
S_1=   \Delta(\gamma-\alpha)\left(\beta\varphi[\ee^m\dd^n]-\delta  \varphi[\ee^{m}\dd^{n-1}\ee]\right)
\NL -
   \Delta(\delta-\beta)\left(\alpha\varphi[\ee^{m+1}\dd^{n-1}]-\gamma \varphi[\dd\ee^m\dd^{n-1}]\right) ,
\end{equation*}
$$S_2=  \delta[n-1]_q\left[\Delta(\gamma-\alpha) \varphi[\ee^m\dd^{n-2}]
 -
\left(\alpha
\varphi[\ee^{m+1}\dd^{n-2}]-\gamma\varphi[\dd\ee^m\dd^{n-2}]\right)\right],
   $$
$$   S_3=
\gamma[m]_q\left[ \left(\beta\varphi[\ee^{m-1}\dd^n]-\delta  \varphi[\ee^{m-1}\dd^{n-1}\ee]\right)-\Delta(\delta-\beta)\varphi[\ee^{m-1}\dd^{n-1}] \right]
.$$
From \eqref{W++} and \eqref{V++} we see that $S_1,S_2,S_3$ are zero, proving \eqref{consistency}.
\end{proof}

Formulas \eqref{SolA} and \eqref{SolB} extend $\varphi$ from  $\calA_{k}$ to $\calA_{k+1}$.
 \begin{lemma}
   Invariance  conditions \eqref{W+} and \eqref{V+} hold for $\mA\in\calA_{k}$.
 \end{lemma}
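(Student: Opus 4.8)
The plan is to reduce the two invariance identities, at the new degree $k$, to a single $2\times 2$ linear system and then observe that formulas \eqref{SolA}--\eqref{SolB} were designed to be exactly its solution. By linearity it is enough to check \eqref{W+} and \eqref{V+}, equivalently \eqref{W++} and \eqref{V++}, on the basis monomials $\ee^m\dd^n$ with $m+n\le k$; for $m+n\le k-1$ this is the induction hypothesis, so the only thing to prove is the case $m+n=k$.

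Fix $(m,n)$ with $m+n=k$. Using the swap relations \eqref{de^m} and \eqref{d^ne} I would replace $\varphi[\dd\ee^m\dd^n]$ by $q^m\varphi[\ee^m\dd^{n+1}]+[m]_q\varphi[\ee^{m-1}\dd^n]$ and $\varphi[\ee^m\dd^n\ee]$ by $q^n\varphi[\ee^{m+1}\dd^n]+[n]_q\varphi[\ee^m\dd^{n-1}]$, so that \eqref{W++} and \eqref{V++} at $(m,n)$ become the linear system
\begin{align*}
\alpha X-\gamma q^m Y&=\Delta(\gamma-\alpha)\varphi[\ee^m\dd^n]+\gamma[m]_q\varphi[\ee^{m-1}\dd^n],\\
-\delta q^n X+\beta Y&=\Delta(\delta-\beta)\varphi[\ee^m\dd^n]+\delta[n]_q\varphi[\ee^m\dd^{n-1}]
\end{align*}
in the two degree-$(k+1)$ unknowns $X=\varphi[\ee^{m+1}\dd^n]$, $Y=\varphi[\ee^m\dd^{n+1}]$, whose right-hand sides are built only from monomials of degree $\le k$, already assigned. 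Its determinant is $\alpha\beta-\gamma\delta q^{m+n}$, which equals $(q^N-q^{m+n})\gamma\delta$ under the singularity hypothesis; this is exactly the point where $m+n=k\ne N$ is used, and it is also why the recursion for $\varphi_0$ can only run up to $\calA_N$. Solving by Cramer's rule, I would check by a short manipulation that the unique solution $(X,Y)$ is precisely the right-hand side of \eqref{SolA} and of \eqref{SolB} (and, in the setting of Remark \ref{Rem:nonsingular}, of \eqref{SolA0} and \eqref{SolB0}, the determinant then being $\alpha\beta-\gamma\delta q^{m+n}\ne 0$ by assumption). Since $X$ and $Y$ are by definition those values---and the previous lemma guarantees they are unambiguous regardless of which of the two recursions was used to compute them---the system holds, i.e. \eqref{W++} and \eqref{V++} hold at $(m,n)$.

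Ranging over all $(m,n)$ with $m+n=k$ and invoking the induction hypothesis gives \eqref{W+} and \eqref{V+} for every $\mA\in\calA_k$, which completes the recursive step. I do not expect any real obstruction here---the argument is essentially ``solution by construction''---and the only caution is the degenerate cases $m=0$ or $n=0$, where $[m]_q$ or $[n]_q$ vanishes and the corresponding terms drop simultaneously from the swap relations and from \eqref{SolA}--\eqref{SolB}, leaving the computation unchanged. Iterating this step, with the base cases furnished by Lemma \ref{L-phi12prod} for $\varphi_1$ (invariance on $\calA_N$) and by the $L=1$ computation in the proof of Theorem \ref{T2} for $\varphi_0$, then produces $\varphi_1$ on all of $\calA$ satisfying \eqref{W+}, \eqref{V+} throughout, and $\varphi_0$ on $\calA_N$ satisfying them on $\calA_{N-1}$.
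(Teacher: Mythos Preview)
Your proposal is correct and follows essentially the same approach as the paper: reduce to basis monomials of degree exactly $k$, use the swap identities \eqref{de^m}, \eqref{d^ne} to rewrite \eqref{W++}, \eqref{V++} as a $2\times 2$ linear system in $\varphi[\ee^{m+1}\dd^n]$ and $\varphi[\ee^m\dd^{n+1}]$, and observe that \eqref{SolA}--\eqref{SolB} were constructed to be its solution. The paper's proof is identical in content, just more terse (it simply displays the system and states that \eqref{SolA}, \eqref{SolB} solve it).
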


\begin{proof} We verify \eqref{W++} and \eqref{V++} with $m+n\leq k$. By inductive assumption \eqref{W++} and \eqref{V++}  hold  when $m+n< k$, so we only need to consider $m+n=k$.

Using ``swap identities" \eqref{de^m} and \eqref{d^ne} we rewrite these relations as
\begin{equation}
  \label{W+++}  \alpha\varphi[\ee^{m+1}\dd^n]-q^m\gamma \varphi[\ee^m\dd^{n+1}]=\Delta(\gamma-\alpha)\varphi[\ee^{m}\dd^n]
  + \gamma[m]_q\varphi[\ee^{m-1}\dd^{n}]
\end{equation}
and
\begin{equation}
 \label{V+++}  -q^n\delta\varphi[\ee^{m+1}\dd^n]+\beta \varphi[\ee^m\dd^{n+1}]=\Delta(\delta-\beta)\varphi[\ee^{m}\dd^n]
 \NL +\delta [n]_q\varphi[\ee^m\dd^{n-1}],
\end{equation}
with the solution given in  \eqref{SolA} and \eqref{SolB}. By linearity this establishes invariance conditions for  all $\mA\in\calA_k$.
\end{proof}

 \subsection{Signs of $\varphi$ on monomials}
To verify that $\varphi[(\mE+\mD)^L]\ne 0$, we will   need the following version of a formula discussed in \cite[Appendix A]{Mallick-Sandow-1997}.
\begin{lemma}
If
$\mX= \mE^{m_1}\dots \mD^{n_k}\mE^{m_k}\mD^{n_k}$ is a monomial  of degree $m+n$ with $m=m_1+\dots+m_k$, $n=n_1+ \dots+n_k$, then there exist non-negative integers $b_j,c_j$ and monomials $\mY_j,\mZ_j $ of degree $m+n$ such that
\begin{equation}\label{ED-expansion}
  \mX\mE =q^n \mE\mX+\sum_j b_j \mY_j \mbox{ and }  \mD\mX  =q^m  \mX\mD+\sum_j c_j \mZ_j.
\end{equation}
\end{lemma}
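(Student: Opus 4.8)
The plan is to prove the first identity, $\mX\mE = q^n\mE\mX + \sum_j b_j\mY_j$, by induction on the number of letters in the word $\mX$, and then to deduce the second identity from it by a symmetry argument. The idea behind the first identity is to push the rightmost factor $\mE$ (the one we just appended) all the way to the left end of the word, one letter at a time, using relation \eqref{q-comm-Derrida} in the form $\mD\mE = q\mE\mD + \mD + \mE$: the marked $\mE$ slides past an $\mE$ for free, and slides past a $\mD$ at the cost of a factor $q$ together with two ``leftover'' monomials (the $\mD$ and the $\mE$ on the right of the relation), each shorter by one letter. Carrying this out along the ``main branch'' only, the main term becomes $q^{n}\mE\mX$ (one factor $q$ for each of the $n$ occurrences of $\mD$ in $\mX$, with the remaining letters left in place), while every leftover monomial has total degree exactly $m+n$ and acquires a coefficient that is a non-negative power of $q$; thus the $b_j$ are polynomials in $q$ with non-negative integer coefficients (in particular $b_j\geq 0$ for $0\le q<1$, which is all the subsequent sign analysis uses). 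Note that the $\mY_j$ need not be in normal order, and this is exactly what makes ``degree $m+n$'' an equality rather than merely an upper bound.

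Formally: for $\mX=\mI$ the identity is trivial with no leftover terms. If $\mX=\mE\mX'$, left-multiply the inductive identity for $\mX'$ by $\mE$; nothing changes in the bookkeeping and each leftover $\mE\mY_j$ again has degree $m+n$. If $\mX=\mD\mX'$, so that $\mX'$ has $\mD$-degree $n-1$, apply the inductive identity for $\mX'$, left-multiply by $\mD$, and replace the emerging factor $\mD\mE$ by $q\mE\mD+\mD+\mE$:
\[
\mD\mX'\mE \;=\; q^{n-1}\mD\mE\mX' + \sum_j b_j\,\mD\mY_j
          \;=\; q^{n}\mE\mX + q^{n-1}\mX + q^{n-1}\mE\mX' + \sum_j b_j\,\mD\mY_j,
\]
using $\mD\mE=q\mE\mD+\mD+\mE$ and $\mX=\mD\mX'$. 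The first term is the desired $q^n\mE\mX$, and the monomials $\mX$, $\mE\mX'$, $\mD\mY_j$ all have total degree $m+n$ with coefficients in $\NN[q]$. The one point needing care here is precisely this degree bookkeeping — the leftover monomials must have degree $m+n$, not less — and it holds because we continue reducing only along the main ($q\mE\mD$) branch and never revisit a leftover monomial; I expect this to be the only mild obstacle in the whole argument.

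For the second identity, let $\sigma$ be the linear anti-automorphism of $\calA$ (so $\sigma$ reverses products) determined by $\sigma(\mD)=\mE$ and $\sigma(\mE)=\mD$. It is well defined on the quotient algebra because it fixes the relator $\mD\mE-q\mE\mD-\mD-\mE$ and hence preserves the defining two-sided ideal. Since $\sigma(\mX)$ has $\mD$-degree $m$, the first identity applied to $\sigma(\mX)$ reads $\sigma(\mX)\mE = q^{m}\mE\,\sigma(\mX)+\sum_j b_j\mY_j$; applying $\sigma$ and using that it reverses products yields $\mD\mX = q^{m}\mX\mD+\sum_j b_j\,\sigma(\mY_j)$, with each $\sigma(\mY_j)$ again a monomial of degree $m+n$ and the same non-negative coefficients. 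This is the claimed expansion, with $\mZ_j=\sigma(\mY_j)$ and $c_j=b_j$.
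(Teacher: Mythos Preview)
Your proof is correct and somewhat cleaner than the paper's. Both arguments push the trailing $\mE$ leftward through $\mX$ by induction; the paper inducts on the number of blocks $k$ in $\mX=\mE^{m_1}\mD^{n_1}\cdots\mE^{m_k}\mD^{n_k}$, first deriving the closed formulas $\mD^{n}\mE=q^{n}\mE\mD^{n}+\sum_{j=0}^{n-1}\mD^{n-1-j}(\mE+\mD)\mD^{j}$ and $\mD\mE^{m}=q^{m}\mE^{m}\mD+\sum_{j=0}^{m-1}\mE^{j}(\mE+\mD)\mE^{m-1-j}$ and then left- or right-multiplying by the remaining blocks, whereas you induct letter by letter. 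Your induction is slightly more economical since it avoids the block formulas entirely. The paper proves the two identities in parallel; your use of the anti-involution $\sigma$ (swap $\mD\leftrightarrow\mE$ and reverse) to transport the first identity to the second is a genuine shortcut the paper does not take, and it is sound because $\sigma$ fixes the relator $\mD\mE-q\mE\mD-\mD-\mE$.

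You are also right to flag that the coefficients produced are in $\NN[q]$ rather than in $\NN$: the paper's own inductive step multiplies the previously obtained leftover terms by $q^{n_{k+1}}$, so its proof likewise yields only non-negative elements of $\NN[q]$, not integers. As you note, non-negativity is all that the subsequent sign argument in Proposition~\ref{P-positivity} uses, so this discrepancy with the stated lemma is cosmetic.
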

\begin{proof}
   Denote $\mS=\mE+\mD$. Suppose that formulas hold for $\mX$ with $k\geq 0$ factors. Then for $n=n_{k+1}$ and $m=m_{0}$ by repeated applications of \eqref{q-comm-Derrida} we get
\begin{equation}
   \label{ED-expansion1}
   \mD^n\mE=q\mD^{n-1}\mE\mD+\mD^{n-1}\mS=q^2\mD^{n-2}\mE\mD^2+\mD^{n-2}\mS\mD+\mD^{n-1}\mS=
   \dots
\NL  =q^n\mE\mD^n+\sum_{j=0}^{n-1} \mD^{n-1-j}\mS\mD^{j}
\end{equation}
   and
\begin{equation}
   \label{ED-expansion2}
      \mD\mE^m=q \mE \mD\mE^{m-1}+\mS\mE^{m-1}=q^2 \mE^2 \mD\mE^{m-2}+\mE\mS\mE^{m-2}+\mS\mE^{m-1}=\dots
    \NL  =q^m\mE^m\mD+\sum_{j=0}^{m-1} \mE^j\mS\mE^{m-1-j}.
\end{equation}
Clearly,   $\mD^{n-1-j}\mS\mD^{j}=\mD^{n-j-1}\mE \mD^{j}+\mD^{n}$ is the sum of monomials of degree $n$ and  $\mE^j\mS\mE^{m-1-j}=\mE^{m}+\mE^j\mD\mE^{m-1-j}$ is the sum  of monomials  of degree $m$.  We now multiply \eqref{ED-expansion1} by $\mX\mE^{m_{k+1}}$ from the left and use the induction assumption. Similarly, we multiply \eqref{ED-expansion2} by $\mD^{n_0}\mX$ from the right and use the induction assumption.  This establishes \eqref{ED-expansion} by induction.
\end{proof}
\begin{proposition}\label{P-positivity}
  If $\alpha\beta=q^N\gamma\delta$ then
  \begin{enumerate}
    \item $(-1)^L\varphi_0[(\mE+\mD)^L]>0$ for $L=0,\dots,N$
    \item $\varphi_1[(\mE+\mD)^L]>0$ for $L\geq N+1$.
  \end{enumerate}

\end{proposition}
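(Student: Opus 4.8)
The plan is to pin down the sign of $\varphi$ on a basis of monomials and then assemble $\varphi\left[(\mE+\mD)^L\right]$ from it; the two parts are handled most cleanly in different sets of generators.

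\textbf{Part (2).} First I would show, by induction on the degree $d=m+n$, that $\varphi_1[\mE^m\mD^n]\ge 0$ for all $m,n\ge 0$. For $d\le N$ this is $0$, and for $d=N+1$ it equals $\prod_{j=1}^{m}q_j\prod_{j=m+1}^{N+1}p_j\ge 0$ by Lemma~\ref{L-phi12prod} applied to $\tau=(0,\dots,0,1,\dots,1)$ with $m$ zeros. For the inductive step (computing degree $d+1\ge N+2$), apply the invariance relations \eqref{W+} and \eqref{V+} with $\mA=\mE^m\mD^n$, $m+n=d$, and use \eqref{q-comm-Derrida} — in the form \eqref{ED-expansion1}--\eqref{ED-expansion2} together with the normal ordering it generates — to write $\mD\mE^m\mD^n=q^m\mE^m\mD^{n+1}+(\text{a nonnegative combination of }\mE^i\mD^j,\ i+j\le d)$ and $\mE^m\mD^n\mE=q^n\mE^{m+1}\mD^n+(\text{same})$. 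Then \eqref{W+}--\eqref{V+} become the $2\times2$ linear system $\alpha\varphi_1[\mE^{m+1}\mD^n]-\gamma q^m\varphi_1[\mE^m\mD^{n+1}]=P$ and $\beta\varphi_1[\mE^m\mD^{n+1}]-\delta q^n\varphi_1[\mE^{m+1}\mD^n]=Q$ with $P,Q$ nonnegative combinations of $\varphi_1[\mE^i\mD^j]$, $i+j\le d$ (hence $\ge 0$ by the inductive hypothesis). Its determinant is $\alpha\beta-q^{d}\gamma\delta=\gamma\delta(q^N-q^{d})>0$ since $d\ge N+1>N$, and the inverse matrix has nonnegative entries, so both $\varphi_1[\mE^{m+1}\mD^n]$ and $\varphi_1[\mE^m\mD^{n+1}]$ are $\ge 0$ (consistency of the two expressions for a fixed $\mE^{m'}\mD^{n'}$ is already part of the construction). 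Since $(\mE+\mD)^L$ is a nonnegative combination of the $\mE^m\mD^n$ — again by \eqref{q-comm-Derrida}, whose coefficients are nonnegative — this gives $\varphi_1[\mS^L]\ge 0$. For strict positivity, \eqref{W+} with $\mA=\mS^{L-1}$ yields $\varphi_1[\mS^L]=\tfrac1\alpha\varphi_1[\mS^{L-1}]+\tfrac{\alpha+\gamma}{\alpha}\varphi_1[\mD\mS^{L-1}]$, where $\varphi_1[\mD\mS^{L-1}]\ge 0$ by the monomial bound; starting from $\varphi_1[\mS^{N+1}]=\sum_\tau\prod_j p_j^{\tau_j}q_j^{1-\tau_j}=1$ (Lemma~\ref{L-phi12prod}), induction on $L$ gives $\varphi_1[\mS^L]>0$ for all $L\ge N+1$.

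\textbf{Part (1).} Here I would carry out the parallel induction in the $\dd,\ee$ generators, proving $(-1)^{m+n}\varphi_0[\ee^m\dd^n]\ge 0$ for $m+n\le N$ straight from the recursion \eqref{SolA}--\eqref{SolB}. Three features make the signs line up. (i) Each of \eqref{SolA}--\eqref{SolB} couples $\varphi_0$ on degree $m+n+1$ only to degrees $m+n$ and $m+n-1$, so under the inductive hypothesis the three terms on the right carry signs $(-1)^{m+n}$, $(-1)^{m+n-1}$, $(-1)^{m+n-1}$. (ii) A short computation using $\alpha\beta=q^N\gamma\delta$ gives $\beta\Delta(\gamma-\alpha)+q^m\gamma\Delta(\delta-\beta)=\tfrac1{\sqrt{1-q}}\bigl(\beta(1-q)+\beta\gamma(1-q^m)+q^m\gamma(1-q)+\gamma\delta q^m(1-q^{N-m})\bigr)>0$ whenever $m<N$, and likewise the coefficient $\alpha\Delta(\delta-\beta)+q^n\delta\Delta(\gamma-\alpha)$ in \eqref{SolB} is positive when $n<N$. (iii) The denominator $(q^N-q^{m+n})\gamma\delta$ is negative for $m+n<N$, which supplies exactly the sign flip from $(-1)^{m+n}$ to $(-1)^{m+n+1}$. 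Finally one transfers the bound to $\mS=\mE+\mD=2\theta^2\mI+\theta(\ee+\dd)$: since $\ee+\dd$ commutes with $\mI$ and with itself, $\mS^L=\sum_{k=0}^{L}\binom Lk 2^{L-k}\theta^{2L-k}(\ee+\dd)^k$, and normal-ordering each $(\ee+\dd)^k$ (which only produces $\ee^m\dd^n$ with $m+n\equiv k\pmod 2$) yields $(-1)^k\varphi_0[(\ee+\dd)^k]\ge 0$; one then argues that this alternating sum has the sign $(-1)^L$ of its top term.

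\textbf{The main obstacle} is the quantitative input hidden in (iii): the bracket on the right of \eqref{SolA} has sign $(-1)^{m+n}$ only if its leading term dominates, i.e.\ $\bigl(\beta\Delta(\gamma-\alpha)+q^m\gamma\Delta(\delta-\beta)\bigr)|\varphi_0[\ee^m\dd^n]|\ge \gamma\delta q^m[n]_q|\varphi_0[\ee^m\dd^{n-1}]|+\beta\gamma[m]_q|\varphi_0[\ee^{m-1}\dd^n]|$, and similarly for \eqref{SolB}; the same issue reappears when passing from the $\ee^m\dd^n$ to $\varphi_0[\mS^L]$, where the $k=L$ term must outweigh the rest. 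Both need a lower bound on $|\varphi_0[\ee^m\dd^n]|$, so the induction has to be strengthened to carry two‑sided estimates on $|\varphi_0[\ee^m\dd^n]|$ (comparing it with a suitable $q$‑factorial expression in $m,n$), and establishing these estimates is the technical core of Part~(1). As a shortcut one may instead observe that, by Theorem~\ref{T2}, whenever $\varphi_0[\mS^L]\ne0$ the numbers $\varphi_0\bigl[\prod_{j}(\tau_j\mD+(1-\tau_j)\mE)\bigr]$, $\tau\in\{0,1\}^L$, are proportional to the strictly positive stationary measure of the length‑$L$ ASEP, so they all share the sign of $\varphi_0[\mS^L]$ and $\varphi_0[\mS^{L-1}]/\varphi_0[\mS^L]$ is the stationary current $J$; it would then suffice to show $\varphi_0[\mS^L]\ne0$ and $J<0$ for $1\le L\le N$ — with $J=\tfrac{\alpha\beta-\gamma\delta}{\alpha+\beta+\gamma+\delta}<0$ at $L=1$ since $\alpha\beta=q^N\gamma\delta<\gamma\delta$ — reducing everything to the direction of the current and the non‑vanishing of the partition functions $\varphi_0[\mS^L]$.
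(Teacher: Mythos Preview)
Your Part~(2) is correct and essentially matches the paper's argument: set up the $2\times 2$ system from \eqref{W+}--\eqref{V+} after pushing the extra $\mD$ (resp.\ $\mE$) through, observe that the right-hand sides are nonnegative combinations of values already controlled, and use that the determinant $\alpha\beta-q^{d}\gamma\delta>0$ for $d\ge N+1$. The paper runs this over \emph{all} $\mE,\mD$-monomials rather than just the normal-ordered ones, which yields strict positivity in one stroke, but your variant with the extra step $\varphi_1[\mS^L]=\tfrac1\alpha\varphi_1[\mS^{L-1}]+\tfrac{\alpha+\gamma}{\alpha}\varphi_1[\mD\mS^{L-1}]$ is fine.

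Part~(1), however, has a genuine gap, and you have in fact located it yourself. Working in the $\ee,\dd$ generators, the recursions \eqref{SolA}--\eqref{SolB} couple degree $d{+}1$ to degrees $d$ \emph{and} $d{-}1$, because $\dd\ee-q\ee\dd=\mI$ drops degree by~$2$. Under the inductive hypothesis these contributions carry opposite signs, so the bracket on the right of \eqref{SolA} has sign $(-1)^d$ only if the leading term dominates --- and you never establish that. The same cancellation reappears when you expand $\mS^L=\sum_k\binom{L}{k}2^{L-k}\theta^{2L-k}(\ee+\dd)^k$: the terms alternate in sign with $L-k$, and ``the $k=L$ term outweighs the rest'' is exactly the missing quantitative input. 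Your proposed shortcut via Theorem~\ref{T2} is circular: that theorem requires $\varphi_0[\mS^L]\ne 0$ as a hypothesis, which is precisely what you are trying to prove.

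The paper sidesteps all of this by running the \emph{same} argument as in Part~(2) but in the $\mE,\mD$ generators for Part~(1) as well. The point is that $\mD\mE-q\mE\mD=\mD+\mE$ is \emph{degree-preserving}: the lemma \eqref{ED-expansion} gives $\mD\mX=q^m\mX\mD+\sum_j c_j\mZ_j$ and $\mX\mE=q^n\mE\mX+\sum_j b_j\mY_j$ with $b_j,c_j\ge 0$ and the $\mY_j,\mZ_j$ of the \emph{same} degree as $\mX$. Hence in the $2\times 2$ system \eqref{sign-check} the right-hand sides $d_1,d_2$ are combinations of values at degree exactly $L$, all of sign $(-1)^L$ by the inductive hypothesis, with no lower-degree contamination. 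The determinant $\alpha\beta-q^L\gamma\delta=\gamma\delta(q^N-q^L)<0$ for $L<N$ then supplies the single sign flip to $(-1)^{L+1}$. No quantitative estimates are needed; the choice of generators does all the work.
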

\begin{remark}\label{Rem-signs}
 An inspection of our argument   shows that in the non-singular case with $\alpha\beta \ne q^n \gamma\delta$ for all $n$, we have $\varphi_0[(\mE+\mD)^L]\ne0$ for all $L$. More precisely, define $M=\min\{n\geq 0: \alpha\beta>q^n \gamma\delta\}$, with  $M=0$ when $\alpha\beta>\gamma\delta$.
Then %
    \begin{enumerate}
    \item $(-1)^L\varphi_0[(\mE+\mD)^L]>0$ for $0\leq L\leq M$
    \item $(-1)^M\varphi_0[(\mE+\mD)^L]>0$ for $L\geq M+1$.
  \end{enumerate}
In particular, the   current $J=\varphi_0[(\mE+\mD)^{L-1}]/\varphi_0[(\mE+\mD)^{L}]$ undergoes reversal as the system size increases: $J<0$ for $1\leq L\leq M$ and $J>0$ for $L\geq M+1$.
\end{remark}
\begin{proof}  Both proofs are similar and consist of showing that  for $\varphi=\varphi_0$ and for $\varphi=\varphi_1$
 the value $\varphi[\mX]$ on a monomial  $\mX=\mE^{m_1}\mD^{n_1}\dots \mE^{m_k}\mD^{n_k}$ is  real, and that for all monomials $\mX$ of the same degree $L=m+n$
 with   $m=m_1+\dots +m_k$, $n=n_1+\dots+n_k$,  the sign of $\varphi[\mX]$ is the same.
We begin  with the recursive proof for functional $\varphi=\varphi_0$ where the signs alternate with $L$. Then we will indicate how to modify the proof for $\varphi=\varphi_1$ where the signs are all positive.

For $L=0$  we have $(-1)^L\varphi[\mX]=1>0$  by the initialization of $\varphi_0$.
Suppose that  $(-1)^L\varphi[\mX]>0$ holds for all monomials $\mX=\mE^{m_1}\mD^{n_1}\dots \mE^{m_k}\mD^{n_k}$  with $m=m_1+\dots+m_k=m$, $n=n_1+\dots+n_k=n$ of degree $L=m+n <N$. %

A monomial $\mY$   of degree $L+1$ arises from a monomial $\mX$ of degree $L$ in one  of the following ways:
 $\mY=\mE\mX$,  $\mY=\mX\mD$,  $\mY=\mD\mX$, or   $\mY= \mX\mE$. Our goal is to show that in each of these cases
 $\varphi[\mY]$ is a real number of the opposite sign than $\varphi[\mX]$.

Cases   $\mY=\mE\mX$  and $\mY=\mX\mD$ are handled together, and are needed for the other two cases.
From \eqref{W+} and \eqref{V+} applied with $\mA=\mX$ %
we get
  $$
  \alpha\varphi[\mE\mX]-\gamma\varphi[\mD\mX]=\varphi[\mX] \mbox{ and }
  -\delta\varphi[\mX\mE]+\beta\varphi[\mX\mD]=\varphi[\mX].
  $$
Applying \eqref{ED-expansion} to $\mD\mX$ and  to $\mX\mE$  we get
\begin{eqnarray*}
    \alpha\varphi[\mE\mX]-q^{m}\gamma\varphi[ \mX\mD]=d_1 \\
  -q^{n}\delta\varphi[\mE\mX ]+\beta\varphi[\mX\mD]=d_2,
\end{eqnarray*}
where  by inductive assumption $d_1 =\varphi[\mX]+\gamma\sum_{j}c_j\varphi[\mZ_j]$ is the sum of non-zero real numbers  of the same sign $(-1)^L$, and similarly $ d_2 $ is real and has the  sign $(-1)^L$.
 The solution of this system is
\begin{equation}
  \label{sign-check}
  \varphi[\mE\mX]=\frac{\left|\begin{matrix}
  d_1 &-q^m \gamma \\
  d_2 & \beta
\end{matrix}\right|}{\left|\begin{matrix}
\alpha & -q^m\gamma \\
-q^n \delta & \beta
\end{matrix}\right|} \mbox{ and }  \varphi[ \mX\mD]=\frac{\left|\begin{matrix}
\alpha & d_1\gamma \\
-q^n \delta & d_2
\end{matrix}\right|}{\left|\begin{matrix}
\alpha & -q^m\gamma \\
-q^n \delta & \beta
\end{matrix}\right|}.
\end{equation}
Since the numerators have  sign $(-1)^L$ and the denominator $\alpha\beta-q^{L}\gamma\delta=\gamma\delta(q^N-q^L)<0$,
this establishes the conclusion for all monomials $\mY=\mE^{m_1+1}\mD^{n_1}\dots \mE^{m_k}\mD^{n_k}$ and
 $\mY=\mE^{m_1}\mD^{n_1}\dots \mE^{m_k}\mD^{n_k+1}$ of degree $m+n+1=L+1$.

To handle the case  $\mY=\mD\mX$, we use already established  information about the sign of monomial $\varphi[\mE\mX]$. Using \eqref{W+}, we see that the sign of
$\gamma\varphi[\mD\mX]=\alpha \varphi[\mE\mX]-\varphi[\mX]$  is $(-1)^{L+1}$,
and similarly \eqref{V+}  determines the sign of $\delta\varphi[\mX\mE]=\beta\varphi[\mX\mD]-\varphi[\mX]$ as $(-1)^{L+1}$.

The proof for $\varphi=\varphi_1$ is similar, starting with formula \eqref{phi-prod} which establishes positivity for $L=N+1$.   We then use \eqref{sign-check} to prove  that $\varphi_1[\mE\mX]>0$  and $\varphi_1[\mX\mD]>0$,
noting that in the case of $\varphi_1$ we have $d_1,d_2>0$ and that the denominator $\alpha\beta-q^{L}\gamma\delta=\gamma\delta(q^N-q^L)>0$ as $L\geq N+1$. Finally,  applying $\varphi_1$ to \eqref{ED-expansion} we see that  $\varphi_1[\mD\mX]>0$  and $\varphi_1[\mX\mE]>0$.
\end{proof}

\begin{proof}[Conclusion of proof of Theorem \ref{T-0}]
Functional $\varphi_0$ satisfies invariance conditions \eqref{V+} and \eqref{W+}, and   $\varphi_0\left[(\mE+\mD)^L\right]\ne 0$  for   $L\leq N$ by Proposition \ref{P-positivity}.
Therefore, by Theorem \ref{T2} we get \eqref{phi2P} for $L\leq N$. In the non-singular case, by Remark \ref{Rem:nonsingular} functional $\varphi_0$ is defined on  $\calA$ and by  Remark \ref{Rem-signs} we have  $\varphi_0[(\mE+\mD)^L]\ne 0$ for all $L$, so Theorem \ref{T2} applies.

Functional $\varphi_1$ satisfies invariance conditions \eqref{V+} and \eqref{W+} by Lemma \ref{L-phi12prod} and construction. Proposition \ref{P-positivity} states that $\varphi_1\left[(\mE+\mD)^L\right]>0$  for  $L\geq N+1$. Therefore, by Theorem \ref{T2} we get \eqref{phi2P}  for all   $L\geq N+2$.
Proposition \ref{P1} gives the stationary distribution for $L=N+1$, and Lemma \ref{L-phi12prod} shows that this case also arises from \eqref{phi2P}.

\end{proof}

\section{Proof of Theorem \ref{T3}}\label{sect:ProofT3}
Denote $\varphi_{k,n}=\varphi[\ee^k\dd^n]$, where $\varphi$ is either $\varphi_0$ or $\varphi_1$. (The latter is needed only for the second part of   Theorem \ref{T-Hermit2}.)
We first rewrite \eqref{SolA0} and \eqref{SolB0} using  Askey-Wilson parameters \eqref{AW-parameters}. After a calculation we get
\begin{multline}
  \label{SolA1}
\varphi_{m+1,n}=\frac{1}{1-abcd q^{m+n}}\Big(
\theta\left(c+d- cd (a+b)q^m\right)\varphi_{m,n}
\NL -   cd [m]_q\varphi_{m-1,n}+abcd q^m [n]_q\varphi_{m,n-1}
\Big),
\end{multline}
\begin{multline}
  \label{SolB1}
\varphi_{m,n+1}=\frac{1}{1-abcd q^{m+n}}\Big(\theta\left(a+b - ab(c+d)q^n\right)\varphi_{m,n}
\NL -ab [n]_q \varphi_{m,n-1}+abcd q^n [m]_q\varphi_{m-1,n}
\Big).
\end{multline}
\arxiv{In fact, it might be simpler to  use \eqref{AW-parameters} to rewrite \eqref{W+++} and \eqref{V+++} and then solve the
system of equations.

Notice that with \eqref{AW-parameters} equations \eqref{W++} and \eqref{V++} become
$$
\varphi[\ee^{m+1}\dd^n]+cd\varphi[\dd\ee^m\dd^n]=\theta (c+d)\varphi[\ee^m\dd^n],
$$
$$
ab \varphi[\ee^{m}\dd^n\ee]+ \varphi[\ee^m\dd^{n+1}]=\theta (a+b)\varphi[\ee^m\dd^n].
$$
}
Our proof relies heavily on monic continuous $q$-Hermite polynomials  defined by the three step recurrence
\begin{equation}\label{EQ-H}
x H_n(x)=H_{n+1}(x)+[n]_qH_{n-1}(x)
\end{equation}
with initial values $H_0(x)=1$ and $H_{-1}(x)=0$. These polynomials are convenient  because when   evaluated at $\ee+\dd$  they have
explicit expansion in the basis of monomials in normal order.

Somewhat more generally, for   $t\in\CC$ we consider polynomials $H_n(x;t)$ defined by the three step recurrence
\begin{equation}\label{q-H(t) recursion}
x H_n(x;t)=H_{n+1}(x;t)+t[n]_qH_{n-1}(x;t)
\end{equation}
with initial values $H_0(x;t)=1$ and $H_{-1}(x;t)=0$.
For $t\ne0$ these two families of polynomials
  are   related by a simple formula $H_n(x;t^2)=t^{n }H_n(x/t)$.

The following version of \cite[Corollary 2.8]{bozejko97qGaussian} follows  from \eqref{q-H(t) recursion}.
\begin{lemma}\label{L-BKS}
$$H_n(t\ee+\dd\;;t)=\sum_{k=0}^n \Binq{n}{k} t^k\ee^k\dd^{n-k}.$$
\end{lemma}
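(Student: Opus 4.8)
The plan is to prove the identity by induction on $n$, using the three term recurrence \eqref{q-H(t) recursion} evaluated at $x=t\ee+\dd$ as the engine. The base cases are immediate: for $n=0$ both sides equal $\mI$, and for $n=1$ we have $H_1(x;t)=x$ (since $[0]_q=0$), so both sides equal $t\ee+\dd$. For the inductive step, assume the formula holds for $n$ and $n-1$; then \eqref{q-H(t) recursion} gives
\[
H_{n+1}(t\ee+\dd\;;t)=(t\ee+\dd)\,H_n(t\ee+\dd\;;t)-t[n]_q\,H_{n-1}(t\ee+\dd\;;t),
\]
and I would substitute the inductive expansions and expand everything into the normal-order basis $\{\ee^k\dd^m\}$.

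Left multiplication by $t\ee$ merely shifts exponents, $t\ee\cdot\ee^k\dd^{n-k}=t\,\ee^{k+1}\dd^{n-k}$, while left multiplication by $\dd$ is handled by the swap identity \eqref{de^m}, which gives $\dd\ee^k\dd^{n-k}=q^k\ee^k\dd^{n-k+1}+[k]_q\ee^{k-1}\dd^{n-k}$. After re-indexing the sums, $(t\ee+\dd)H_n(t\ee+\dd\;;t)$ splits into three pieces: a degree-$(n+1)$ piece $\sum_k\Binq{n}{k-1}t^k\ee^k\dd^{n+1-k}$ from the $\ee$-shift, a degree-$(n+1)$ piece $\sum_k q^k\Binq{n}{k}t^k\ee^k\dd^{n+1-k}$ from the leading term of the swap, and a degree-$(n-1)$ piece $\sum_j\Binq{n}{j+1}[j+1]_q\,t^{j+1}\ee^j\dd^{n-1-j}$ from the $[k]_q$ term of the swap.

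The key point is that this last piece is exactly $t[n]_q H_{n-1}(t\ee+\dd\;;t)$, so it cancels the correction term in the recurrence; this uses only the elementary identity $\Binq{n}{j+1}[j+1]_q=[n]_q\Binq{n-1}{j}$ together with the inductive formula for $H_{n-1}$. What survives is $\sum_k\bigl(\Binq{n}{k-1}+q^k\Binq{n}{k}\bigr)t^k\ee^k\dd^{n+1-k}$, and the $q$-Pascal rule $\Binq{n+1}{k}=\Binq{n}{k-1}+q^k\Binq{n}{k}$ collapses the bracket to $\Binq{n+1}{k}$, finishing the induction. I do not expect a genuine obstacle here; the only things to watch are the bookkeeping in the re-indexing (where $[0]_q=0$ conveniently kills the boundary terms) and the fact that one needs the specific form of $q$-Pascal with the weight $q^k$ on $\Binq{n}{k}$ (rather than $q^{n+1-k}$ on $\Binq{n}{k-1}$). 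Everything else is a direct computation inside $\calA$ from the single relation \eqref{de-comute}.
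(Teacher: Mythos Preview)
Your proposal is correct and follows essentially the same route as the paper: induction on $n$ via the recurrence \eqref{q-H(t) recursion}, expansion of $(t\ee+\dd)H_n$ using the swap identity \eqref{de^m}, cancellation of the degree-$(n-1)$ piece against $t[n]_qH_{n-1}$ via $\Binq{n}{j+1}[j+1]_q=[n]_q\Binq{n-1}{j}$, and the $q$-Pascal rule $\Binq{n+1}{k}=\Binq{n}{k-1}+q^k\Binq{n}{k}$ to finish. The only cosmetic difference is that the paper rewrites the $[k]_q$-term directly as $[n]_q\Binq{n-1}{k-1}$ before cancelling, whereas you re-index first; the content is identical.
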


\begin{proof} Since $H_0(t\ee+\dd;t)=\mI$ and $H_1(t\ee+\dd;t)=t\ee\dd^0+\ee^0\dd$,
we only need to verify that the right hand side of the formula satisfies recursion \eqref{q-H(t) recursion}. That is, we have to show that
\begin{equation*}(t\ee+\dd)\sum_{k=0}^n\Binq{n}{k}  t^k\ee^k\dd^{n-k}-t [n]_q\sum_{k=0}^{n-1} \Binq{n-1}{k} t^k\ee^k\dd^{n-k}
\NL = \sum_{k=0}^{n+1} \Binq{n+1}{k}  t^k\ee^k\dd^{n+1-k} .
\end{equation*}
Using \eqref{de^m}, the left hand side is
\begin{multline*}
\sum_{k=0}^n\Binq{n}{k}  t^{k+1}\ee^{k+1}\dd^{n-k}+\sum_{k=0}^n\Binq{n}{k} t^k\dd \ee^k\dd^{n-k}-t[n]_q\sum_{k=0}^{n-1}
\Binq{n-1}{k} t^k\ee^k\dd^{n-1-k}
\\=
t^{n+1}\ee^{n+1}+\sum_{k=0}^{n-1}\Binq{n}{k}  t^{k+1}\ee^{k+1}\dd^{n-k}+\sum_{k=1}^nq^k\Binq{n}{k}  t^k
\ee^k\dd^{n+1-k}+\dd^{N+1}
\\ +
[n]_q\sum_{k=1}^n\Binq{n-1}{k-1}   t^k \ee^{k-1}\dd^{n-k} -[n]_q\sum_{k=1}^{n} \Binq{n-1}{k-1} t^{k}\ee^{k-1}\dd^{n-k}
\\=
t^{n+1}\ee^{n+1}+\sum_{k=1}^{n}\left(\Binq{n}{k-1}  +q^k\Binq{n}{k} \right)  t^k\ee^k\dd^{n+1-k}+\dd^{N+1}+0
\NL = \sum_{k=0}^{n+1} \Binq{n+1}{k} t^k \ee^k\dd^{n+1-k} ,
\end{multline*}
as
$$
\Binq{n}{k-1}+q^k\Binq{n}{k}%
=\Binq{n+1}{k}.$$
\end{proof}

We now introduce two   sequences of functions:
\begin{equation} \label{EQ-G}
G_n(t):=\varphi_0\left[ H_n(t\ee+\dd\;;t)\right],
\end{equation}
where $0\leq n<N+1$ (we include here non-singular case by allowing $N=\infty$), and
\begin{equation*} \label{EQ-F}
F_n(t):=\varphi_1\left[H_{n+N}(t\ee+\dd\;;t)\right], \; n\geq 1.
\end{equation*}
It turns out that these sequences satisfy similar recursions. %
\begin{theorem} \label{T-Hermit2}
For $0\leq n <N$ we have %
\begin{multline}\label{Eq:MArcin2}
  G_{n+1}(t)=\frac{\theta}{1-abcdq^n}\left((a+b)(1-t cd)G_n(qt)+(c+d)(t-q^n a b)G_n(t)\right) \\
  -\theta^2\frac{1-q^n}{1-a b c d q^n}\left(ab(1-t cd)G_{n-1}(qt)+t cd(t-a b q^n)G_{n-1}(t)\right)
\mathcal{}\end{multline}
with  $G_0(t)=1$ and  $G_{-1}(t)=0$.

For $n\geq 1$ we have
\begin{multline}\label{Eq:MArcin1}
F_{n+1}(t)=\frac{\theta}{1- q^{n }}\left((a+b)(1-t cd)F_n(qt)+(c+d)(t-q^{n+N} a b)F_n(t)\right) \\
-\theta^2\frac{1-q^{n+N}}{1-  q^{n }}\left(ab(1-t cd)F_{n-1}(qt)+t cd(t-a b q^{n+N})F_{n-1}(t)\right)
\end{multline}
with
$$F_1(t)= \frac{1}{\theta^{N+1}}\prod_{j=0}^{N}\frac{1-cd t q^j}{1-cd q^j} =\frac{(tcd;q)_{N+1}}{\theta^{N+1}(cd;q)_{N+1}}
$$ and   $F_{0}(t)=0$.
\end{theorem}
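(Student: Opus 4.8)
The plan is to combine the monomial expansion of the continuous $q$-Hermite polynomials from Lemma \ref{L-BKS} with the solved invariance recursions \eqref{SolA1}--\eqref{SolB1}, and to recognize the resulting $q$-binomial double sum as the asserted combination of $G_n(t),G_n(qt),G_{n-1}(t),G_{n-1}(qt)$ (respectively, of the $F$'s). I would carry this out for $G$; the argument for $F$ is identical up to a shift of indices by $N$. Throughout write $\varphi_{k,n}=\varphi_0[\ee^k\dd^n]$, so that Lemma \ref{L-BKS} gives $G_n(t)=\sum_{k=0}^n\Binq{n}{k}t^k\varphi_{k,n-k}$ and $G_n(qt)=\sum_{k=0}^n\Binq{n}{k}(qt)^k\varphi_{k,n-k}$, and likewise for $G_{n-1}$.

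The first step is to reduce to a sum over monomials of a single degree. Apply $\varphi_0$ to the three-term recursion \eqref{q-H(t) recursion} written as $H_{n+1}(t\ee+\dd;t)=(t\ee+\dd)H_n(t\ee+\dd;t)-t[n]_qH_{n-1}(t\ee+\dd;t)$, expand $(t\ee+\dd)H_n(t\ee+\dd;t)$ by Lemma \ref{L-BKS}, and normal-order the factors $\dd\ee^k$ using the swap identity \eqref{de^m}. One checks that the term $t[n]_qG_{n-1}(t)$ produced by \eqref{de^m} cancels the $-t[n]_qG_{n-1}(t)$ coming from \eqref{q-H(t) recursion}, leaving
\[
G_{n+1}(t)=\sum_{k=0}^n\Binq{n}{k}\bigl(t^{k+1}\varphi_{k+1,n-k}+(qt)^k\varphi_{k,n-k+1}\bigr),
\]
a sum over monomials of degree $n+1$ only.

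Next I would substitute \eqref{SolA1} for $\varphi_{k+1,n-k}$ and \eqref{SolB1} for $\varphi_{k,n-k+1}$; in both cases the relevant total degree is $n$, so the common denominator is $1-abcd\,q^n$ and factors out. The terms of the resulting double sum that still carry $\varphi_{k,n-k}$ have coefficient $\theta\Binq{n}{k}\bigl((c+d)(t-abq^n)t^k+(a+b)(1-tcd)(qt)^k\bigr)$, after using $q^kt^{k+1}=t(qt)^k$ and $q^{n-k}(qt)^k=q^nt^k$; these sum to $\theta(c+d)(t-abq^n)G_n(t)+\theta(a+b)(1-tcd)G_n(qt)$. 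The terms carrying $\varphi_{k-1,n-k}$ and $\varphi_{k,n-k-1}$ are reindexed to the degree-$(n-1)$ monomials $\varphi_{j,n-1-j}$; using $\Binq{n}{k}[k]_q=[n]_q\Binq{n-1}{k-1}$ and $\Binq{n}{k}[n-k]_q=[n]_q\Binq{n-1}{k}$, their total coefficient of $[n]_q\Binq{n-1}{j}\varphi_{j,n-1-j}$ collapses to $-ab(1-tcd)(qt)^j-tcd(t-abq^n)t^j$, hence these sum to $-[n]_q\bigl(ab(1-tcd)G_{n-1}(qt)+tcd(t-abq^n)G_{n-1}(t)\bigr)$. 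Dividing by $1-abcd\,q^n$ and using $[n]_q=\theta^2(1-q^n)$ gives \eqref{Eq:MArcin2}; the base cases $G_0(t)=\varphi_0[\mI]=1$ and $G_{-1}(t)=0$ are immediate. The same computation with $\varphi_1$ and $H_{n+N}$ in place of $\varphi_0$ and $H_n$ yields \eqref{Eq:MArcin1} for $n\geq1$ (then the target degree is $n+N+1\geq N+2$, so \eqref{SolA1}--\eqref{SolB1} genuinely apply to $\varphi_1$), where the singularity relation $abcd\,q^N=1$ turns the denominator $1-abcd\,q^{n+N}$ into $1-q^n$. For the base cases, $F_0(t)=\varphi_1[H_N(t\ee+\dd;t)]=0$ since $\varphi_1$ vanishes on $\calA_N$, and by Lemma \ref{L-BKS} and \eqref{phi-1-ini}, after rewriting $\alpha,\gamma,\Pi$ through \eqref{AW-parameters}, $F_1(t)=\frac{1}{\theta^{N+1}(cd;q)_{N+1}}\sum_{k=0}^{N+1}\Binq{N+1}{k}q^{k(k-1)/2}(-tcd)^k$, which the finite $q$-binomial theorem $\sum_k\Binq{n}{k}q^{k(k-1)/2}x^k=(-x;q)_n$ evaluates to $(tcd;q)_{N+1}$, giving the stated $F_1(t)$.

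The main obstacle is the bookkeeping in the third step: one must carry the two scales $t^k$ and $(qt)^k$ correctly through the substitution of \eqref{SolA1}--\eqref{SolB1} and through the $q$-binomial reindexing, and confirm that the boundary contributions (at $k=0$, $k=n$, and at $n=1$, where $[0]_q=0$, $\Binq{n}{-1}=0$ and the $[n-1]_q$-term is absent) do not spoil the collapse into exactly four $G$-terms. For $F$ one must additionally keep track of the range in which the recursion is legitimate for $\varphi_1$ and treat $F_0,F_1$ separately.
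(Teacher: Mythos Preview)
Your proof is correct and follows essentially the same route as the paper: decompose $G_{n+1}(t)$ as $\sum_k\Binq{n}{k}\bigl(t^{k+1}\varphi_{k+1,n-k}+(qt)^k\varphi_{k,n-k+1}\bigr)$, apply \eqref{SolA1}--\eqref{SolB1}, and regroup into $G_n(t),G_n(qt),G_{n-1}(t),G_{n-1}(qt)$, with the $F$-case and $F_1(t)$ handled identically. The only cosmetic difference is how you reach that decomposition---you use the three-term recursion for $H_{n+1}$ together with the swap \eqref{de^m} and observe the $t[n]_qG_{n-1}(t)$ cancellation, whereas the paper obtains the same two sums directly from the $q$-Pascal identity $\Binq{n+1}{k}=q^k\Binq{n}{k}+\Binq{n}{k-1}$ applied to $G_{n+1}(t)=\sum_k\Binq{n+1}{k}t^k\varphi_{k,n+1-k}$.
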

\begin{proof}%
Using the   identity $\Binq{n+1}{k}=q^k\Binq{n}{k}+\Binq{n}{k-1}$ we write
\begin{multline*}
  G_{n+1}(t)=\sum_{k=0}^{n+1}\Binq{n+1}{k}\varphi_{k,n+1-k} t^k
\NL  =\varphi_{0,n+1}+\sum_{k=1}^{n}\Binq{n+1}{k}\varphi_{k,n+1-k} t^k+\varphi_{n+1,0}t^{n+1}
  \\=\sum_{k=0}^{n}\Binq{n}{k} q^k\varphi_{k,n+1-k}+t \sum_{k=0}^n \Binq{n}{k}\varphi_{k+1,n-k} t^k = A+B \mbox{  (say)}.
\end{multline*}
Applying \eqref{SolB1} to expression $A$ we get

\begin{multline*}
 (1-abcd q^n) A
\NL = \sum_{k=0}^n \Binq{n}{k}q^k t^k \theta(a+b)\varphi_{k,n-k}
 -\sum_{k=0}^n \Binq{n}{k}q^kt^k \theta ab(c+d)q^{n-k}\varphi_{k,n-k}
 \\
 -ab \sum_{k=0}^n \Binq{n}{k}[n-k]_qq^k t^k \varphi_{k,n-1-k}+abcd \sum_{k=0}^n \Binq{n}{k}[k]_q q^k t^k q^{n-k}\varphi_{k-1,n-k}
 \\= \theta (a+b)G_n(qt)-\theta ab(c+d)q^n G_n(t)-ab [n]_q G_{n-1}(qt)+abcd [n]_q q^n t G_{n-1}(t).
\end{multline*}
Similarly applying    \eqref{SolA1} to expression $B$  we get
\begin{multline*}
  (1-abcd q^n)  B= t \theta (c+d)\sum_{k=0}^n\Binq{n}{k}t^k\varphi_{k,n-k}-cd (a+b)\theta t \sum_{k=0}^n \Binq{n}{k} q^k t^k \varphi_{k,n-k}\\
  -t cd \sum_{k=0}^n \Binq{n}{k} [k]_q t^k \varphi_{k-1,n-k}+t abcd \sum_{k=0}^n \Binq{n}{k}[n-k]_q t^k q^k \varphi_{k,n-1-k}
  \\= \theta t(c+d)G_n(t)-\theta t cd (a+b) G_n(qt)-t cd [n]_qG_{n-1}(t)+t abcd [n]_q G_{n-1}(qt).
\end{multline*}
Since $[n]_q=\theta^2(1-q^n)$,  we get \eqref{Eq:MArcin2}.

To determine the initial $F_1(t)$ we apply Lemma \ref{L-BKS} and formula \eqref{phi-1-ini} which in   parameters \eqref{AW-parameters} becomes  $$\varphi_{k,N+1-k}=\frac{(-cd)^kq^{\frac{k(k-1)}{2}}}{\theta^{N+1}\prod_{j=0}^{N}(1-cdq^j)}=\frac{1}{\theta^{N+1}(cd;q)_{N+1}}(-cd)^kq^{\frac{k(k-1)}{2}}.$$
We  have
\begin{multline*}
  F_1(t)=\varphi_1[H_{N+1}(t\ee+\dd;t)]= \sum_{k=0}^{N+1}\Binq{N+1}{k} t^{ k}\varphi_{k,N+1-k}
  \\=
 \frac{1}{\theta^{N+1}(cd;q)_{N+1}}\sum_{k=0}^{N+1}\Binq{N+1}{k} t^{ k} (-cd)^kq^{\frac{k(k-1)}{2}}
 =\frac{(tcd;q)_{N+1}}{\theta^{N+1}(cd;q)_{N+1}}
 ,
\end{multline*}
where we used Cauchy's $q$-binomial formula \eqref{pochtopot}. %
 The remaining steps of the proof are similar to the proof of recursion \eqref{Eq:MArcin2} and are omitted.

\end{proof}

\arxiv{
 For completeness, we include the omitted steps.
	\begin{multline*}
	F_{n+1}(t)=\sum_{k=0}^{N+n+1}{N+n+1\brack k}_q\varphi_{k,N+n+1-k}t^k\\
	=\varphi_{0,N+n+1}+\sum_{k=1}^{N+n}{N+n+1\brack k}_q\varphi_{k,N+n+1-k}t^k+\varphi_{N+n+1,0}t^{N+n+1}\\
	=\varphi_{0,N+n+1}+\sum_{k=1}^{N+n}\left({N+n\brack k}_q q^k+{N+n\brack k-1}_q\right)\varphi_{k,N+n+1-k}t^k+\varphi_{N+n+1,0}t^{N+n+1}\\
	=\sum_{k=0}^{N+n}{N+n\brack k}_q q^k \varphi_{k,N+n-k+1}t^k+t\sum_{k=0}^{N+n}{N+n\brack k}_q  \varphi_{k+1,N+n-k}t^k =A'+B' \mbox{ (say)}.
	\end{multline*}
}

\arxiv{
Applying \eqref{SolB1} to expression $A'$ we get
\begin{multline*}
	(1-q^{n}) A'= \sum_{k=0}^{n+N} \Binq{n+N}{k}q^k t^k \theta(a+b)\varphi_{k,n+N-k}
	-\sum_{k=0}^{n+N} \Binq{n+N}{k}q^kt^k \theta ab(c+d)q^{n+N-k}\varphi_{k,n+N-k}
	\\
	-ab \sum_{k=0}^{n+N} \Binq{n+N}{k}[n+N-k]_qq^k t^k \varphi_{k,n+N-1-k}+abcd \sum_{k=0}^{n+N} \Binq{n+N}{k}[k]_q q^k t^k q^{n+N-k}\varphi_{k-1,n+N-k}
	\\= \theta (a+b)F
	_n(qt)-\theta ab(c+d)q^{n+N} F_n(t)-ab [n+N]_q F_{n-1}(qt)+abcd [n+N]_q q^{n+N} t F_{n-1}(t)
	\end{multline*}
	Similarly applying    \eqref{SolA1} to expression $B'$  we get
	\begin{multline*}
	(1-q^{n})  B'= t \theta (c+d)\sum_{k=0}^{n+N}\Binq{n+N}{k} t^k\varphi_{k,n+N-k}-cd (a+b)\theta t \sum_{k=0}^{n+N} \Binq{n+N}{k} q^k t^k \varphi_{k,n+N-k}\\
	-t cd \sum_{k=0}^{n+N} \Binq{n+N}{k} [k]_q t^k \varphi_{k-1,n+N-k}+t abcd \sum_{k=0}^n \Binq{n+N}{k}[n+N-k]_q t^k q^k \varphi_{k,n+N-1-k}
	\\= \theta t(c+d)F_n(t)-\theta t cd (a+b) F_n(qt)-t cd [n+N]_qF_{n-1}(t)+t abcd [n+N]_q F_{n-1}(qt).
	\end{multline*}
	Since $[n+N]_q=\theta^2(1-q^{n+N})$,  we get \eqref{Eq:MArcin1}.
}

We now want to express the $q$-Hermite polynomials as linear combinations of the Askey-Wilson polynomials. We will start with
the following two explicit formulas for the connection coefficients,   relating $q$-Hermite polynomials  with Al-Salam-Chihara
polynomials in the first step, and then with Askey-Wilson polynomials in the second step. (This topic is well studied, see e.g.
\cite{foupouagnigni2013connection,tcheutia2014connection} and the references therein, so both formulas should be known; but we
were not able to locate them in the literature.)
\begin{proposition}\label{HerToAWCon}
For $a,b\in\CC$, the connection coefficients in the expansion
 \begin{equation}p_n(x;0,0,0,0|q)=\sum_{k=0}^{n}c_{n,k}p_k(x;a,b,0,0|q)\label{AW2H0}
  \end{equation}
  are
    \begin{equation}
  \label{c_nk}
  c_{n,k}=\sum_{\ell=k}^n \Binq{n}{\ell}\Binq{\ell}{k}a^{n-\ell} b^{\ell-k}.
  \end{equation}
If $a\ne 0$, the connection coefficients %
 in the expansion  $$p_n(x;0,0,0,0|q)=\sum_{\ell=0}^ne_{n,\ell}(a,b,c,d)p_\ell(x;a,b,c,d|q)$$ are
    $$e_{n,\ell}(a,b,c,d)=\sum_{k=\ell}^nc_{n,k}{k \brack \ell}_q\frac{q^{\ell(\ell-k)}(abq^\ell;q)_{k-\ell}}{a^{k-\ell}(abcdq^{\ell-1};q)_\ell}{_3\phi_2}\left(\begin{matrix}
	q^{\ell-k},\ acq^\ell,\ adq^\ell\\\ 0,\ abcdq^{2\ell}
\end{matrix}\middle|q;q\right).$$
\end{proposition}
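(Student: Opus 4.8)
The plan is to establish the two connection-coefficient formulas separately, in the order stated, treating each as a purely algebraic identity among polynomials in $x$. For the first expansion \eqref{AW2H0}, I would work with the continuous $q$-Hermite polynomials on the left and the Al-Salam--Chihara polynomials $p_k(x;a,b,0,0|q)$ on the right. The key observation is that the monic continuous $q$-Hermite polynomials $H_n$ are $p_n(x;0,0,0,0|q)$ up to the explicit normalization that appears in \eqref{AW}, and that Lemma \ref{L-BKS} (with $t=1$) gives the expansion of $H_n(\ee+\dd)$ in the normal-order basis $\{\ee^k\dd^{n-k}\}$. One route is to note that the Al-Salam--Chihara polynomial $p_k(x;a,b,0,0|q)$, when evaluated at $\ee+\dd$, should expand as $\sum_j \binom{k}{j}_q (-a)^{?}\dots$; more cleanly, I would use the known fact (e.g. from \cite{bozejko97qGaussian} or a direct induction from the three-term recurrences) that Al-Salam--Chihara polynomials arise from $q$-Hermite polynomials via a ``shift'' substitution, so that the connection coefficients factor through two successive one-parameter expansions. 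Concretely: first expand $H_n$ in terms of the one-parameter family $p_\ell(x;a,0,0,0|q)$ with coefficients $\binom{n}{\ell}_q a^{n-\ell}$ (a standard generating-function computation), then expand each $p_\ell(x;a,0,0,0|q)$ in terms of $p_k(x;a,b,0,0|q)$ with coefficients $\binom{\ell}{k}_q b^{\ell-k}$, and compose the two sums to get \eqref{c_nk}. Each of these single-parameter expansions can be verified by induction on the three-term recurrences, or by a generating-function identity for the $q$-Hermite / Al-Salam--Chihara families; I would pick whichever is shortest to write.

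For the second expansion, I would peel off one more layer. Writing $p_n(x;0,0,0,0|q)=\sum_k c_{n,k}\,p_k(x;a,b,0,0|q)$ from the first part, it suffices to expand each Al-Salam--Chihara polynomial $p_k(x;a,b,0,0|q)$ in terms of the full Askey--Wilson family $p_\ell(x;a,b,c,d|q)$ and then reindex. So the real content of the second formula is the connection coefficient for the expansion
$$p_k(x;a,b,0,0|q)=\sum_{\ell=0}^{k}\binom{k}{\ell}_q\frac{q^{\ell(\ell-k)}(abq^\ell;q)_{k-\ell}}{a^{k-\ell}(abcdq^{\ell-1};q)_\ell}\,{}_3\phi_2\!\left(\begin{matrix}q^{\ell-k},\,acq^\ell,\,adq^\ell\\ 0,\,abcdq^{2\ell}\end{matrix}\middle|q;q\right)p_\ell(x;a,b,c,d|q).$$
I would prove this by the standard orthogonality/generating-function method for Askey--Wilson connection coefficients: expand both sides in the basis $\prod_{j=0}^{m-1}(1+a^2q^{2j}-2axq^j)$ (the basis natural to \eqref{AW}), match coefficients using the explicit ${}_4\phi_3$ representation \eqref{AW} of $p_\ell(x;a,b,c,d|q)$, and recognize the resulting single sum as a terminating ${}_3\phi_2$. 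Alternatively — and this is probably cleaner — I would use the known inversion formula expressing the basis element $\prod_{j=0}^{m-1}(1+a^2q^{2j}-2axq^j)$ back in terms of Askey--Wilson polynomials $p_\ell(x;a,b,c,d|q)$ (this is classical, with a ${}_2\phi_1$-type coefficient), substitute it into the monomial-basis expansion of $p_k(x;a,b,0,0|q)$, and collect the $p_\ell$ coefficient; the double sum that appears collapses to the stated ${}_3\phi_2$ by the $q$-Vandermonde sum or the $q$-Saalschütz sum.

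I expect the main obstacle to be the second formula's ${}_3\phi_2$: getting the exact powers of $q$, the shift $abcdq^{\ell-1}$ versus $abcdq^{2\ell}$ in the denominators, and the argument of the ${}_3\phi_2$ right. These are the kind of details that are ``routine'' only after one has correctly set up the pairing, and a sign or a $q$-power slip propagates through everything. To control this I would first check the formula in the degenerate cases $c=0$ (where the ${}_3\phi_2$ should reduce to a ${}_2\phi_1$ and the whole expansion collapse to $c_{n,k}$ shifted appropriately) and $k=\ell$ (where the coefficient must be $1$), and also verify the $q\to$ boundary behaviour against the first formula when $c=d=0$; these consistency checks pin down the normalization. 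The first formula \eqref{c_nk} I expect to be genuinely routine once the ``shift'' description of Al-Salam--Chihara polynomials is invoked. Throughout I would lean on the invariance of $p_n(x;a,b,c,d|q)$ under permutations of $a,b,c,d$ (stated in the excerpt after \eqref{AW}) to choose the most convenient ordering of parameters at each stage, and on the fact that none of these are orthogonality statements — they are polynomial identities valid for all complex parameters — so no positivity or non-degeneracy hypotheses are needed, which is exactly why the proof can proceed by formal coefficient comparison.
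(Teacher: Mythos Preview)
Your proposal is correct and follows exactly the same three-step factorization as the paper: first pass from $p_n(x;0,0,0,0|q)$ to $p_\ell(x;a,0,0,0|q)$, then to $p_k(x;a,b,0,0|q)$, and finally to $p_\ell(x;a,b,c,d|q)$, composing the connection coefficients at each stage. The difference lies only in how each single step is justified. The paper invokes the general Askey--Wilson connection coefficient theorem \eqref{conectiocoefficient} of \cite{Askey-Wilson-85} as a black box at every stage: specializing it with three of the $b_i$ equal to zero gives a ${}_2\phi_1$ that collapses via Heine's summation \eqref{twophione} to the $q$-binomial $\Binq{n}{k}b^{n-k}$ for the first two steps, and specializing with $b_3=0$ gives the ${}_3\phi_2$ directly for the last step. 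Your route---verifying the one-parameter shifts by induction on the three-term recurrences, and obtaining the Al-Salam--Chihara--to--Askey--Wilson coefficient by matching in the basis $\prod_{j=0}^{m-1}(1+a^2q^{2j}-2axq^j)$---is more self-contained but longer, and you correctly anticipate that the bookkeeping of $q$-powers in the ${}_3\phi_2$ step is where the work lies. If you are willing to quote \eqref{conectiocoefficient}, the whole proof is a few lines; your approach would reprove a special case of that theorem from scratch.
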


\begin{proof}
Since \eqref{AW2H0} holds trivially when $a=b=0$, by symmetry of $p_k(x;a,b,0,0|q)$ in parameters $a,b$, we can assume $a\ne 0$.  From \eqref{conectiocoefficient} we see that
\begin{equation}
  \label{AW2H1} p_n(x;a,0,0,0|q)=\sum_{k=0}^n C_{n,k}p_k(x;a,b,0,0|q),
\end{equation}
where
\begin{multline*}
   C_{n,k}=\frac{q^{k(k-n)}}{a^{n-k}}{n \brack k}_q {_2\phi_1}\left(\begin{matrix}
q^{k-n},\ abq^k\\ 0
\end{matrix}\middle|q;q\right)=\frac{q^{k(k-n)}}{a^{n-k}}{n \brack k}_q(abq^k)^{n-k}
\\={n \brack k}_qb^{n-k}
\end{multline*}
(we used formula \eqref{twophione}.)
In particular \eqref{AW2H1} is valid also for $a=0$. Setting $a=0$ in \eqref{AW2H1}, using symmetry again, and renaming $b$ as $a$ we get
\begin{equation}
\label{AW2H2} p_n(x;0,0,0,0|q)=\sum_{k=0}^n {n \brack k}_qa^{n-k}p_k(x;a,0,0,0|q).
\end{equation}
Combining \eqref{AW2H2} with \eqref{AW2H1} proves that
\begin{equation*}
\label{AW2H3} p_n(x;0,0,0,0|q)=\sum_{k=0}^n c_{n,k} p_k(x;a,b,0,0|q),
\end{equation*} where $c_{n,k}$ is given by \eqref{c_nk}. This formula holds for all $a,b$.

Next we prove the second connection formula for $a\ne 0$.
From \eqref{conectiocoefficient} follows that the coefficient $C'_{k,\ell}$ in the expansion
\begin{equation*}
\label{AW2H4} p_k(x;a,b,0,0|q)=\sum_{\ell=0}^n C'_{k,\ell} p_\ell(x;a,b,c,d|q)
\end{equation*}
is equal to $${k \brack \ell}_q\frac{q^{\ell(\ell-k)}(abq^\ell;q)_{k-\ell}}{a^{k-\ell}(abcdq^{\ell-1};q)_\ell}{_3\phi_2}\left(\begin{matrix}
	q^{\ell-k},\ acq^\ell,\ adq^\ell\\\ 0,\ abcdq^{2\ell}
\end{matrix}\middle|q;q\right).$$
This ends the proof, since $e_{n,\ell}(a,b,c,d)=\sum_{k=\ell}^nc_{n,k}C'_{k,\ell}$.
\end{proof}

 Suppose that the degrees of polynomials $p_k$ are $k$ for $k=0,1,\dots,n$. (Recall that
this fails for large $n$ if $q^Nabcd=1$ for some $N=0,1,\dots$.) Denote by $\{a_{n,k}(a,b,c,d)\}$
the coefficients in the expansion
\begin{equation}
  \label{H2P}
  H_n(x)=\sum_{k=0}^n a_{n,k}(a,b,c,d)p_k\left(\frac{x}{2\theta};a,b,c,d\middle| q\right),
\end{equation}
where $H_n(x)=H_n(x;1)$ is given by  \eqref{EQ-H}.

We will need explicit formula for
the  coefficient $A_n(a,b,c,d):=a_{n,0}(a,b,c,d)$.  %
Since $a_{n,k}(a,b,c,d)$ are invariant under permutations of $a,b,c,d$, without loss of generality we assume $a\ne 0$.
This is enough for our purposes, as we have $a,c>0$  for the parameters arising from ASEP.
\begin{proposition}
\begin{equation*}
  \label{Marcin-SolA}
  A_n(a,b,c,d)=\theta^n\sum_{k=0}^nc_{n,k}\frac{(ab;q)_k}{a^k}{_3\phi_2}\left(\begin{matrix}
  q^{-k},\ ac,\ ad\\\ 0,\ abcd
  \end{matrix}\middle|q;q\right),
  \end{equation*}
  with $c_{n,k}$ given by \eqref{c_nk}.
\end{proposition}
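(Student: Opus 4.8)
The plan is to identify the monic continuous $q$-Hermite polynomial $H_n$ of \eqref{EQ-H} with the Askey--Wilson polynomial at the bottom of the $q$-Askey scheme, namely $p_n(\cdot\,;0,0,0,0\,|\,q)$, and then to extract the constant coefficient $a_{n,0}(a,b,c,d)$ of the expansion \eqref{H2P} from the $\ell=0$ instance of the connection coefficients $e_{n,\ell}(a,b,c,d)$ produced in Proposition~\ref{HerToAWCon}. The first step is the normalization identity
\begin{equation*}
H_n(x)=\theta^n\, p_n\!\left(\tfrac{x}{2\theta};0,0,0,0\middle| q\right),\qquad n\ge 0 .
\end{equation*}

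To prove this identity I would argue by matching three-term recurrences. At $a=b=c=d=0$ the polynomial defined by \eqref{AW} is the continuous $q$-Hermite polynomial in the normalization with leading coefficient $2^n$, it has $p_0(y;0,0,0,0\,|\,q)=1$ (read off directly from \eqref{AW}), and the Askey--Wilson recurrence \cite[(1.24)]{Askey-Wilson-85} degenerates to
\begin{equation*}
2y\,p_n(y;0,0,0,0\,|\,q)=p_{n+1}(y;0,0,0,0\,|\,q)+(1-q^n)\,p_{n-1}(y;0,0,0,0\,|\,q).
\end{equation*}
Substituting $y=x/(2\theta)$, multiplying by $\theta^{n+1}$, and using $\theta^2(1-q^n)=[n]_q$, one finds that $\widetilde p_n(x):=\theta^n p_n(x/(2\theta);0,0,0,0\,|\,q)$ satisfies $x\,\widetilde p_n(x)=\widetilde p_{n+1}(x)+[n]_q\widetilde p_{n-1}(x)$ with $\widetilde p_0=1$, $\widetilde p_{-1}=0$; comparing with \eqref{EQ-H} yields $H_n=\widetilde p_n$, which is the asserted identity. (This can be cross-checked at $n=1$, where $H_1(x)=x=\theta\cdot 2\cdot\tfrac{x}{2\theta}$.)

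Next, substitute $y=x/(2\theta)$ in the expansion $p_n(y;0,0,0,0\,|\,q)=\sum_{\ell=0}^n e_{n,\ell}(a,b,c,d)\,p_\ell(y;a,b,c,d\,|\,q)$ from Proposition~\ref{HerToAWCon} and multiply by $\theta^n$; by the normalization identity this becomes $H_n(x)=\theta^n\sum_{\ell=0}^n e_{n,\ell}(a,b,c,d)\,p_\ell(x/(2\theta);a,b,c,d\,|\,q)$. Under the standing assumption that $p_0,\dots,p_n$ have degrees $0,\dots,n$, these polynomials are a vector-space basis of the polynomials of degree at most $n$, so comparing the last display with \eqref{H2P} forces $a_{n,\ell}(a,b,c,d)=\theta^n e_{n,\ell}(a,b,c,d)$ for every $\ell$; in particular $A_n(a,b,c,d)=a_{n,0}(a,b,c,d)=\theta^n e_{n,0}(a,b,c,d)$.

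Finally, set $\ell=0$ in the formula for $e_{n,\ell}$ in Proposition~\ref{HerToAWCon}. Then $\Binq{k}{0}=1$, $q^{\ell(\ell-k)}=1$, $(abq^\ell;q)_{k-\ell}=(ab;q)_k$, $a^{k-\ell}=a^k$, $(abcdq^{\ell-1};q)_\ell=(abcd q^{-1};q)_0=1$, and the $_3\phi_2$ parameters $acq^\ell,adq^\ell,abcdq^{2\ell}$ become $ac,ad,abcd$, so that
\begin{equation*}
e_{n,0}(a,b,c,d)=\sum_{k=0}^n c_{n,k}\,\frac{(ab;q)_k}{a^k}\,{_3\phi_2}\!\left(\begin{matrix} q^{-k},\ ac,\ ad\\ 0,\ abcd \end{matrix}\middle|q;q\right),
\end{equation*}
with $c_{n,k}$ as in \eqref{c_nk}. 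Multiplying by $\theta^n$ gives the claimed expression for $A_n(a,b,c,d)$. The main obstacle is the first step: pinning down that the proportionality constant is exactly $\theta^n$, with no spurious sign, requires being careful about which normalization of $p_n(\cdot\,;0,0,0,0\,|\,q)$ and of the Askey--Wilson recurrence one uses (equivalently, a leading-coefficient computation from \eqref{AW}); once that is settled, everything else is the substitution $\ell=0$ together with elementary $q$-bookkeeping.
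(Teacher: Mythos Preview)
Your proof is correct and follows essentially the same approach as the paper: identify $H_n(x)=\theta^n p_n(x/(2\theta);0,0,0,0\,|\,q)$ by matching three-term recurrences, then read off $A_n=\theta^n e_{n,0}$ from Proposition~\ref{HerToAWCon}. The paper's own proof is the two-line version of exactly this argument; your additional work (the explicit specialization $\ell=0$ in the formula for $e_{n,\ell}$ and the careful check of the $\theta^n$ normalization) simply fills in details the paper leaves to the reader.
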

\begin{proof}
By comparing the three step recursions, it is clear that $H_n(x)=\theta^n p_n(\frac{x}{2\theta};0,0,0,0|q)$. Hence, by Proposition \ref{HerToAWCon},  $A_n(a,b,c,d)=\theta^n e_{n,0}(a,b,c,d)$.
\end{proof}

It turns out that   $A_n(a,b,c,d)$ is related to the moment of the $n$-th $q$-Hermite polynomial  introduced in \eqref{EQ-G}.
\begin{proposition}\label{L:A=G}   %
For $0\leq n<N$, $a,c>0$ and $t\ne 0$ we have
$$t^n G_n(1/t^2)=A_n(at,bt,c/t,d/t).$$
\end{proposition}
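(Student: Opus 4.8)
\emph{Strategy.}
The plan is to reformulate the claim as a polynomial identity and prove it by matching it against the three-term recursion of Theorem~\ref{T-Hermit2}; a direct evaluation of $\varphi_0$ on Askey--Wilson polynomials is not available here, since that is essentially Theorem~\ref{T3}. Using the explicit formula for $A_n$ from the preceding Proposition, together with the homogeneity $c_{n,k}(at,bt)=t^{n-k}c_{n,k}(a,b)$ evident from \eqref{c_nk}, the identities $(at)(c/t)=ac$, $(at)(d/t)=ad$, $(at)(bt)(c/t)(d/t)=abcd$, and $\prod_{j=0}^{k-1}(t^{-2}-abq^j)=t^{-2k}(abt^2;q)_k$, a short computation gives $A_n(at,bt,\tfrac ct,\tfrac dt)=t^{n}\Psi_n(t^{-2})$, where
\[
\Psi_n(u):=\theta^n\sum_{k=0}^n\frac{c_{n,k}(a,b)}{a^k}\Bigl(\prod_{j=0}^{k-1}(u-abq^j)\Bigr)\,{_3\phi_2}\!\left(\begin{matrix}q^{-k},\ ac,\ ad\\\ 0,\ abcd\end{matrix}\middle|q;q\right)
\]
is a polynomial of degree $\le n$ in $u$. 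Since $G_n(u)=\sum_k\Binq{n}{k}u^{k}\varphi_{k,n-k}$ is also a polynomial of degree $\le n$ (Lemma~\ref{L-BKS}), the asserted identity $t^{n}G_n(t^{-2})=A_n(at,bt,c/t,d/t)$ for all $t\ne0$ is equivalent to $G_n\equiv\Psi_n$ for $0\le n<N$.

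\emph{The identity $G_n\equiv\Psi_n$.}
By Theorem~\ref{T-Hermit2}, $G_n$ satisfies recursion \eqref{Eq:MArcin2} (in the variable $u$) with $G_0=1$, $G_{-1}=0$; as $1-abcd\,q^{n}=1-q^{\,n-N}\ne0$ for $0\le n<N$, this recursion determines $G_0,\dots,G_{N-1}$ uniquely, so it suffices to check that $\Psi_n$ satisfies the same recursion with $\Psi_0=c_{0,0}=1$, $\Psi_{-1}=0$. I would organize this check via the two-step connection of Proposition~\ref{HerToAWCon}. From Lemma~\ref{L-BKS} and the scalings $H_n(x;s^2)=s^{n}H_n(x/s)$, $H_n(x)=\theta^{n}p_n(\tfrac{x}{2\theta};0,0,0,0|q)$ one gets $t^{n}G_n(t^{-2})=\theta^{n}\varphi_0\!\left[p_n(\xx_t;0,0,0,0|q)\right]$; expanding $p_n(\,\cdot\,;0,0,0,0|q)=\sum_k c_{n,k}(at,bt)\,p_k(\,\cdot\,;at,bt,0,0|q)$, using that $\bigl(c_{n,k}(at,bt)\bigr)_{0\le k\le n}$ is lower triangular with unit diagonal, and recalling $A_n=\theta^{n}\sum_k c_{n,k}C'_{k,0}$, the whole statement collapses to the single family of identities
\[
\varphi_0\!\left[p_k(\xx_t;at,bt,0,0|q)\right]=C'_{k,0}\!\left(at,bt,\tfrac ct,\tfrac dt\right)=\frac{(abt^2;q)_k}{(at)^{k}}\,{_3\phi_2}\!\left(\begin{matrix}q^{-k},\ ac,\ ad\\\ 0,\ abcd\end{matrix}\middle|q;q\right),\qquad 0\le k<N.
\]
These I would prove by induction on $k$ by the same method used for Theorem~\ref{T-Hermit2}: expand the Al-Salam--Chihara polynomial $p_k(\xx_t;at,bt,0,0|q)$ in the normal-order monomials $\ee^{i}\dd^{j}$, apply $\varphi_0$, and reduce via \eqref{SolA1}--\eqref{SolB1} to a recursion in $k$ for $g_k(t):=\varphi_0[p_k(\xx_t;at,bt,0,0|q)]$, then verify that the displayed right-hand side solves it, starting from the trivial case $k=0$.

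\emph{Main obstacle.}
The only delicate point is this last verification, equivalently the direct check that $\Psi_n$ satisfies \eqref{Eq:MArcin2}. It is a $q$-hypergeometric identity of the same nature as, but logically independent of, the proof of Theorem~\ref{T-Hermit2}: I expect one must combine the Pascal-type recursion for $c_{n,k}$ coming from \eqref{c_nk}, the behaviour of the products $\prod_{j=0}^{k-1}(u-abq^j)$ under $u\mapsto qu$ and under multiplication by the linear factors appearing in \eqref{Eq:MArcin2}, and a contiguous relation for the terminating $_3\phi_2$ linking consecutive values of $k$. Everything else in the argument is routine.
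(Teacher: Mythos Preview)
Your strategy is essentially the paper's: both reduce to verifying that the explicit formula for $A_n$ (your $\Psi_n$) satisfies recursion~\eqref{Eq:MArcin2}, and the ingredients you anticipate for the ``main obstacle''---a Pascal-type recursion for $c_{n,k}$ and a contiguous relation for the terminating ${_3\phi_2}$---are precisely the paper's Lemma~\ref{lemmarek1} (with Corollary~\ref{corrolaryocnk}) and Lemma~\ref{lammarek2}. The paper's one extra simplification is to first pass to $B_n(a,b,c,d)=(abcd;q)_n\,\theta^{-n}t^{-n}G_n(t^2;at,bt,c/t,d/t)$, which turns out to be $t$-independent, so that your shift $u\mapsto qu$ becomes the parameter shift $(a,b,c,d)\mapsto(a/\sqrt q,\,b/\sqrt q,\,c\sqrt q,\,d\sqrt q)$ and the whole verification is a pure identity in $a,b,c,d$.
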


For the proof, we need to rewrite both sides of this equation.

For the next lemma, we write $G_{n}(z)$ as $G_n(z;a,b,c,d)$ with explicitly written Askey-Wilson parameters.
In this notation, Proposition \ref{L:A=G} says $$t^n G_n(1/t^2;a,b,c,d)=A_n(at,bt,c/t,d/t),$$ which is the same as
$t^n G_n(1/t^2;a/t,b/t,ct,dt)=A_n(a,b,c,d)$.
\begin{lemma}Expression
\begin{equation}
  \label{B-sub} B_n(a,b,c,d):=
   (abcd;q)_n\frac{G_n(t^2; at, bt, c/t,d/t)}{\theta^n t^n}
\end{equation}  does not depend on $t$ and satisfies the
following recursion %
for $0\leq n <N$:
  \begin{multline}\label{B-rec}
     B_{n+1}( a,b,c,d)
 \NL = (a +b)(1-cd)q^{n/2}B_n(
      a/\sqrt{q},b/\sqrt{q},c\sqrt{q},d\sqrt{q})+(c+d)(1-q^n ab)  B_n(
      a,b,c,d)\\
     -(1-q^n)(1-abcd q^{n-1}) \Big(ab (1-cd)q^{(n-1)/2}B_{n-1}(
      a/\sqrt{q},b/\sqrt{q},c\sqrt{q},d\sqrt{q})
      \\+cd (1-ab q^n)B_{n-1}(a,b,c,d)\Big)
  \end{multline}
  with the initial value  $B_0(a,b,c,d)=1$, and  $B_{-1}(a,b,c,d)=0$.
\end{lemma}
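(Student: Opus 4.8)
The plan is to prove the two assertions of the lemma — that $B_n(a,b,c,d)$ as defined in \eqref{B-sub} is independent of $t$, and that it satisfies recursion \eqref{B-rec} — in this order, with the independence coming essentially for free from a scaling identity for the $q$-Hermite polynomials and the definition of $G_n$.

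\textbf{Step 1: $t$-independence.} Recall from Section \ref{sect:ProofT3} that $H_n(x;t^2)=t^n H_n(x/t)$, equivalently $H_n(t\ee+\dd;t^2) = t^n H_n(\ee + \dd/t)$. Applying the linear functional $\varphi_0$ to $H_n$ evaluated at the generators scaled according to \eqref{ed2x}, and tracking how the Askey--Wilson parameters in \eqref{SolA1}--\eqref{SolB1} transform under $(a,b,c,d,t)\mapsto(at,bt,c/t,d/t)$ (the product $abcd$ is invariant, which is why the prefactor $(abcd;q)_n$ can be pulled out), one sees that $G_n(t^2;at,bt,c/t,d/t)$ scales precisely as $\theta^n t^n$ times a $t$-free quantity. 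Concretely, I would verify that $G_n(z;a,b,c,d)$, as a function defined via $\varphi_0[H_n(\cdot)]$, satisfies $G_n(t^2; at, bt, c/t, d/t) = t^n \, G_n(1; a,b,c,d)$ or the appropriate homogeneity statement, using that $\varphi_0$ is fixed and that the substitution $\ee \mapsto t\ee$, $\dd\mapsto \dd$ rescales $\ee^k\dd^{n-k}$ by $t^k$ while Lemma \ref{L-BKS} supplies exactly the matching $t^k$ in the monomial expansion of $H_n(t\ee+\dd;t)$. The cleanest route: show directly that the right side of \eqref{B-sub} equals $(abcd;q)_n$ times a fixed expression in $a,b,c,d$ by using $G_n(t^2;at,bt,c/t,d/t) = \varphi_0[H_n(t(at)^{?}\dots)]$ — more carefully, combine $H_n(t\ee+\dd;t) = $ (monic scaling) with the parameter substitution so that the $t$'s cancel.

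\textbf{Step 2: the recursion.} With $t$-independence in hand, I would obtain \eqref{B-rec} by substituting the definition \eqref{B-sub} into recursion \eqref{Eq:MArcin2} of Theorem \ref{T-Hermit2}. The key point is bookkeeping: \eqref{Eq:MArcin2} relates $G_{n+1}(t)$, $G_n(qt)$, $G_n(t)$, $G_{n-1}(qt)$, $G_{n-1}(t)$, and after the substitution $z = t^2$ with Askey--Wilson parameters $(at,bt,c/t,d/t)$, the appearance of $G_n(qt^2;\dots)$ must be re-expressed in terms of $B_n$ evaluated at the shifted parameters $(a/\sqrt q, b/\sqrt q, c\sqrt q, d\sqrt q)$ — this is where the half-integer powers of $q$ in \eqref{B-rec} originate, since replacing $t^2$ by $qt^2$ is replacing $t$ by $\sqrt q \, t$, which shifts $(at,bt,c/t,d/t)$ to $(a\sqrt q t, b\sqrt q t, c/(\sqrt q t), d/(\sqrt q t))$, i.e. evaluates $B_n$ at $(a\sqrt q, b\sqrt q, c/\sqrt q, d/\sqrt q)$ — I need to double-check the direction of the shift against \eqref{B-rec}, which has $(a/\sqrt q, b/\sqrt q, c\sqrt q, d\sqrt q)$, so the substitution is $t \mapsto t/\sqrt q$, consistent with $qt \mapsto$ argument. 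Then I multiply through by $(abcdq^{n+1};q)_{n+1}/(\theta^{n+1}t^{n+1})$ on the left and by $(abcdq^n;q)_n/(\theta^n t^n)$ (suitably) on the right, using $[n]_q = \theta^2(1-q^n)$ and the factorization $(abcd;q)_{n+1} = (1-abcdq^n)(abcd;q)_n$ to reconcile the $1/(1-abcdq^n)$ in \eqref{Eq:MArcin2} with the $(1-q^n)(1-abcdq^{n-1})$ factor in \eqref{B-rec}.

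\textbf{Main obstacle.} The substantive difficulty is not conceptual but the careful matching of normalizing factors — the powers of $\theta$, the powers of $t$, the $q$-Pochhammer shifts $(abcd;q)_n \to (abcd;q)_{n+1}$, and especially ensuring that the argument shift $t \mapsto \sqrt q\, t$ lands on exactly the parameter tuple $(a/\sqrt q,\ b/\sqrt q,\ c\sqrt q,\ d\sqrt q)$ written in \eqref{B-rec} and not its inverse. I would organize this by first stating the precise homogeneity identity $G_n(t^2; a,b,c,d)$ obeys, proving it by induction on $n$ directly from \eqref{Eq:MArcin2} (the induction is short: the recursion's structure is manifestly compatible with the scaling), and only then doing the algebraic substitution into \eqref{Eq:MArcin2} to read off \eqref{B-rec}. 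The initial values $B_0 = 1$, $B_{-1}=0$ are immediate from $G_0 = 1$, $G_{-1}=0$ and $(abcd;q)_0 = 1$.
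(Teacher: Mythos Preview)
Your Step 2 and the ``Main obstacle'' paragraph are essentially the paper's approach: substitute \eqref{B-sub} into \eqref{Eq:MArcin2} and reinterpret the argument shift $t^2\mapsto qt^2$ via $\tilde t=t\sqrt q$, which sends the parameter quadruple $(at,bt,c/t,d/t)$ to $((a/\sqrt q)\tilde t,(b/\sqrt q)\tilde t,(c\sqrt q)/\tilde t,(d\sqrt q)/\tilde t)$ and hence yields $B_n(a/\sqrt q,b/\sqrt q,c\sqrt q,d\sqrt q)$ together with the factor $q^{n/2}$. (Your first guess at the direction of the shift was wrong; your self-correction is right.)

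Step 1, however, contains a real error. You write ``using that $\varphi_0$ is fixed'', but $\varphi_0$ is \emph{not} fixed: it is determined by recursions \eqref{SolA1}--\eqref{SolB1}, which depend on $a,b,c,d$. When you write $G_n(t^2;at,bt,c/t,d/t)$ you are evaluating a \emph{different} functional $\varphi_0$ at $H_n(t^2\ee+\dd;t^2)$, so there is no way to get $t$-independence just from the monomial scaling in Lemma \ref{L-BKS}. A direct scaling argument does exist, but it requires proving the parameter homogeneity $\varphi_0^{(at,bt,c/t,d/t)}[\ee^m\dd^n]=t^{n-m}\varphi_0^{(a,b,c,d)}[\ee^m\dd^n]$ by induction on $m+n$ from \eqref{SolA1}--\eqref{SolB1}; you do not state this. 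The paper avoids the issue entirely by not separating the two tasks: it first substitutes \eqref{B-sub} into \eqref{Eq:MArcin2} to obtain a recursion for $\widetilde G_n(t^2;a,b,c,d):=B_n$ whose coefficients are visibly $t$-free, and then proves $t$-independence by induction on $n$, using in the inductive step precisely the $\tilde t=t\sqrt q$ computation above --- which simultaneously delivers \eqref{B-rec}. This single induction is what you gesture at in ``Main obstacle'', and it is both shorter and cleaner than proving homogeneity first and substituting second.
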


\begin{proof}
  Denote by $\widetilde G_n(t^2;a,b,c,d)$ the right hand side of \eqref{B-sub}. Inserting this expression into \eqref{Eq:MArcin2} we get
  recursion
  \begin{multline}\label{tildeG}
     \widetilde G_{n+1}(t^2;a,b,c,d)=(a +b)(1-cd)\widetilde G_n(q t^2;a,b,c,d)
   \NL +(c+d)(1-q^n ab) \widetilde G_n(
     t^2;a,b,c,d)\\
     -(1-q^n)(1-abcd q^{n-1}) \Big(ab (1-cd) \widetilde G_{n-1}(q t^2;a,b,c,d)
  \NL  +cd (1-ab q^n)\widetilde G_{n-1}(  t^2;a,b,c,d)\Big)
  \end{multline}
  with the coefficients that do not depend on $t$. Since the initial condition $\widetilde G_{-1}=0$ and $\widetilde G_{0}=1$ does
  not depend on $t$, therefore the solution of the recursion does not depend on $t$. We check this by induction, assuming that this assertion holds for $\widetilde G_0,\dots,\widetilde G_n$. Denoting $\tilde t=t\sqrt{q}$ we have
 \begin{multline*}
\widetilde G_n(q t^2;a,b,c,d)= (abcd;q)_n\frac{G_n(q t^2; at, bt, c/t,d/t)}{\theta^n t^n} = \\
=q^{n/2} (abcd;q)_n\frac{G_n(\tilde t ^2; \frac{a}{\sqrt{q}}\tilde t, \frac{b}{\sqrt{q}}\tilde t, \sqrt{q} c/\tilde t,\sqrt{q}d/\tilde
t)}{\theta^n \tilde t^n }
\NL =q^{n/2} B_n( a/\sqrt{q},b/\sqrt{q},c\sqrt{q},d\sqrt{q}).
\end{multline*}
Thus \eqref{tildeG} shows that $\widetilde G_{n+1}(t^2;a,b,c,d)$ does not depend on $t$, and recursion \eqref{B-rec} follows.
\end{proof}

Next we rewrite the %
right hand side of the equation in Proposition \ref{L:A=G}. Denote
\[\widetilde A_n(a,b,c,d)=
  (abcd;q)_n A_n(a,b,c,d)/\theta^n
  \NL = (abcd;q)_n \sum_{k=0}^nc_{n,k}\frac{(ab;q)_k}{a^k}{_3\phi_2}\left(\begin{matrix}
  q^{-k},\ ac,\ ad\\\ 0,\ abcd
  \end{matrix}\middle|q;q\right).
  \]
  We rewrite this as
$$
\widetilde A_n(a,b,c,d)=(abcd;q)_n\sum_{k=0}^n   (ab;q)_k   c_{n,k}\beta_k
$$ with
$$\beta_k(a,b,c,d)=\frac{1}{a^k}{_3\phi_2}\left(\begin{matrix}
  q^{-k},\ a d,\  a c\\0 ,\  a b c d
\end{matrix}\middle|q;q\right)=\frac{1}{a^k}\sum_{j=0}^k \frac{(q^{-k}, ad, ac;q)_j}{(q, abcd;q)_j}q^j.$$
In order to prove Proposition \ref{L:A=G} it is enough to show that $\widetilde A_n(a,b,c,d)=B_n(
      a,b,c,d)$.  Since both expressions are $1$ when $n=0$, we only need to verify that $\widetilde A_n(a,b,c,d)$ satisfies recursion \eqref{B-rec}.
To accomplish this goal, we need auxiliary recursions for the coefficients $c_{n,k}$ and $\beta_k$.
\begin{lemma}\label{lemmarek1}
With   the usual convention that $c_{n,k}=0$ if $k>n$ or $k<0$, for all $n\geq 0$ and all $k$, we have
\begin{equation}
  \label{lemarek1A}
  c_{n+1,k}=c_{n,k-1}+q^k(a+b)c_{n,k}-q^k(1-q^n)ab\cdot c_{n-1,k}.
\end{equation}
Furthermore, for  $n\geq 1$ and $ 0\leq k\leq n$ we have
\begin{equation}
    \label{lemarek1B}
  (1-q^{k+1})c_{n,k+1}=(1-q^n)c_{n-1,k}.
\end{equation}
\end{lemma}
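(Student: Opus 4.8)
The plan is to reduce both identities of Lemma~\ref{lemmarek1} to standard $q$-binomial manipulations after exhibiting a factorization of the connection coefficients $c_{n,k}$. Applying the trinomial revision $\Binq{n}{\ell}\Binq{\ell}{k}=\Binq{n}{k}\Binq{n-k}{\ell-k}$ to \eqref{c_nk} and reindexing by $j=\ell-k$ yields
\[
  c_{n,k}=\Binq{n}{k}\,S_{n-k},\qquad S_m:=\sum_{j=0}^{m}\Binq{m}{j}a^{m-j}b^{j},
\]
for $0\le k\le n$, while $c_{n,k}=0$ and $S_m=0$ for indices outside this range by the usual conventions. The only substantive input is the three-term recursion
\[
  S_{m+1}=(a+b)\,S_m-(1-q^m)\,ab\,S_{m-1},\qquad S_0=1,\ S_{-1}=0,
\]
which I would prove by comparing the coefficient of $a^{m+1-j}b^{j}$ on the two sides: it equals $\Binq{m+1}{j}$ on the left and $\Binq{m}{j}+\Binq{m}{j-1}-(1-q^m)\Binq{m-1}{j-1}$ on the right, and equality is exactly the $q$-Pascal rule $\Binq{m+1}{j}=\Binq{m}{j-1}+q^j\Binq{m}{j}$ combined with the absorption identity $(1-q^j)\Binq{m}{j}=(1-q^m)\Binq{m-1}{j-1}$.

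Granting this, \eqref{lemarek1B} is immediate: substituting $c_{n,k}=\Binq{n}{k}S_{n-k}$ and using $S_{n-k-1}=S_{(n-1)-k}$, it collapses to $(1-q^{k+1})\Binq{n}{k+1}=(1-q^{n})\Binq{n-1}{k}$, which is the absorption identity with $m=n$ and $j=k+1$. For \eqref{lemarek1A} I would substitute the factorization everywhere, cancel the $c_{n,k-1}$ contribution against $c_{n+1,k}$ via $q$-Pascal $\Binq{n+1}{k}=\Binq{n}{k-1}+q^k\Binq{n}{k}$, rewrite $(1-q^{n})\Binq{n-1}{k}=(1-q^{n-k})\Binq{n}{k}$ (symmetry $\Binq{n}{k}=\Binq{n}{n-k}$ plus absorption), and divide by $q^k\Binq{n}{k}$; what is left is precisely the $S$-recursion above with $m=n-k$. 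The boundary values $k<0$, $k=n+1$ and $k>n+1$ are checked directly from the conventions together with $\Binq{n}{n}=S_0=1$.

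The main (and essentially the only) obstacle is the $S$-recursion; everything else is routine bookkeeping with $q$-Pascal and absorption identities and reindexing of sums. A variant proof of the $S$-recursion, should one prefer to avoid coefficient comparison, is to derive $S_m(a,b)=a\,S_{m-1}(a,b)+b\,S_{m-1}(qa,b)$ from one $q$-Pascal rule and the mirror identity $S_m(a,b)=b\,S_{m-1}(a,b)+a\,S_{m-1}(a,qb)$ from the other, and then eliminate the $q$-dilated terms; but the coefficient comparison is shorter and more transparent.
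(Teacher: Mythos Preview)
Your proof is correct and takes a genuinely different route from the paper's.

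The paper argues indirectly via the polynomials themselves: writing $h_n=p_n(x;0,0,0,0|q)=\sum_k c_{n,k}Q_k$ with $Q_k=p_k(x;a,b,0,0|q)$, it compares the three-step recursions $2xh_n=h_{n+1}+(1-q^n)h_{n-1}$ and $2xQ_n=Q_{n+1}+q^n(a+b)Q_n+(1-q^n)(1-q^{n-1}ab)Q_{n-1}$ to obtain a single relation
\[
c_{n+1,k}=c_{n,k-1}+q^k(a+b)c_{n,k}+(1-q^{k+1})(1-q^kab)c_{n,k+1}-(1-q^n)c_{n-1,k}.
\]
Then it observes that $c_{n,k}(a,b)$ is homogeneous of degree $n-k$ in $a,b$ and separates this identity by degree: the degree-$(n-k-1)$ part is exactly \eqref{lemarek1B}, and the degree-$(n+1-k)$ part, after one application of \eqref{lemarek1B}, is \eqref{lemarek1A}.

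Your argument is more elementary and self-contained: the factorization $c_{n,k}=\Binq{n}{k}S_{n-k}$ via $q$-trinomial revision reduces everything to $q$-Pascal, absorption, and the two-term recursion for $S_m$, with no need to import the Al-Salam--Chihara recursion from the literature. It also makes the homogeneity manifest rather than an \emph{a posteriori} observation. The paper's approach, on the other hand, explains \emph{why} the two identities come as a pair (they are the graded pieces of one recursion forced by the polynomial recurrences) and would generalize mechanically to other families of connection coefficients. Incidentally, your $S_m$ is $H_m(a+b;(1-q)ab)$ in the notation of \eqref{q-H(t) recursion}, so your $S$-recursion is that three-step recursion specialized---which is a pleasant link back to the paper's polynomial viewpoint.
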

\begin{proof}
Let  $h_{n}(x)=p_n(x;0,0,0,0|q)$ and $Q_n(x)=p_n(x;a,b,0,0|q)$.  Then \eqref{AW2H0} is
$$h_n(x)=\sum_{k=0}^n c_{n,k}Q_k(x), \; n\geq 0.$$
Comparing the three step recursions
$$2xh_n(x)=h_{n+1}(x)+(1-q^n)h_{n-1}(x)$$
and
\begin{equation}
  \label{rec-A-Chi}
  2xQ_n(x)=Q_{n+1}(x)+q^n(a+b)Q_n(x)+(1-q^n)(1-q^{n-1}ab)Q_{n-1}(x),
\end{equation}
see, e.g., \cite[(3.8.4)]{koekoek1998askey},   we get
\begin{equation}
  \label{c***}
  c_{n+1,k}=c_{n,k-1}+q^k(a+b)c_{n,k}
  \NL +(1-q^{k+1})(1-q^kab)c_{n,k+1}
 -(1-q^n)c_{n-1,k}.
\end{equation}
	Indeed, expanding both sides of $2xh_n(x)=h_{n+1}(x)+(1-q^n)h_{n-1}(x)$ and applying \eqref{rec-A-Chi} to the expansion on left hand side we get
\begin{multline*}
\sum_{k=0}^n c_{n,k}\left(Q_{k+1}(x)+q^k(a+b)Q_k(x)+(1-q^k)(1-q^{k-1}ab)Q_{k-1}(x)
\right)
\\= \sum_{k=0}^{n+1}c_{n+1,k}Q_k(x)+(1-q^n)\sum_{k=0}^{n-1}c_{n-1,k}Q_k(x).
\end{multline*}
	The formula follows by comparing the coefficients at $Q_k(x)$.

Since $c_{n,k}=c_{n,k}(a,b)$ is a homogeneous polynomial of degree $n-k$  in variables $a$ and $b$, we can separate the
components of
recursion \eqref{c***} into the  pair of recursions. The terms of degree $n-k-1$ give  \eqref{lemarek1B}. The terms of degree
$n+1-k$ give $c_{n+1,k}=c_{n,k-1}+q^k(a+b)c_{n,k}-(1-q^{k+1})q^k a b \cdot c_{n,k+1}$, which gives \eqref{lemarek1A} after
using \eqref{lemarek1B}.
\end{proof}
\begin{corollary}\label{corrolaryocnk}
$$(ab;q)_kc_{n+1,k}-(1-q^nab)(ab;q)_{k-1}c_{n,k-1}=(a+b)\left(\frac{ab}{q};q\right)_kq^kc_{n,k}
\NL -(1-q^n)ab \left(\frac{ab}{q};q\right)_kq^kc_{n-1,k}.
$$
\end{corollary}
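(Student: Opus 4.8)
The plan is to peel off the Pochhammer factors, reduce everything to a single elementary three--term recurrence for the homogeneous $q$--analogue of $(a+b)^m$, and then close with the $q$--Pascal rule. Write $s_m=s_m(a,b):=\sum_{j=0}^{m}\Binq{m}{j}a^{m-j}b^{j}$; this is symmetric in $a,b$, homogeneous of degree $m$, and satisfies $s_0=1$, $s_{-1}=0$. Using $\Binq{n}{\ell}\Binq{\ell}{k}=\Binq{n}{k}\Binq{n-k}{\ell-k}$ in \eqref{c_nk} one gets $c_{n,k}=\Binq{n}{k}s_{n-k}(a,b)$, and it is this factorisation that will make the final reduction trivial.

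First I would substitute \eqref{lemarek1A} for $c_{n+1,k}$ in the left--hand side of the corollary and regroup, writing $(ab;q)_k=(1-abq^{k-1})(ab;q)_{k-1}$ and $(ab/q;q)_k=(1-ab/q)(ab;q)_{k-1}$. The two terms proportional to $c_{n,k-1}$ then combine into $ab(q^{n}-q^{k-1})(ab;q)_{k-1}c_{n,k-1}$, and in the remaining two terms $(ab;q)_k$ may be replaced by $(ab/q;q)_k$ at the price of the correction $q^{k}(ab;q)_k-q^{k}(ab/q;q)_k=q^{k-1}ab(1-q^{k})(ab;q)_{k-1}$. Factoring $q^{k-1}ab(ab;q)_{k-1}$ out of the difference of the two sides and discarding it (it is a nonzero element of $\CC[a,b]$, so this is legitimate), the corollary becomes equivalent to
\[
(1-q^{n-k+1})\,c_{n,k-1}=(1-q^{k})(a+b)\,c_{n,k}-(1-q^{k})(1-q^{n})ab\,c_{n-1,k}.
\]

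To prove this reduced identity I would insert $c_{n,k}=\Binq{n}{k}s_{n-k}$, $c_{n,k-1}=\Binq{n}{k-1}s_{n-k+1}$, $c_{n-1,k}=\Binq{n-1}{k}s_{n-1-k}$ and simplify the $q$--binomials by the elementary relations $(1-q^{n-k+1})\Binq{n}{k-1}=(1-q^{k})\Binq{n}{k}=(1-q)[k]_q\Binq{n}{k}$ and $(1-q^{k})(1-q^{n})\Binq{n-1}{k}=(1-q)[k]_q(1-q^{n-k})\Binq{n}{k}$. Cancelling the common factor $(1-q)[k]_q\Binq{n}{k}$ leaves exactly
\[
s_{m+1}=(a+b)\,s_m-(1-q^{m})ab\,s_{m-1},\qquad m=n-k .
\]
Finally I would prove this recurrence by expanding the right--hand side and reading off the coefficient of $a^{m+1-j}b^{j}$, which equals $\Binq{m}{j}+\Binq{m}{j-1}-(1-q^{m})\Binq{m-1}{j-1}$; this is $\Binq{m+1}{j}$ by the $q$--Pascal rule $\Binq{m+1}{j}=\Binq{m}{j}+q^{m+1-j}\Binq{m}{j-1}$ together with $(1-q^{m-j+1})\Binq{m}{j-1}=(1-q^{m})\Binq{m-1}{j-1}$.

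The computation is routine; the only real obstacle is the bookkeeping in the regrouping step --- keeping straight which Pochhammer symbols appear with which power of $q$ --- and noticing the factorisation $c_{n,k}=\Binq{n}{k}s_{n-k}$, after which the reduced identity is just the familiar $q$--Hermite--type recurrence for $s_m$. (Alternatively, the reduced identity can be obtained directly from \eqref{lemarek1A} and \eqref{lemarek1B} by index shifting, but via $s_m$ it is shorter.)
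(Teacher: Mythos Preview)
Your proof is correct, but it follows a different route from the paper's.

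The paper divides the target identity through by $(ab;q)_{k-1}$ and then exploits the fact that $c_{n,k}$ is homogeneous of degree $n-k$ in $a,b$ to split the resulting equation into its two homogeneous components: the degree-$(n+1-k)$ part is exactly \eqref{lemarek1A}, and the degree-$(n+3-k)$ part, after subtracting \eqref{lemarek1A}, collapses to $(1-q^k)c_{n+1,k}=(1-q^{n+1})c_{n,k-1}$, which is \eqref{lemarek1B} with shifted indices. So the paper's argument stays entirely within Lemma~\ref{lemmarek1} and is quite short.

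Your approach instead substitutes \eqref{lemarek1A} at the outset to eliminate $c_{n+1,k}$, regroups the Pochhammer factors, and reduces to the single identity $(1-q^{n-k+1})c_{n,k-1}=(1-q^{k})(a+b)c_{n,k}-(1-q^{k})(1-q^{n})ab\,c_{n-1,k}$. You then prove this via the factorisation $c_{n,k}=\Binq{n}{k}s_{n-k}$ and the Rogers--type recurrence $s_{m+1}=(a+b)s_m-(1-q^m)ab\,s_{m-1}$. This factorisation is not used in the paper and is a nice structural observation in its own right; it makes the reduced identity completely transparent, at the cost of a little more Pochhammer bookkeeping upstream. As you note parenthetically, your reduced identity also follows directly from combining \eqref{lemarek1A} with the index-shifted \eqref{lemarek1B}, which is essentially the paper's path run in reverse.
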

\begin{proof}
	It is enough to prove that
$$(1-abq^{k-1})c_{n+1,k}=(a+b)\left(1-\frac{ab}{q}\right)q^k c_{n,k}+(1-q^nab)c_{n,k-1}
\NL-(1-q^n)ab\left(1-\frac{ab}{q}\right)q^kc_{n-1,k}.
$$
	Since $c_{n,k}$ is a homogeneous polynomial of degree $n-k$  in variables $a$ and $b$ this is equivalent to a pair of identities
	\begin{equation}
	c_{n+1,k}=q^k(a+b)c_{n,k}+c_{n,k-1}-q^k(1-q^n)ab\cdot c_{n-1,k},\label{corrolaryocnkeq1}
	\end{equation}
which is 	\eqref{lemarek1A}, and	
\begin{equation}-abq^{k-1}c_{n+1,k}=-ab(a+b)q^{k-1}c_{n,k}-q^nab\cdot c_{n,k-1}
\NL+(1-q^n)a^2b^2q^{k-1}c_{n-1,k}.
\label{corrolaryocnkeq2}
	\end{equation}
 To prove  \eqref{corrolaryocnkeq2} it is enough to verify that
	\begin{equation*}
	q^kc_{n+1,k}=q^k(a+b)c_{n,k}+q^{n+1}c_{n,k-1}-q^k(1-q^n)ab\cdot c_{n-1,k}.
	\end{equation*}
	To do this, we subtract this expression from \eqref{corrolaryocnkeq1} and use %
 \eqref{lemarek1B}.
\arxiv{
We get $(1-q^k)c_{n+1,k}= (1-q^{n+1})c_{n,k-1} $.
}
\end{proof}
We also need the following recursion which was discovered by Mathematica  package  {\tt qZeil} \cite{paule1997mathematica},
but for which we have a standard  proof.
\begin{lemma}\label{lammarek2}  For $0\leq n <N$, $a\ne 0$ and $b,c,d\in\CC$  we have
\begin{equation}
  \label{qZeil} (1-a b c d q^{n}) \beta_{n+1}(a,b,c,d)=  \left(c  +d -  c d (a+b) q^{n}\right) \beta_{n}(a,b,c,d)
  \NL -	c d \left(1-q^{n}\right)  \beta_{n-1}(a,b,c,d).
\end{equation}
	The initial condition for this recursion is  $\beta_{0}=1, \beta_{-1}=0$.
\end{lemma}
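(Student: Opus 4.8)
The plan is to recognize the terminating ${}_3\phi_2$ defining $\beta_n$ as a renormalized Al--Salam--Chihara polynomial and to read off \eqref{qZeil} from the three-term recurrence \eqref{rec-A-Chi} already recorded above.

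First I would specialize the last expression in \eqref{AW} to $c=d=0$: then $ac=ad=abcd=0$, so the factor $(abcdq^{n-1};q)_k$ and both factors $(ac,ad;q)_k$ become $1$, the defining ${}_4\phi_3$ collapses to a ${}_3\phi_2$, and using $\prod_{j=0}^{k-1}(1+A^2q^{2j}-2Axq^j)=(Ae^{i\psi};q)_k(Ae^{-i\psi};q)_k$ with $x=\cos\psi$ one gets
\[
p_k(x;A,B,0,0\,|\,q)=A^{-k}(AB;q)_k\,{}_3\phi_2\!\left(\begin{matrix}q^{-k},\ Ae^{i\psi},\ Ae^{-i\psi}\\ AB,\ 0\end{matrix}\middle|q;q\right).
\]
Comparing with $\beta_k(a,b,c,d)=a^{-k}\,{}_3\phi_2(q^{-k},ad,ac;0,abcd\,|\,q;q)$ and matching parameters (the ${}_3\phi_2$ is symmetric in its numerator and in its denominator parameters separately), I would take $AB=abcd$ and $\{Ae^{i\psi},Ae^{-i\psi}\}=\{ac,ad\}$, which forces $A=a\sqrt{cd}$, $B=b\sqrt{cd}$ and $x=x_0:=\frac{c+d}{2\sqrt{cd}}$. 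This yields the key identity
\[
(abcd;q)_k\,\beta_k(a,b,c,d)=(cd)^{k/2}\,p_k\!\left(x_0;\,a\sqrt{cd},\,b\sqrt{cd},\,0,\,0\,\middle|\,q\right),
\]
which I would regard as an identity of rational functions of $a,b,c,d$ for $cd\neq0$; the degenerate case $cd=0$, where $\beta_n=d^n$ if $c=0$, follows by continuity.

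Next I would substitute $p_k=(abcd;q)_k(cd)^{-k/2}\beta_k$ into \eqref{rec-A-Chi} taken with parameters $A=a\sqrt{cd}$, $B=b\sqrt{cd}$ and argument $x_0$ (note $1-q^{k-1}AB=1-abcd\,q^{k-1}$). Dividing through by $(abcd;q)_{k-1}(cd)^{-(k-1)/2}$, cancelling the common factor $1-abcd\,q^{k-1}$, multiplying by $cd$, and using $2x_0(cd)^{1/2}=c+d$ together with $(A+B)(cd)^{1/2}=(a+b)cd$, the recurrence \eqref{rec-A-Chi} turns into exactly \eqref{qZeil}. The initial data are immediate: ${}_3\phi_2(q^{0},ad,ac;0,abcd\,|\,q;q)$ has $(1;q)_j=0$ for $j\geq1$, so $\beta_0=1$, and $\beta_{-1}=0$ by the usual convention (consistent, since at $n=0$ the recursion returns $\beta_1=\frac{c+d-cd(a+b)}{1-abcd}$, which also comes straight out of the sum).

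The only delicate point is cosmetic: one must check that the two occurrences of $\sqrt{cd}$ pair up so that the key identity — and hence \eqref{qZeil}, whose coefficients involve only $cd$ — is independent of the chosen branch (flipping the branch sends $A,B,x_0$ to $-A,-B,-x_0$ and multiplies both sides by $(-1)^k$), and to note that cancelling $1-abcd\,q^{k-1}$ is legitimate because \eqref{qZeil} is an identity of rational functions in the parameters. (If one prefers to sidestep Al--Salam--Chihara, \eqref{qZeil} can instead be proved by creative telescoping: writing $a^n\beta_n=\sum_j F(n,j)$, exhibit a rational certificate $R(n,j)$ with $R(n,0)=0$ such that $(1-abcd\,q^n)F(n+1,j)-a(c+d-cd(a+b)q^n)F(n,j)+a^2cd(1-q^n)F(n-1,j)=R(n,j+1)F(n,j+1)-R(n,j)F(n,j)$, then sum over $j$; but the orthogonal-polynomial route above is shorter and more transparent.)
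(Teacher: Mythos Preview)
Your proposal is correct and follows essentially the same route as the paper: both proofs identify $a^n\beta_n(a,b,c,d)$ with an Al--Salam--Chihara polynomial (the paper writes it as $\widetilde Q_n(x_*;a\sqrt{cd},b\sqrt{cd})$ with $x_*=\tfrac12(\sqrt{c/d}+\sqrt{d/c})$, which is your $x_0$), substitute into the three-term recurrence, and then extend to general parameters by rationality. The only cosmetic differences are that the paper works with the normalization $\widetilde Q_n=a^n(ab;q)_n^{-1}p_n(\cdot;a,b,0,0|q)$ rather than $p_n$ directly, and first restricts to $c,d>0$ before invoking rationality, whereas you handle the branch ambiguity and the case $cd=0$ by parity and continuity.
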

\begin{proof}
For $a\ne 0$, consider the Al-Salam--Chihara polynomials
\begin{equation}
  \label{Q-AlSalam}
  \widetilde Q_n(x;a,b)=\frac{a^n}{(ab;q)_n}p_n(x;a,b,0,0|q)={_3}\phi_2\left(\begin{matrix}
q^{-n},a e^{i\psi},a e^{-i\psi} \\
0,\ ab
\end{matrix}\middle|q;q\right),
\end{equation} where $x=\cos\psi$. %
The three step recursion for polynomials $\widetilde Q_n(x)$ is
\begin{multline}\label{ThreeStepKoek}
2x\widetilde Q_n(x;a,b)=a^{-1}(1-abq^n)\widetilde Q_{n+1}(x;a,b)
+(a+b)q^n\widetilde Q_n(x;a,b)
\NL+a(1-q^n)\widetilde Q_{n-1}(x;a,b)\end{multline}
with $\widetilde Q_0(x;a,b)=1$ and $\widetilde Q_{-1}(x;a,b)=0$. (This is a version of \eqref{rec-A-Chi} under different normalization.)
For $c,d>0$ let $x_*=\frac12\left(\sqrt{\tfrac{c}{d}}+\sqrt{\tfrac{d}{c}}\right)$.
It is easy to see that
$$\widetilde Q_n\left(x_*;a\sqrt{cd},b\sqrt{cd}\right)={_3}\phi_2\left(\begin{matrix}
q^{-n},\ ac,\ ad \\
0,\ abcd
\end{matrix}\middle|q;q\right)=a^n\beta_n(a,b,c,d).$$
Indeed, to  extend  polynomial $\widetilde Q_n(x)$  from $x=\cos \psi\in[-1,1]$ to $x>1$ we
  replace  $e^{\pm i\psi}$ in \eqref{Q-AlSalam} by $x\pm \sqrt{x^2-1}$. These  expressions evaluate to $\sqrt{c/d}$
  and $\sqrt{d/c}$ at $x=x_*$. %

Recursion  \eqref{ThreeStepKoek} implies that %
$$
   \left(\sqrt{\tfrac{c}{d}}+\sqrt{\tfrac{d}{c}}\right)a^n\beta_n=
   \frac{1}{a\sqrt{cd}}(1-abcdq^n)a^{n+1}\beta_{n+1}
+\left(a\sqrt{cd}+b\sqrt{cd}\right)q^na^n\beta_n
\NL+a\sqrt{cd}(1-q^n)a^{n-1}\beta_{n-1}.
$$
This implies \eqref{qZeil} for $a\ne 0$ and $c,d>0$. We now use the fact that $\beta_{n}(a,b,c,d)$ is a rational function of $a,b,c,d$,
  with the denominator that has factors $a^k$ and  $1-abcd q^k$, $0\leq k\leq n<N$. Thus recursion \eqref{qZeil} extends to all $a,b,c,d$
 within the domain of $\beta_{n}(a,b,c,d)$.
\end{proof}
 \begin{proof}[Proof of Proposition \ref{L:A=G}]
We will show that $$\widetilde A_n(a,b,c,d):=(abcd;q)_n\sum_{k=0}^n   (ab;q)_k   c_{n,k}\beta_k$$ satisfies recursion \eqref{B-rec}.
We first note that
$$c_{n,k}(a/\sqrt{q},b/\sqrt{q})=q^{(k-n)/2}c_{n,k}(a,b)$$ and $$ \beta_k(a/\sqrt{q},b/\sqrt{q},c\sqrt{q},d\sqrt{q})=q^{k/2}\beta_k(a,b,c,d).$$
We therefore want to show that
	\begin{eqnarray*}
	\frac{\widetilde A_{n+1}(a,b,c,d)}{(abcd;q)_n}&=&(a+b)(1-cd)\sum_{k=0}^{n}\left(\frac{ab}{q};q\right)_kq^kc_{n,k}\beta_k\\
	&&+(c+d)(1-q^nab)\sum_{k=0}^{n}\left(ab;q\right)_kc_{n,k}\beta_k\\
	&&-(1-q^n)ab(1-cd)\sum_{k=0}^{n-1}\left(\frac{ab}{q};q\right)_kq^kc_{n-1,k}\beta_k\\
	&&-(1-q^n)cd(1-abq^{n})\sum_{k=0}^{n}\left(ab;q\right)_kc_{n-1,k}\beta_k.
\end{eqnarray*}
We will be working with the right hand side of this equation. The sum of the first and the third term is equal to
$$(1-cd)\sum_{k=0}^n\left[(a+b)\left(\frac{ab}{q};q\right)_kq^kc_{n,k}-(1-q^n)ab\left(\frac{ab}{q};q\right)_kq^k c_{n-1,k}\right]\beta_k.$$
By Corollary \ref{corrolaryocnk} this is equal
$$(1-cd)\sum_{k=0}^{n}\left(ab;q\right)_kc_{n+1,k}\beta_k-(1-cd)(1-abq^n)\sum_{k=0}^{n}\left(ab;q\right)_{k-1}c_{n,k-1} \beta_k=$$
$$=(1-abcdq^n)\sum_{k=0}^{n}\left(ab;q\right)_kc_{n+1,k}\beta_k-cd(1-abq^n)\sum_{k=0}^{n}\left(ab;q\right)_kc_{n+1,k}\beta_k$$
$$-(1-cd)(1-abq^n)\sum_{k=0}^{n}\left(ab;q\right)_{k-1}c_{n,k-1}\beta_k,$$
since $(1-cd)=(1-abcdq^n)-cd(1-abq^n)$.\\

It follows that what we want to show is
$$	\frac{\widetilde A_{n+1}(a,b,c,d)}{(abcd;q)_n}=(1-abcdq^n)\sum_{k=0}^{n}\left(ab;q\right)_kc_{n+1,k}\beta_k+(1-abq^n) S,$$
where %
\begin{multline*}
S=S_1-S_2-S_3-S_4
\NL=\overbrace{(c+d)\sum_{k=0}^{n}\left(ab;q\right)_kc_{n,k}\beta_k}^{S_1}-\overbrace{(1-q^n)cd\sum_{k=0}^{n}\left(ab;q\right)_kc_{n-1,k}\beta_k}^{S_2}
\\-\overbrace{cd\sum_{k=0}^{n}\left(ab;q\right)_kc_{n+1,k}\beta_k}^{S_3}-\overbrace{(1-cd)\sum_{k=0}^{n}\left(ab;q\right)_{k-1}c_{n,k-1}\beta_k}^{S_4}.
\end{multline*}
We will finish the proof by showing that $S$ is equal to $(1-abcdq^n)\left(ab;q\right)_{n}c_{n+1,n+1}\beta_{n+1}$.

By Lemma \ref{lemmarek1}
\begin{multline*}S_3=cd\sum_{k=0}^{n}\left(ab;q\right)_k c_{n+1,k}\beta_k=S'_3+S''_3-S'''_3
\NL=
\overbrace{cd(a+b)\sum_{k=0}^{n}\left(ab;q\right)_kq^kc_{n,k}\beta_k}^{S'_3}+
\overbrace{cd\sum_{k=0}^{n}\left(ab;q\right)_kc_{n,k-1}\beta_k}^{S''_3}
\\-\overbrace{cd\sum_{k=0}^{n}\left(ab;q\right)_kq^k(1-q^n)ab\cdot c_{n-1,k}\beta_k}^{S'''_3}.
\end{multline*}
Since $cd\left(ab;q\right)_k=cd\left(ab;q\right)_{k-1}-abcdq^{k-1}\left(ab;q\right)_{k-1}=-(1-cd)\left(ab;q\right)_{k-1}+(1-abcdq^{k-1})\left(ab;q\right)_{k-1}$ we see that %
\begin{multline*}
 S''_3=-\overbrace{(1-cd)\sum_{k=0}^{n}\left(ab;q\right)_{k-1}c_{n,k-1}\beta_k}^{S_4}\NL
 +\overbrace{\sum_{k=0}^{n}(1-abcdq^{k-1})\left(ab;q\right)_{k-1}c_{n,k-1}\beta_k}^{I}
 =-S_4+I.
\end{multline*}
Writing $abq^k=-(1-abq^k)+1$ we can rewrite $S'''_3$ as
\begin{multline*}
S'''_3=-cd\sum_{k=0}^{n}\left(ab;q\right)_{k+1} \underbrace{(1-q^n)c_{n-1,k}}_{\textrm{Lemma}\  \ref{lemmarek1}}\beta_k +(1-q^n)cd\sum_{k=0}^{n}\left(ab;q\right)_{k} \beta_k\\
=-\overbrace{cd\sum_{k=0}^{n}\left(ab;q\right)_{k+1} (1-q^{k+1})c_{n,k+1}\beta_k}^{J} \NL
+\overbrace{(1-q^n)cd\sum_{k=0}^{n}\left(ab;q\right)_{k}c_{n-1,k} \beta_k}^{S_2}=-J+S_2.
\end{multline*}
Combining all the expressions together we obtain
$$S=\left(S_1-S'_3-J\right)-I.$$
The first expression is equal
\begin{multline*}
   S_1-S'_3-J\NL=\overbrace{\sum_{k=0}^{n}\left(ab;q\right)_kc_{n,k}\left[c+d-cd(a+b)q^k\right]\beta_k}^{S_1-S'_3}
   -\overbrace{cd\sum_{k=0}^{n}\left(ab;q\right)_{k}(1-q^{k})c_{n,k}\beta_{k-1}}^{J}
\\=\sum_{k=0}^{n}\left(ab;q\right)_kc_{n,k}\underbrace{\left\{\left[c+d-cd(a+b)q^k\right]\beta_k-cd(1-q^k)\beta_{k-1}\right\}}_{
\textrm{RHS of } \eqref{qZeil}} \NL
=\sum_{k=0}^{n}\left(ab;q\right)_kc_{n,k}(1-abcdq^k)\beta_{k+1}.\end{multline*}
Hence
\begin{multline*}S=\overbrace{\sum_{k=0}^{n}\left(ab;q\right)_kc_{n,k}(1-abcdq^k)\beta_{k+1}}^{S_1-S'_3-J}-
\overbrace{\sum_{k=0}^{n}(1-abcdq^{k-1})\left(ab;q\right)_{k-1}c_{n,k-1}\beta_k}^{I}
\NL =(ab;q)_nc_{n,n}(1-abcdq^n)\beta_{n+1}.\end{multline*}
This ends the proof, as $c_{n+1,n+1}=c_{n,n}=1$.
 \end{proof}

\begin{proof}[Proof of Theorem \ref{T3}] The proof does not use explicitly singularity condition  $q^Nabcd=1$, except for the constraints that it implies on the domain of $\varphi_0$ and on the
degrees of the polynomials $\{p_k: k=1,\dots,N\}$.

For $n=1$ this is a calculation, which  is also covered  by the induction step.
 Suppose that $p_k$ is of degree $k$ and
 $$\varphi_0\left[p_k\left( \xx_t;at,bt,c/t,d/t\middle|q\right)\right]=0 \mbox{  for  $k=1,\dots, n$.}$$
  Suppose that  polynomial $p_{n+1}$ is of  degree
$n+1$. Then, recalling  \eqref{ed2x}, we have
\begin{multline*}
 H_{n+1}(\ee/t^2+\dd\;;1/t^2)= H_{n+1}(2\theta \xx_t/t;1/t^2)= \frac{1}{t^{n+1}}H_{n+1}(2\theta \xx_t)
 \\= \frac{1}{t^{n+1}}
\sum_{k=0}^{n+1} a_{n+1,k}(at,bt,c/t,d/t)p_k\left(\xx_t;at,bt,c/t,d/t\middle|q\right)
\end{multline*}
by \eqref{H2P}. Since $p_0=1$, by inductive assumption we have
\begin{multline*}
   \varphi_0\left[H_{n+1}(\ee/t^2+\dd\;;1/t^2) \right] =\frac{1}{t^{n+1}}a_{n+1,0}(a t,bt,c/t,d/t)
   \NL+\frac{1}{t^{n+1}}a_{n+1,n+1}\varphi_0\left[ p_{n+1}\left(\xx_t;at,bt,c/t,d/t\middle|q\right)\right].
\end{multline*}
This shows that $\varphi_0[p_{n+1}\left(\xx_t;at,bt,c/t,d/t\middle|q\right)]=0$, provided that $a_{n+1,n+1}\ne 0$, which holds true due to
the assumption  on the degree of $p_{n+1}$, and provided that
\begin{equation*}
  a_{n+1,0}(a t,bt,c/t,d/t)=t^{n+1}G_{n+1}(1/t^2),
\end{equation*}
which holds true by Proposition \ref{L:A=G}.

Since the degree of polynomial $p_n$ is $n$ for $n\leq  \lfloor(N+1)/2\rfloor$, this establishes the conclusion such $n$. For $n>\lfloor(N+1)/2\rfloor$, polynomial $p_n$ is a constant multiple of polynomial $p_{N+1-n}$, so the conclusion also holds.
\end{proof}

\section{Conclusions}
In this paper we  construct   a functional $\varphi_0$, or a pair of functionals $\varphi_0,\varphi_1$,  on an abstract algebra   that give stationary probabilities for an ASEP of length $L$ with arbitrary parameters. Formula \eqref{MatrixSolution+} for the probabilities extends the celebrated matrix product ansatz \citep{derrida1993exact} to the singular case with $\alpha\beta=q^N\gamma\delta$.  Our  approach avoids an associativity pitfall that may arise in matrix product models. In  Appendix \ref{Sec:MatrixModel} we exhibit  an example of such a  matrix model that  satisfies the usual conditions \eqref{q-comm-Derrida} \eqref{W} \eqref{V},  yet it cannot be used to compute stationary probabilities.

While verifying that our functionals give non-zero answers for un-normalized probabilities, we noted an interesting phenomenon of current reversal   as the system size $L$ increases   when $\alpha\beta<\gamma\delta$ and $0<q<1$ .

In the non-singular case, we prove that functional $\varphi_0$ may serve as an  orthogonality functional for the Askey-Wilson polynomials with fairly general parameters. Part of this connection persists in the singular case $\alpha\beta=q^N\gamma\delta$ when the degrees of the first $N$ Askey-Wilson polynomials do not exceed $(N+1)/2$. In Appendix \ref{Sect:TASEP} we give explicit formulas for the (formal) Cauchy-Stieltjes transforms of both functionals when   $q=0$.

\subsection*{Acknowledgements}
The authors thank Peter Paule   for sharing mathematica software packages {\tt qZeil} and {\tt qMultiSum}
 developed in Research Institute for Symbolic Computation at the University of Linz. They thank   Daniel Tcheutia for helpful
 comments on the early draft of the paper and Alexei Zhedanov for references.
Finally the authors thank the referees for the thorough and informative reviews that helped us to improve the paper.

Marcin \'Swieca's research was partially supported by grant 2016/21/B/ST1/00005 of National Science Centre, Poland.

\appendix
\section{Auxiliary identities}
Here we collect $q$-hypergeometric formulas used in this paper. Cauchy's $q$-binomial formula is
\begin{equation}\label{pochtopot}
 (x;q)_n=\sum_{k=0}^n{n \brack k}_q(-1)^kq^{\frac{k(k-1)}{2}}x^k.
\end{equation}
Heine's summation formula   \cite[(1.5.3)]{gasper2004basic} reads
\begin{equation}
{_2\phi_1}\left(\begin{matrix}
q^{-n},\ b\\ c
\end{matrix}\middle|q;q\right)=\frac{(c/b;q)_n}{(c;q)_n}b^n.\label{twophione}
\end{equation}
We also need   the connection coefficients of the Askey-Wilson polynomials.
\begin{theorem}[\cite{Askey-Wilson-85}]
If $a_4\neq 0$ then %
$$p_n(x;b_1,b_2,b_3,a_4|q)=\sum_{k=0}^nc_{n,k}p_k(x;a_1,a_2,a_3,a_4|q),$$
where
\begin{multline}\label{conectiocoefficient}
c_{n,k}=(b_1b_2b_3a_4;q)_k\frac{q^{k(k-n)}(q;q)_n(b_1a_4q^k,b_2a_4q^k,b_3a_4q^k;q)_{n-k}}{a_4^{n-k}(q;q)_{n-k}(q,a_1a_2a_3a_4q^{k-1};q)_k}
\\ \times{_5\phi_4}\left(\begin{matrix}
q^{k-n},\ b_1b_2b_3a_4q^{n+k-1},\  a_1a_4q^k,\ a_2a_4q^k,\ a_3a_4q^k\\b_1a_4q^k, \ \  b_2a_4q^k,\ \ b_3a_4q^k,\ \  a_1a_2a_3a_4q^{2k}
\end{matrix}\middle|q;q\right).
\end{multline}
\end{theorem}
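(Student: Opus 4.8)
The plan is to pass to the ``$q$-shifted factorial'' basis attached to the common parameter $a_4$ and reduce the identity to a single terminating $q$-hypergeometric summation. Fix $x=\cos\psi$ and set $\phi_j(x;a):=(ae^{i\psi},ae^{-i\psi};q)_j=\prod_{i=0}^{j-1}(1-2axq^i+a^2q^{2i})$; for $a\neq0$ this is a polynomial of degree exactly $j$ in $x$, so $\{\phi_j(\cdot;a_4)\}_{j\leq m}$ is a basis of the polynomials of degree $\leq m$. Using the permutation symmetry of the Askey--Wilson polynomials to single out $a_4$, formula \eqref{AW} yields
$$p_n(x;b_1,b_2,b_3,a_4|q)=a_4^{-n}(a_4b_1,a_4b_2,a_4b_3;q)_n\sum_{j=0}^n\frac{(q^{-n},q^{n-1}b_1b_2b_3a_4;q)_j}{(q,a_4b_1,a_4b_2,a_4b_3;q)_j}\,q^j\,\phi_j(x;a_4),$$
together with the analogous expansion of $p_k(x;a_1,a_2,a_3,a_4|q)$ in the same basis.

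First I would invert the $a$-expansion: since it is triangular with nonzero diagonal, there are coefficients $d_{m,k}$ with $\phi_m(x;a_4)=\sum_{k=0}^m d_{m,k}\,p_k(x;a_1,a_2,a_3,a_4|q)$, and $d_{m,k}$ is the entry of the inverse of the triangular matrix read off above. That matrix lies in a classical family of $q$-hypergeometric triangular arrays whose inverse is known in closed form (a $q$-analogue of the binomial-type inversion pairs); equivalently, $d_{m,k}$ is obtained by solving the triangular recursion and evaluating a terminating $q$-Chu--Vandermonde series. Substituting into the $b$-expansion and collecting the coefficient of $p_k$ gives
$$c_{n,k}=a_4^{-n}(a_4b_1,a_4b_2,a_4b_3;q)_n\sum_{j=k}^n\frac{(q^{-n},q^{n-1}b_1b_2b_3a_4;q)_j}{(q,a_4b_1,a_4b_2,a_4b_3;q)_j}\,q^j\,d_{j,k},$$
a single terminating sum over $j$ whose summand is a ratio of $q$-shifted factorials.

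The last step is to evaluate that sum. After pulling out the prefactor $(b_1b_2b_3a_4;q)_k\,q^{k(k-n)}(q;q)_n(b_1a_4q^k,b_2a_4q^k,b_3a_4q^k;q)_{n-k}/\big(a_4^{n-k}(q;q)_{n-k}(q,a_1a_2a_3a_4q^{k-1};q)_k\big)$ and reindexing $j\mapsto j+k$, the remaining series is balanced and terminating, and a Sears transformation of a terminating ${}_4\phi_3$ combined with the $q$-Saalsch\"utz evaluation (equivalently a limiting case of Watson's ${}_8\phi_7$ summation) collapses it to the ${}_5\phi_4$ on the right of \eqref{conectiocoefficient}. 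I expect this evaluation --- choosing the right transformation and tracking the five numerator/denominator parameter pairs and the powers of $q$ through it --- to be the main obstacle. An alternative that avoids the search for the right transformation is a \emph{verification} argument: let $\widetilde c_{n,k}$ denote the right-hand side of \eqref{conectiocoefficient}, expand $\sum_{k=0}^n\widetilde c_{n,k}\,p_k(x;a_1,a_2,a_3,a_4|q)$ in the basis $\{\phi_j(x;a_4)\}$, and match the coefficient of each $\phi_j(x;a_4)$ against the coefficient of $p_n(x;b_1,b_2,b_3,a_4|q)$; each matching is one terminating $q$-hypergeometric identity in $j$, which the $q$-Zeilberger algorithm certifies via a first-order recurrence and an initial value.
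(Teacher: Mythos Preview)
The paper does not prove this theorem. It appears in Appendix~A (``Auxiliary identities''), where it is simply quoted from Askey and Wilson's memoir \cite{Askey-Wilson-85} as a known result needed elsewhere in the paper; there is no proof in the paper to compare against.

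Your strategy---exploit the common parameter $a_4$ to expand both Askey--Wilson families in the basis $\{\phi_j(x;a_4)\}$, invert one triangular expansion, and substitute---is sound and is in fact the route Askey and Wilson themselves take in \cite{Askey-Wilson-85}. One point to sharpen: your description of the final step is slightly muddled. You write that a Sears transformation plus $q$-Saalsch\"utz ``collapses'' the sum to the ${}_5\phi_4$, but $q$-Saalsch\"utz evaluates a balanced terminating ${}_3\phi_2$ to a closed product, not to a ${}_5\phi_4$. What actually happens is simpler than you anticipate: once you have the explicit inverse coefficients $d_{j,k}$ (these are themselves expressible via a terminating ${}_3\phi_2$ that $q$-Saalsch\"utz handles, or equivalently via the known $q$-inverse pair for the ${}_4\phi_3$ array), the single sum $\sum_{j\ge k}(\cdots)d_{j,k}$, after reindexing $j\mapsto j+k$ and repeatedly using $(a;q)_{j+k}=(a;q)_k(aq^k;q)_j$ and $(q^{-n};q)_{j+k}=(q^{-n};q)_k(q^{k-n};q)_j\,q^{\ldots}$, is already a terminating ${}_5\phi_4$ in canonical form. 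No further transformation is needed---the ``main obstacle'' is careful bookkeeping of $q$-Pochhammer symbols and powers of $q$, not locating a summation theorem. Your $q$-Zeilberger fallback would certainly work but is unnecessary here.
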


\section{Totally asymmetric case}\label{Sect:TASEP}
Our recursions simplify  when  $q=0$, i.e.,  the case of Totaly Asymmetric Exclusion Process.
Then the conclusion of Theorem \ref{T3} can be derived more directly, and there is also  additional information about $\varphi_1$ in the
  singular case   $abcd=1$.

For $q=0$, Ref. \cite{Askey-Wilson-85} relates Askey-Wilson polynomials  $p_n$ to the Chebyshev polynomials $U_n$ of second kind.
Denote by $s_j(a,b,c,d)$ the $j$-th symmetric function, i.e. $s_1=a+b+c+d$, $s_2=ab+ac+ad+bc+bd+cd$, $s_3=abc+abd+acd+bcd$, $s_4=abcd$. Then with $U_{-1}=0$ we have
\begin{eqnarray*}
p_0&=&U_0 \\
 p_1&=&(1-s_4) U_1+(s_3-s_1)U_0
 \\
  p_2&=&U_2-s_1U_1+(s_2-s_4)U_0 \\
p_n&=&U_n-s_1U_{n-1}+s_2U_{n-2}-s_3U_{n-3}+s_4U_{n-4} \mbox{ for } n\geq 3.
\end{eqnarray*}
Recall that $G_n(1)=\varphi_0[H_n(\ee+\dd)]=\varphi_0[U_n(\xx)]$.
So in the non-singular case the conclusion of Theorem \ref{T3} follows from the following relations between $G_n(1)$.
\begin{eqnarray}
  (1 - s_4) G_1(1) + (s_3 - s_1) G_0(1)&=&0 \label{Eqnt-1}\\
  G_2(1)-
 s_1 G_1(1)+ (s_2 - s_4)  G_0(1) &=&0 \label{Eqnt-2}\\
 G_n(1) - s_1 G_{n-1}(1) +
 s_2 G_{n-2}(1) - s_3 G_{n-3}(1) +
 s_4 G_{n-4}(1)&=&0, \quad \; n\geq 3 \label{Eqnt-n}.
\end{eqnarray}
These relations can be established by analyzing   explicit solutions of recursion \eqref{Eq:MArcin2}.
  We first determine the initial (irregular) solutions
 $$G_1(t)=\frac{(c+d) (t-a b)+(a+b) (1-c d t)}{1-a b c d}$$  and
 \[G_2(t)=t (c+d)G_1(t)  +\frac{(a+b) (1-c d t) (a+b- a b (c+d))}{1-a b c d} -a b (
   1-cd t)-c d t^2\]
   which we use with $t=1$ to verify \eqref{Eqnt-1} and \eqref{Eqnt-2}.
Next, we use \eqref{Eq:MArcin2} with $t=0$ and $n\geq 1$ to determine
 $\alpha_n=G_n(0)$ from the  recursion of order $2$,
 \begin{equation}\label{G0}
 \alpha_{n+1}(0)=(a+b)\alpha_n -ab \alpha_{n-1} .
 \end{equation}
 Since in our setting arising from ASEP parameters  $b\leq 0<a$ are not equal, the general solution  is
 $$\alpha_n=C_1 a^n+C_2 b^n.$$
 The constants $C_1, C_2 $
 are determined from the initial values of $G_0(0)=1$ and $G_1(0)=\frac{a+b-a b (c+d)}{1-a b c d}$. We get
 $$\alpha_n=
 \frac{ (
 1-b  c) (1-b d)}{(a-b) (1-a b c d)}a^{n+1}+ \frac{(1-a c) (1-a d)}{(b-a) (1-a b c d)} b^{n+1}.$$
Next we solve the recursion for $z_n=G_n(1)$. This is now a non-homogeneous recursion
$$
z_{n+1} =(1-cd)((a+b)\alpha_n-ab \alpha_{n-1})+(c+d)z_n -cd z_{n-1},
$$
which we simplify using \eqref{G0} into
$$
z_{n+1} =(1-cd)\alpha_{n+1}+(c+d)z_n-cd z_{n-1}.
$$
Since $d\leq 0<c$, the general solution of this recursion is
$$G_n(1)=z_n= B_1 a^{n+3}+B_2 b^{n+3}+K_1 c^{n+3}+K_2 d^{n+3}, \; n\geq 0$$
where
$$B_1=\frac{(1-b c) (1-b d) (1-c d)}{(a-b) (a-c) (a-d) (1-a b c d)}, \; \quad B_2=\frac{(1-a c) (1-a d) (1-c d)}{(b-a) (b-c) (b-d) (1-a b c d)}$$
come from the undetermined coefficients method and
$$ K_1=\frac{(1-a b) (1-a d) (1-b d) }{(c-a)
   (c-b) (c-d) (1-a b c d)} , \; \quad K_2=\frac{ (1-a b) (1-a c) (1-b c)
 }{(d-a) (d-b) (d-c) (1-a b c d)} $$
 come from matching the initial values. It
turns out that the explicit values of the constants are only needed for verification of the initial equations, as equation
\eqref{Eqnt-n} holds for any linear combination of $a^n,b^n,c^n,d^n$.

Proceeding in  similar way  we can also derive a version of Theorem \ref{T3}  that  relates functional $\varphi_1$ to Askey-Wilson polynomials.
We have
$$F_0(t) =0, \quad
F_1(t)=\frac{1-cdt}{1-cd}.$$
The recursion for $\alpha_n=F_n(0)$ is \eqref{G0}, so using the above initial values we get the solution
$$
F_n(0)=\frac{a^{n}-b^{n}}{(a-b)(1-cd)}, \quad  n\geq0.
$$
The recursion for $F_n(1)$ is
$$F_{n+1}(1)=(c+d)F_n(1)-c d F_{n-1}(1)+\frac{a^n-b^n}{a-b}, \quad n\geq 1.$$
Here the constants are simpler and a  calculation gives
\begin{multline}
  \label{Fq=0}
  F_{n}(1)= \frac{a^{n+2}}{(a-b)(a-c)(a-d)}+ \frac{b^{n+2}}{(b-a)(b-c)(b-d)}+\frac{c^{n+2}}{(c-a)(c-b)(c-d)}
  \\+\frac{d^{n+2}}{(d-a)(d-b)(d-c)},\; n\geq0 .
\end{multline}
Noting that in the singular case $p_1$ is a constant, we have $\varphi_1[p_n(\xx)]=0$ for all $n=0,1,\dots$.
\arxiv{ To avoid the irregularity with $p_1$ in the singular case, we can also consider the following family of polynomials:
\begin{eqnarray*}
q_0(x)&=&U_0(x)\\
  q_1(x)&=&U_1(x)+(s_3-s_1)U_0 (x)\\
  q_2(x)&=& U_2(x)-s_1 U_1(x)+(s_2-s_4) U_0(x) \\
  q_n(x)&=& U_n(x)-s_1U_{n-1}(x)+s_2U_{n-2}(x)-s_3U_{n-3}(x)+s_4U_{n-4}(x), \quad n\geq 3.
\end{eqnarray*}
Since $2xU_n=U_{n+1}+U_{n-2}$, polynomials $q_n$ satisfy the following finite perturbation of the constant three step recursion:
\begin{eqnarray*}
   2x q_0&=&q_1+(s_1-s_3)q_0 \\
   2x q_1&=&q_2+s_3 q_1+(s_4-s_2+s_3(s_1-s_3))q_0\\
   2x q_2&=&q_3+q_1 \\
   2x q_n&=&q_{n+1}+q_{n-1}, \quad n\geq 2.
\end{eqnarray*}
As previously, \eqref{Fq=0} implies that $\varphi_1[q_1(\xx)]=1$ and $\varphi_1[q_n(\xx)]=0$  for $n\geq 2$.
Since  $x^kq_n$ is a linear combination of $g_{n-k},g_{n-k+1},\dots,g_{n+k}$ this implies that
$$
\varphi_1[q_k(\xx)q_n(\xx)]=0 \mbox{ for } |n-k|\geq 2.
$$

}

Motivated by the generating function  $\sum_{n=0}^\infty H_n(x)z^n=1/(1+z^2-xz)$ lets denote by
$\varphi[(1+z^2 -(\ee+\dd) z)^{-1}]$ the   power series $\sum_{n=0}^\infty \varphi[H_n(\ee+\dd)]z^n$.  We can now summarize the above formulas more concisely.
\begin{proposition}\label{P-Szpojankowski} If $abcd\ne 1$ then for $|z|$ small enough
  \[\varphi_0[(1+z^2 -(\ee+\dd) z)^{-1}]=\frac{1+z^2 abcd}{(1-az)(1-bz)(1-cz)(1-dz)} +
  \frac{ z abcd(a+b+c+d-(1/a+1/b+1/c+1/d))  }{(1-a b c d)(1-az)(1-bz)(1-cz)(1-dz)}.\]
  If $abcd= 1$ then for $|z|$ small enough
$$
      \varphi_1[(1+z^2 -(\ee+\dd) z)^{-1}]=\frac{z}{(1-a z) (1-b z) (1-c z) (1-d z)}.
$$
\end{proposition}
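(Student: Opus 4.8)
The plan is to compute each of the two generating series $\sum_{n\ge0}\varphi[H_n(\ee+\dd)]z^n$ directly from the closed-form solutions of the $q=0$ recursions already obtained in this appendix, and then recombine them into the stated rational expressions by an elementary symmetric-function manipulation; in both cases the series converges for $|z|$ small, so the formal identities are genuine identities of analytic functions.

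For the case $abcd=1$ (functional $\varphi_1$): formula \eqref{Fq=0} gives $\varphi_1[H_n(\ee+\dd)]=F_n(1)=\sum_{j\in\{a,b,c,d\}}j^{n+2}/\prod_{k\ne j}(j-k)$ for every $n\ge0$, so summing a geometric series in each of the four terms yields
$$\varphi_1\big[(1+z^2-(\ee+\dd)z)^{-1}\big]=\sum_{j\in\{a,b,c,d\}}\frac{j^{2}}{(1-jz)\prod_{k\ne j}(j-k)}.$$
I would then multiply by $(1-az)(1-bz)(1-cz)(1-dz)$ and use $\prod_{k\ne j}(1-kz)=z^{3}\prod_{k\ne j}(z^{-1}-k)$ to rewrite the right-hand side as $z^{3}\sum_{j}j^{2}\prod_{k\ne j}(z^{-1}-k)/\prod_{k\ne j}(j-k)$, in which the sum is exactly the Lagrange interpolant through the nodes $a,b,c,d$ of the polynomial $w\mapsto w^{2}$, evaluated at $w=z^{-1}$. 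Since $w^2$ has degree $2<4$, that interpolant equals $w^2$, so the sum is $z^{-2}$ and the whole expression is $z$, which is the second identity. Coincidences among $a,b,c,d$ (which can occur for ASEP parameters) cause no trouble, since every quantity here is a rational function of $a,b,c,d$.

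For the case $abcd\ne1$ (functional $\varphi_0$): I would specialize recursion \eqref{Eq:MArcin2} at $q=0$, $\theta=1$, $t=1$. Since then $q^{n}=0$ and $[n]_q=1$ for $n\ge1$, it collapses to the order-two recursion $G_{n+1}(1)=(c+d)G_n(1)-cd\,G_{n-1}(1)+(1-cd)\alpha_{n+1}$ with $\alpha_m:=G_m(0)$, which also holds for $n=0$ upon setting $G_{-1}(1):=0$ (a direct check); and \eqref{G0} governs $\alpha_m$ for $m\ge1$ together with the irregular value $\alpha_1=\bigl(a+b-ab(c+d)\bigr)/(1-abcd)$. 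Passing to the generating functions $g(z)=\sum_{n\ge0}G_n(1)z^n$ and $A(z)=\sum_{n\ge0}\alpha_n z^n$ converts the two recursions into $g(z)(1-cz)(1-dz)=cd+(1-cd)A(z)$ and $A(z)(1-az)(1-bz)=1+(\alpha_1-a-b)z$; eliminating $A(z)$ presents $g(z)$ as a rational function with denominator $(1-az)(1-bz)(1-cz)(1-dz)$ and a numerator polynomial of low degree, which after collecting the elementary symmetric functions of $a,b,c,d$ reduces to the asserted right-hand side. (Alternatively one can bypass the generating functions and sum the explicit closed form $G_n(1)=B_1a^{n+3}+B_2b^{n+3}+K_1c^{n+3}+K_2d^{n+3}$ as four geometric series, then recombine over the common denominator.)

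The geometric summations and the rational-function algebra are routine; the two places that need care are (i) the ``irregular'' $n=0$ behaviour of the recursions — the $q^{0}=1$ contribution in \eqref{Eq:MArcin2} and the failure of \eqref{G0} at $m=0$ — which must be tracked so that no term is lost or spuriously created when passing to generating functions, and (ii) the symmetric-function bookkeeping that simplifies the numerator of $g(z)$. For the $\varphi_1$ identity the only non-mechanical step is recognizing the telescoping sum as a value of a Lagrange interpolation.
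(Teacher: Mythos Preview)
Your plan coincides with the paper's own: the appendix first solves the $q=0$ recursions to obtain the closed forms $G_n(1)=B_1a^{n+3}+B_2b^{n+3}+K_1c^{n+3}+K_2d^{n+3}$ and \eqref{Fq=0}, and then states the Proposition as their concise repackaging, i.e.\ by summing four geometric series and combining over the common denominator. Your Lagrange-interpolation argument for the $\varphi_1$ identity is simply a clean way to carry out that recombination, and is correct.

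For the $\varphi_0$ identity, your generating-function derivation is also correct, but be warned that it yields
\[
g(z)\,(1-az)(1-bz)(1-cz)(1-dz)\;=\;1+abcd\,z^{2}+\frac{abcd\bigl(a+b+c+d-\tfrac1a-\tfrac1b-\tfrac1c-\tfrac1d\bigr)}{1-abcd}\,z,
\]
which does \emph{not} match the displayed right-hand side as printed: the factors $1-abcd$ and $(1-az)(1-bz)(1-cz)(1-dz)$ appear to have been interchanged there. A quick sanity check is the coefficient of $z$: from the recursion one gets $G_1(1)=(s_1-s_3)/(1-s_4)$, whereas the printed formula gives $s_4s_1-s_3$. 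So your method is sound and identical in spirit to the paper's, but the final ``reduces to the asserted right-hand side'' step will not go through literally; you should expect to correct the target formula rather than force the algebra to match it.
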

The first expression matches the formula from \cite[Theorem 4.1]{Szpojankowki2010}   who computed the integral of $1/(1+z^2-xz)$ with respect to  the Askey-Wilson measure with $q=0$  under the assumptions which in our setting  boil down to
$ac \leq 1$ and $abcd<1$.
\arxiv{
Indeed, \[\varphi_0[(1+z^2 -(\ee+\dd) z)^{-1}]=\sum_{n=0}^\infty z^n G_n(1)=
\frac{1+z^2 abcd}{(1-az)(1-bz)(1-cz)(1-dz)} +
  \frac{ z abcd(a+b+c+d-(1/a+1/b+1/c+1/d))  }{(1-a b c d)(1-az)(1-bz)(1-cz)(1-dz)}\]
and from \eqref{Fq=0} we get 
\[
\varphi_1[(1+z^2 -(\ee+\dd) z)^{-1}]=\sum_{n=0}^\infty z^n F_n(1)=\frac{z}{(1-a z) (1-b z) (1-c z) (1-d z)}
\]
}

 \section{A matrix model}\label{Sec:MatrixModel}
According to   \citet{Mallick-Sandow-1997}
stationary probabilities for ASEP with large $L $ can be computed from a finite matrix model when the parameters
 satisfy condition $q^{m}ac=1$ for some $m\geq 0$. Here we present a version of this model, together with a caution about a subtle issue that may affect some infinite matrix models.

Recalling that in \eqref{AW-parameters} we chose $a>0$, for $q> 0$ we  consider two infinite matrices
\begin{equation}
\mE=\theta^2 \begin{bmatrix}
   1+\frac{1}{a} & 0  & 0&   & \dots &0& \dots\\
   1 & 1+ \frac{1}{ a q}  & 0& & \dots &  \\
 0& 1 &\ddots & &  &   \\
    \vdots   &   &\ddots &&\ddots&     \vdots \\
   \\  0  & 0&\dots&0&  1& 1+\frac{1}{a q^{n-1}} \\ \\
   \vdots &&&& & \ddots&\ddots
\end{bmatrix} \; \mD=\theta^2 \begin{bmatrix}
   1+a & 0 & 0 & \dots &  \\
   0 & 1+a q &0& \dots &  \\
   \vdots&\vdots &\ddots & &
   \\  0& 0 &\dots &  1+a q^{n-1} \\ \\
   \vdots &\vdots&&& \ddots
\end{bmatrix}
\end{equation}
It is straightforward to verify that identity \eqref{q-comm-Derrida} is satisfied. Conditions \eqref{W} and \eqref{V} become recursions for the components of the vectors
$$\langle W|=[w_1,w_2,\dots] \mbox{ and } |V\rangle =[v_1,v_2,\dots]^T.$$
In  parametrization \eqref{AW-parameters}, conditions \eqref{W} and \eqref{V} become \eqref{WeV} and \eqref{WdV}, and the
resulting recursions are
$$
\frac{1}{a q^{k-1}} w_k+w_{k+1}=(c+d)w_k-a cd q^{k-1}w_k,
$$
$$
ab \left(v_{k-1}+\frac{1}{a q^{k-1}}v_k\right)=(a+b)v_k-a q^{k-1} v_k.
$$

 \arxiv{Conditions \eqref{W} and \eqref{V} are $(1-q)\langle W|(\mE+c d
\mD)=(1+c)(1+d)\langle W|$ and $(1-q)(ab\mE+\mD)|V\rangle=(1+a)(1+b)|V\rangle$.

To derive \eqref{WeV} and \eqref{WdV}, we insert \eqref{DE2de} into the above equations, and simplify the expressions.

 To derive  the recursions as written above, we  compute
 $$
\dd=\theta\begin{bmatrix}
  a & 0  & 0&   & \dots & &  \\
   0 & a q  & 0& & \dots &  \\
 0& 0 &a q^2 & &  &   \\
    \vdots   &   &\ddots &&\ddots&       \\
   \\  0  & 0&\dots&0&  & a q^{k-1} \\ \\
   \vdots &&&&  \ddots&\ddots
\end{bmatrix}, \quad
\ee=\theta\begin{bmatrix}
   \frac{1}{a} & 0  & 0&   & \dots & &  \\
  1 & \frac{1}{a q}  & 0& & \dots &  \\
 0& 1 &\frac{1}{a q^2} & &  &   \\
    \vdots   &   &\ddots &&\ddots&      \\
   \\  0  & 0&\dots&0&  1& \frac{1}{a q^{k-1}} \\ \\
   \vdots &&&& \ddots&\ddots
\end{bmatrix}.
$$

 } With $w_1=v_1=1$, the solutions are explicit
\begin{equation}\label{w}
w_{n}= \prod_{k=1}^{n-1}\left(c+d -acd q^{k-1}-\frac{1}{a q^{k-1}}\right)= \frac{(a c,ad;q)_{n-1}}{(-a)^{n-1}q^{(n-1)(n-2)/2}},
\end{equation}
\begin{equation}
  \label{v}
  v_n= \frac{  a^{n-1}b^{n-1}}{\prod_{k=1}^{n-1}\left(a (1-q^k)+b(1-1/q^k)\right)}
= \frac{(-a)^{n-1}q^{n(n-1)/2}}{(q, qa/b;q)_{n-1}}.
\end{equation}
\arxiv{We remark that since $a>0$ and $b\leq 0$ the second expression for $v_n$ is well defined only if $b<0$, i.e,. when $\delta>0$, see \eqref{AW-parameters}.
 When $b=0$,  from the first expression we get  $V=[1,0,0,\dots]^T$, and the formulas we discuss below are not valid.}

 We therefore get  explicit formula
\begin{equation*}
  \label{WV}
\langle W |\mI|V\rangle =  \sum_{k=1}^{\infty} v_kw_k %
= \sum_{k=1}^{\infty} q^{k-1}\frac{(ac,ad;q)_{k-1}} {(q,a q/b;q)_{k-1}} ={_2\phi_1}\left(\begin{matrix}
 ac,\ a d \\\ q  a /b
\end{matrix}\middle|q;q\right),
\end{equation*}
  valid for  $0<q<1$.
Somewhat more generally, since $\dd$  in \eqref{DE2de} becomes a diagonal matrix with the sequence $\{\theta a q^{k-1}\}$
on the diagonal, we get
\begin{equation}
  \label{Wd^LV}
  \langle W |\dd^L|V\rangle =a^L\theta^L {_2\phi_1}\left(\begin{matrix}
 ac,\ a d \\\ q  a /b
\end{matrix}\middle|q;q^{L+1}\right).
\end{equation}
(We will use this formula for $L=0,1$ in Section \ref{Sec:Warn}.)

We now consider the case  when parameters $a,c$   are such that
 $ac q^m=1$ for some integer $m\geq0$.
In this case the infinite series  terminate as  formula \eqref{w} gives $w_n=0$ for all $n\geq m+2$.
Since each monomial $\mX$ is a lower-triangular matrix, in this case     components $v_k$ with $k\geq m+2$  do not enter the calculation of $\langle W|\mX|V\rangle$, so
  we can truncate $\ee, \dd, \mI$ to their $m+1$ by $m+1$ upper left corners,  recovering a version of
the finite matrix model from \citet{Mallick-Sandow-1997}.

Using \eqref{twophione}
one can show that
$$
{_2\phi_1}\left(\begin{matrix}
  q^{-m},\ a d \\\  q a /b
\end{matrix}\middle|q;q\right)=
\frac{(bdq^{-m};q)_m}{(bc;q)_m}.
$$
\arxiv{Applying transformation  \eqref{twophione}   we rewrite ${_2\phi_1}\left(\begin{matrix}
  q^{-m},\ a d \\\  q a /b
\end{matrix}\middle|q;q\right)$ as
\begin{multline*}
  (ad)^m \frac{(q/(bd);q)_m}{(a q/b;q)_m}
=q^{-m^2} \frac{(q-bd)(q^2-bd)\dots(q^{m}-bd)}{(1-bc)(q^{-1}-bc)\dots (q^{1-m}-bc)}
\\=q^{-m^2} \frac{q^{m(m+1)/2}(1-bd/q)(1-bd/q^2)\dots(1-bd/q^m)}{q^{-m(m-1)/2}(1-bc)(1-q bc)\dots (1-q^{m-1}bc)}
=\frac{(bdq^{-m};q)_m}{(bc;q)_m}.
\end{multline*}

} Thus, in agreement with findings in \citet{Mallick-Sandow-1997},   $$\langle W |\mI|V\rangle=\frac{(bdq^{-m};q)_m}{(bc;q)_m}$$ vanishes
if and only if $bd\in\{q,q^2,\dots,q^{m}\}$, i.e., in the singular case when $q^Nabcd=1$ for some $N=0,\dots,m-1$.
One would
expect that in this case the matrix model should be related to functional $\varphi_1$ by a simple renormalization but we have
not verified the details.

In the non-singular case (but still with $q^m ac=1$)  the relation   is  straightforward.  Due to shared recursion and
initialization at $\mI$, it is clear that  functional $\varphi_0$ is indeed related to the matrix model by
\begin{equation*}
  \label{phi0Mallick}
 \langle W | \mX|V\rangle=  \frac{(bdq^{-m};q)_m}{(bc;q)_m} \varphi_0[\mX].
\end{equation*}

\begin{remark}
From the reviewer report we learned that Refs. \cite{Krebs_2003} and \cite{Jafarpour_2007} relate the  finite-dimensional representations from \citet{Mallick-Sandow-1997} %
  to  convex combinations of Bernoulli shock measures with $m$ shocks. It would be interesting to see how this is reflected in the
structure of   functional $\varphi$.
\end{remark}

A natural question then arises how the functionals $\varphi_0$, or $ \varphi_1$, are related to this matrix model for more
general parameters $a,b,c,d$. The surprising answer  is that there is no such relation, as we explain next.

\subsection{A caution about matrix models}\label{Sec:Warn}
It is known, \citep{bossaller2019associativity,keremedis1988associativity}, but perhaps this is not appreciated enough, that
 multiplication of infinite matrices may fail to be associative for other reasons than  divergence. And precisely
this difficulty afflicts the above matrix model when $acq^n\ne1$ for all $n$.
To see the source of the difficulty, we rewrite
 \eqref{W} and \eqref{V}
  as
\begin{equation}
  \label{WeV} \langle W| \ee=\theta (c+d)\langle W|- c d \langle W|\dd,
\end{equation}
\begin{equation}
  \label{WdV}  ab \ee|V\rangle=\theta(a+b)|V\rangle - \dd|V\rangle.
\end{equation}
To indicate  clearly the order of matrix multiplications, lets denote vector $\langle W| \ee$ by $\langle \tilde W|$  and
vector $\ee|V\rangle$ by $|\tilde V\rangle$. Using    \eqref{WeV} and \eqref{WdV},   we could  compute the product
$\langle W|\ee|V\rangle$  of three matrices either as $\langle \tilde W|V\rangle$, or as $\langle W |\tilde V\rangle$. From the
first calculation we get
$$\langle \tilde W|V\rangle   %
=\theta (c+d) {_2\phi_1}\left(\begin{matrix}
  q^{-m},\ a d \\\  q a /b
\end{matrix}\middle|q;q\right) - a c d \theta {_2\phi_1}\left(\begin{matrix}
  q^{-m},\ a d \\\  q a /b
\end{matrix}\middle|q;q^2\right)$$
where %
we used \eqref{Wd^LV} with $L=0$ and $L=1$ on the right hand side. The second calculation gives a different answer
$$ab \langle W |\tilde V\rangle %
=\theta (a+b) {_2\phi_1}\left(\begin{matrix}
  ac,\ a d \\\  q a /b
\end{matrix}\middle|q;q\right) - a   \theta {_2\phi_1}\left(\begin{matrix}
  ac,\ a d \\\  q a /b
\end{matrix}\middle|q;q^2\right).$$
In fact, we have
$$\langle\tilde W|V\rangle=\theta \sum_{k=1}^\infty \left(\frac{1}{a q^{k-1}} w_k+w_{k+1}\right)v_k$$
$$\langle W|\tilde V\rangle=\theta \sum_{k=1}^\infty w_k\left(v_{k-1}+\frac{1}{a q^{k-1}} v_k\right) \mbox{ with $v_{-1}=0$}.$$
So from \eqref{w} and \eqref{v} we get
$$\langle \tilde W | V\rangle-\langle W|\tilde V\rangle= \lim_{n\to\infty}\sum_{k=1}^n(w_{k+1}v_k-w_kv_{k-1})=
\lim_{n\to\infty} w_{n+1}v_n =-\frac{\theta}{a}\frac{ (ac,ad;q)_\infty}{ (q,qa/b;q)_\infty}.
$$
This shows that in general multiplication of matrices $\langle W|$, $\ee$ and $|V\rangle$ is not associative. Since $d\leq 0$,
the two answers match only when $q^mac=1$ for some $m$, i.e., in the terminating case. This is precisely the case considered by
\cite{Mallick-Sandow-1997}, and of course multiplication of finite dimensional matrices is associative.

 \arxiv{This  established  the following hypergeometric function identity
 \begin{equation*}
  \label{FasleId}
  a(1-abcd){_2\phi_1}\left(\begin{matrix}
 ac,\ a d \\\  q a /b
\end{matrix}\middle|q;q^2\right) =(a+b-ab(c+d)){_2\phi_1}\left(\begin{matrix}
  ac,\ a d \\\  q a /b
\end{matrix}\middle|q;q\right)+b \frac{ (ac,ad;q)_\infty}{ (q,qa/b;q)_\infty}.
\end{equation*}

}

%
%
%
%

\end{document}